\documentclass[a4paper]{amsart}
\usepackage[pdftex]{graphicx}
\usepackage[utf8]{inputenc}
\usepackage{amsmath,amsthm,amssymb, color, float, booktabs, tikz,natbib}
\usepackage{xurl}
\usepackage[hyperindex,breaklinks,hidelinks]{hyperref}
\usepackage{cleveref}

\crefname{appendix}{appendix}{appendices}
\Crefname{appendix}{Appendix}{Appendices}
\Crefname{thm}{Theorem}{Theorems}

\usetikzlibrary{arrows}
\usetikzlibrary{arrows.meta, positioning}

\newtheorem{lm}{Lemma}
\newtheorem{thm}{Theorem}
\newtheorem{prop}{Proposition}

\theoremstyle{definition}
\newtheorem{definition}{Definition}
\newtheorem{example}{Example}

\newcommand{\BM}{{\rm{BM}}}
\newcommand{\TTC}{{\rm{TTC}}}
\newcommand{\DAS}{{\rm{DA_S}}}

\newcommand{\DA}{{\rm{DA}}}
\newcommand{\EA}{{\rm{EADAM}}}

\newcommand\blfootnote[1]{%
  \begingroup
  \renewcommand\thefootnote{}\footnote{#1}%
  \addtocounter{footnote}{-1}%
  \endgroup
}

\makeatletter
\renewcommand{\subsection}{\@startsection{subsection}{2}{\z@}%
  {-3.25ex\@plus -1ex \@minus -.2ex}%
  {1.5ex \@plus .2ex}%
  {\normalfont\centering}}
\makeatother

\begin{document}
\thispagestyle{empty}

\title{Credibility in school choice}

\author[Camilo J. Sirguiado]{Camilo J. Sirguiado$^{\dagger}$}

\author[Jiarui Xie]{Jiarui Xie$^{\ddagger}$}

\begin{abstract}
In centralized school choice, a designer who announces a mechanism may deviate from it to favor some students without being detected. A mechanism is credible if it admits no such deviation. We study this credibility problem when students do not know others' reports and may observe only part of the assignment. 
Credibility is demanding: among common school choice mechanisms, only deferred acceptance is credible, and only when students observe enough of the assignment.
We therefore rank mechanisms by their credibility. 
Deferred acceptance is more credible than any other stable mechanism and strictly more credible than the Boston mechanism, regardless of how much of the assignment students observe.
Its ranking relative to top trading cycles and efficiency-adjusted deferred acceptance depends on disclosure: deferred acceptance is strictly more credible when students observe the entire assignment, whereas the ranking reverses when students observe only their own assignment.
\end{abstract}

\maketitle

\blfootnote{We thank Mohammad Akbarpour, Itai Ashlagi, Yuichiro Kamada, Shengwu Li, and Andrzej Skrzypacz for helpful comments and suggestions.}
\blfootnote{$^{\dagger}$\,Department of Economics, University of California, Berkeley. Contact: \href{mailto:camijsir@berkeley.edu}{camijsir@berkeley.edu}}
\blfootnote{$^{\ddagger}$\,Graduate School of Business, Stanford University. Contact: \href{mailto:jrxie@stanford.edu}{jrxie@stanford.edu}}

\section{Introduction}

Over the past two decades, many countries have adopted centralized school choice systems. \citet{neilson2024rise} documents that 60\% of countries with more than two million residents use at least one centralized system to assign students to schools. These systems can reduce congestion and unraveling, and thus improve the efficiency and fairness of assignments \citep{abdulkadirouglu2017welfare}. These benefits, however, depend on families trusting the market designer to follow the announced assignment rule. If the designer can change assignments without being detected, the centralized system has a credibility problem.

Credibility concerns arise in school admissions around the world. One source of these concerns is deliberate favoritism by officials. In Huaihua, China, a government official arranged for the admission of more than 200 ineligible students to public primary schools between 2017 and 2020.\footnote{Xinyu Yang, \textit{A Hard Drive Exposes a School-Admissions Corruption Ring}, China Youth Daily, August 25, 2021, \url{https://news.cyol.com/gb/articles/2021-08/25/content_Q7xzqtjPW.html}, accessed August 25, 2026.} In 2015, a university official was charged with accepting bribes between 2005 and 2013 to help more than 40 students enter a prestigious Chinese university or receive other preferential treatment.\footnote{Ran Shen, \textit{Renmin University's Rongsheng Cai Stands Trial, Admits Accepting More Than 20 Million Yuan in Bribes}, China News Service, December 3, 2015, \url{https://www.chinanews.com/sh/2015/12-03/7654785.shtml}, accessed August 25, 2026.} Similar favoritism occurred in the United States. In Washington, D.C., school officials approved seven requests from politically connected families to bypass the public-school lottery in 2015.\footnote{Peter Jamison and Aaron C. Davis, \textit{Secret Report Shows ``Special'' Treatment for Public Officials in D.C. School Lottery}, The Washington Post, May 17, 2017, \url{https://www.washingtonpost.com/local/dc-politics/secret-report-shows-special-treatment-for-public-officials-in-dc-school-lottery/2017/05/17/55b0b1fc-3a82-11e7-8854-21f359183e8c_story.html}, accessed August 25, 2026.} 
After this misconduct became public, the new school chancellor bypassed the same process in 2017 to place his daughter.\footnote{Perry Stein and Peter Jamison, \textit{Deputy Mayor Resigns after Allowing Chancellor to Bypass School Lottery}, The Washington Post, February 16, 2018, \url{https://www.washingtonpost.com/news/education/wp/2018/02/16/auto-draft/}, accessed August 25, 2026.}

Implementation errors are another source of credibility concerns. In Boston, a grading-system error affected more than 100 students applying to selective public schools in 2019 and 2020. Some were wrongly denied admission, while others were admitted even though they were ineligible.\footnote{Naomi Martin, \textit{Dozens of Students Were Not Admitted to Boston Exam Schools Because of an Error}, The Boston Globe, August 31, 2020, \url{https://www.bostonglobe.com/2020/08/31/metro/boston-public-schools-announces-error-exam-school-admissions-that-kept-dozens-out-recent-years/}, accessed August 25, 2026.} 
Even after the earlier error was discovered, Boston officials made another mistake in 2023: dozens more students were incorrectly notified about their eligibility to apply to selective public schools.\footnote{Max Larkin, \textit{Boston Public Schools Admits It Sent Incorrect Exam School Eligibility Letters}, WBUR, April 12, 2023, \url{https://www.wbur.org/news/2023/04/12/boston-public-schools-exam-school-letter-errors}, accessed August 25, 2026.} In Gothenburg, Sweden, a new computerized assignment system calculated school proximity incorrectly in 2020, resulting in nearly 450 incorrect assignments.\footnote{Andreas Varas Hernandez, \textit{Gothenburg's School Choice Internally Audited---Report Finds Several Errors}, SVT Nyheter, June 18, 2020, \url{https://www.svt.se/nyheter/lokalt/vast/skolvalet-i-goteborg-har-intergranskats}; Lars Wiklund, \textit{School Board Criticized after Gothenburg School-Choice Chaos}, SVT Nyheter, March 17, 2021, \url{https://www.svt.se/nyheter/lokalt/vast/grundskolenamnd-kritiseras-efter-skolvalskaoset-i-goteborg}, accessed August 25, 2026.}

In each of these examples, families may suspect that some students benefited from undetected deviations from the announced rule. How serious this credibility problem is depends on the assignment mechanism and the information disclosed after the assignment. In this paper, we ask which of the mechanisms commonly used in school choice best addresses the credibility problem, and how the answer depends on what families observe about the final assignment. Indeed, changes in the amount of disclosure can reverse the ranking of mechanisms.

We study this question in the school choice model of \citet{abdulkadirouglu2003school}. Following \citet{moller2026transparent}, the designer announces a mechanism but cannot fully commit to using it. Students know the mechanism, priorities, and quotas, but not the preferences reported by other students.\footnote{In practice, students may also observe some preferences reported by other students. We do not consider this possibility in this paper because privacy concerns make such disclosure unlikely.} After the assignment, each student observes her own assignment and the assignments of a subset of other students; we call this information her \emph{observation}.\footnote{Unlike \citet{moller2026transparent}, who assumes that each student observes the entire assignment, we allow each student to observe only part of it.} A student can \emph{explain} her observation if some reports by the other students would lead the announced mechanism to produce the assignments she observes. 
A matching is a \emph{safe deviation} if it differs from the outcome that the announced mechanism would produce under the students' reports, all students can explain their observations, and that outcome does not Pareto dominate it.\footnote{Unlike \citet{moller2026transparent}, we require that a deviation benefit at least one student. We add this requirement to model a designer who deviates to favor particular students.} A mechanism is \emph{credible} if it admits no safe deviation.

Credibility is a demanding requirement. The Boston mechanism and the top trading cycles mechanism are not credible even when every student observes the entire assignment, and any stable mechanism can admit a safe deviation when every student except one observes the entire assignment.\footnote{These claims follow from the examples in the proofs of \Cref{thm:DA-stable,thm:DA-BM,thm:DA-TTC}.}
As a binary property, credibility cannot rank mechanisms that both admit safe deviations. We therefore compare mechanisms environment by environment. A mechanism $\varphi$ is \emph{more credible} than a mechanism $\psi$ if, in every environment, $\psi$ admits a safe deviation whenever $\varphi$ does. 
The mechanism $\varphi$ is \emph{strictly more credible} than $\psi$ if, in addition, $\psi$ is not more credible than $\varphi$.

We use this comparison to rank four prominent school choice mechanisms. We begin with the two mechanisms most widely used in practice: the student-proposing deferred acceptance mechanism ($\DA$; \citealp{gale1962college}) and the Boston mechanism ($\BM$; \citealp{abdulkadirouglu2003school}).\footnote{$\DA$ is the most widely used mechanism in higher education, while $\BM$ is the most common in primary and secondary education \citep{neilson2024rise}.} $\DA$ selects the student-optimal stable matching and is strategy-proof, but it is not Pareto efficient. $\BM$ is Pareto efficient but neither stable nor strategy-proof.
We also study two efficient mechanisms: the top trading cycles mechanism ($\TTC$; \citealp{abdulkadirouglu2003school}), which is strategy-proof but not stable, and the efficiency-adjusted deferred acceptance mechanism ($\EA$; \citealp{kesten2010school}), which weakly Pareto dominates $\DA$ and is close to being stable \citep{dougan2021minimally} but not strategy-proof.\footnote{Throughout the paper, $\EA$ denotes the efficiency-adjusted deferred acceptance mechanism when all students waive their priorities.}

We have three main results. The first compares $\DA$ with other stable mechanisms. We show that $\DA$ is more credible than any other stable mechanism, regardless of what students observe. This result identifies a new advantage of $\DA$ within the class of stable mechanisms. Moreover, when students observe the entire assignment, $\DA$ is the only credible stable mechanism. Thus, if the designer uses $\DA$, disclosing the entire assignment is enough to ensure that any deviation is detected. As the remaining results show, the other mechanisms we study do not share this property.

Our second result shows that $\DA$ is strictly more credible than $\BM$, regardless of what students observe. The intuition comes from stability: $\DA$ is stable, whereas $\BM$ is not. Under $\DA$, if a student observes a lower-priority student assigned to a school that she prefers to her own assignment, she knows that the announced mechanism could not have produced the matching. $\BM$ does not impose the same restriction, so students can explain more matchings, giving the designer more opportunities for safe deviations.

Our third result compares $\DA$ with $\TTC$ and $\EA$. When students observe the entire assignment, $\DA$ is strictly more credible than both mechanisms. However, this ranking reverses when students observe only their own assignment: $\TTC$ and $\EA$ are strictly more credible than $\DA$. To the best of our knowledge, centralized school choice systems in practice are closer to own-assignment observation than to full observation. Thus, our result suggests that $\TTC$ or $\EA$ may be preferable to $\DA$ in practice when credibility is a concern.

Our main results use the standard school choice model, but several features of centralized school choice systems may affect credibility and the resulting mechanism ranking. Therefore, we study three such extensions.

First, centralized systems often limit the number of schools that a student can report as acceptable. In fact, more than 80\% of centralized systems in primary and secondary education impose such a restriction \citep{neilson2024rise}. Although this restriction changes assignments and may destroy stability and strategy-proofness, all our main credibility rankings continue to hold when students can report at least two schools.

Second, admissions systems frequently rely on weak priorities and lotteries. If students do not observe the realized tie-breaking rank, they cannot verify whether the designer followed the announced lottery rule, as illustrated by the D.C. lottery cases. We assume that a student observes the weak priority profile and her own realized rank, but not the ranks of other students. Our credibility rankings continue to hold with one exception: when students observe the entire assignment, $\DA$ is no longer more credible than $\TTC$ or $\EA$. Moreover, when all schools are indifferent among students before tie-breaking, $\TTC$ and $\EA$ are strictly more credible than $\DA$, regardless of how much of the assignment each student observes.

Third, different types of students are often admitted under different rules. We compare two such affirmative-action policies: minority reserves and majority quotas. When students observe the entire assignment, deferred acceptance is credible under either policy, so credibility does not rank them. A ranking arises when each student observes her own assignment and the assignments of any subset of minority students. In this case, deferred acceptance with minority reserves is strictly more credible than deferred acceptance with majority quotas when, at every school, the minority reserve and the majority quota sum to the school's quota. Since minority reserves also weakly Pareto dominate the corresponding majority quotas \citep{hafalir2013effective}, minority reserves dominate majority quotas in both welfare and credibility.

The rest of the paper is organized as follows. \Cref{sec:literature} discusses the related literature. \Cref{sec:model} introduces the model and our notions of safe deviation and credibility. \Cref{sec:main} presents the main results on the credibility rankings of school choice mechanisms. \Cref{sec:practice} considers three extensions: constrained choice, weak priorities, and affirmative action. \Cref{sec:conclusion} concludes. All proofs are relegated to \Cref{sec:proof}, while \Cref{sec:example} provides additional examples.

\subsection{Related literature}\label{sec:literature}

Our paper builds on \citet{akbarpour2020credible}, who introduce credibility in auctions. They define a mechanism as credible if the auctioneer has no safe deviation that increases her revenue. Since revenue is not the designer's objective in school choice, we instead focus on student welfare and require that a safe deviation not be Pareto dominated by the original outcome.

Within the matching literature, this paper is most closely related to \citet{moller2026transparent}, which introduces safe deviations in matching and defines a mechanism as transparent if it has no safe deviations. Our model differs in two ways. First, whereas \citet{moller2026transparent} requires only that a safe deviation have an innocent explanation, we also require that the original outcome of the mechanism not Pareto dominate it. 
Thus, under full observation, $\DA$ is credible in our model but not transparent in \citet{moller2026transparent}.
Second, whereas \citet{moller2026transparent} assumes that each student observes the entire assignment, we allow each student to observe only part of it.

Both \citet{akbarpour2020credible} and \citet{moller2026transparent} study whether a given mechanism has any safe deviation in a fixed environment. Our approach is different: we compare mechanisms environment by environment, which allows us to rank them even when neither is credible. A related comparison is provided by \citet{grigoryan2023theory}, who measure auditability by the minimum number of students whose combined information is sufficient to detect a deviation. Rather than varying the number of students required for auditing, we fix each student's information and ask which of two mechanisms is more credible under the same environment.

A growing literature studies the role of information in matching markets. One strand studies how assignment mechanisms and admission transparency affect information acquisition and student welfare \citep{artemov2021assignment,koh2026transparency}. A second strand studies how information disclosure allows students to verify matching outcomes.
Within this strand, \citet{hakimov2026improving} design disclosure procedures for students to verify school assignments. \citet{gonczarowski2024structural} study the communication complexity of representing and verifying the outcomes of $\DA$ and $\TTC$. \citet{pycia2023ordinal} study auditability and ordinal verification in discrete mechanism design. 
Related to our affirmative-action application, \citet{hu2026verifiable} characterize mechanisms that students can verify when cutoffs are disclosed. Our paper differs from both strands. Rather than studying how disclosure affects student behavior or how to design a disclosure rule, we fix what each student observes and compare the credibility of mechanisms.

Our method of comparing mechanisms environment by environment follows \citet{pathak2013school}, who introduce this method to rank mechanisms by manipulability. \citet{chen2017chinese} use the same method to compare school choice mechanisms by manipulability and stability.
Our results add credibility as another dimension of comparison.

Our work is also related to the literature on constrained school choice \citep{haeringer2009constrained,calsamiglia2010constrained}, weak priorities and tie-breaking \citep{erdil2008s,abdulkadirouglu2009strategy}, and affirmative action and controlled choice \citep{kojima2012school,hafalir2013effective,echenique2015control}. In each setting, we introduce credibility as an additional criterion for comparing mechanisms.

\section{Model}\label{sec:model}

Let $I$ be a set of students and $S$ a set of schools such that $|I|\geq 3$ and $|S|\geq 2$. Each school $s\in S$ has a strict priority order $\succ_s$ defined over $I$ and a quota $q_s\geq 1$. Each student $i\in I$ has a complete, transitive, and strict preference relation $P_i$ defined over $S\cup \{\emptyset\}$, where $\emptyset$ represents an outside option. Let $R_i$ be the weak preference relation induced by $P_i$. A school $s$ is \textbf{acceptable} to student $i$ if $s \mathrel{P_i} \emptyset$. Given $\succ=(\succ_s)_{s\in S}$, $q=(q_s)_{s\in S}$, and $P=(P_i)_{i\in I}$, we refer to $[I, S, \succ, q, P]$ as a \textbf{school choice problem}, or simply a problem.

A \textbf{matching} is a mapping $\mu: I\cup S \to I\cup S\cup \{\emptyset\}$ such that for any $(i,s)\in I\times S$ we have $\mu(i)\in S\cup \{\emptyset\}$, $\mu(s)\subseteq I$, $|\mu(s)|\leq q_s$, and $\mu(i)=s$ if and only if $i\in \mu(s)$.
Let $\mathcal{M}$ be the set of possible matchings between $I$ and $S$. A matching $\mu \in \mathcal{M}$ is \textbf{individually rational} if no student $i$ prefers $\emptyset$ to $\mu(i)$. Moreover, $\mu$ is \textbf{stable} if it is individually rational and there is no pair $(i,s)\in I\times S$ such that $s \mathrel{P_i} \mu(i)$ and either $|\mu(s)|<q_s$ or $i \succ_s j$ for some $j\in \mu(s)$. A matching $\mu\in \mathcal{M}$ is \textbf{Pareto dominated} by $\eta\in \mathcal{M}$ if $\eta(i) \mathrel{R_i} \mu(i)$ for all $i\in I$ and $\eta(i) \mathrel{P_i} \mu(i)$ for some $i\in I$. Finally, $\mu$ is \textbf{Pareto efficient} if it is not Pareto dominated by any $\eta\in \mathcal{M}$.

A \textbf{mechanism} $\varphi$ is a function that maps each problem to a matching. When $I$, $S$, $\succ$, and $q$ are fixed, we use only the students' preference profile as the argument of $\varphi$. Thus, we write $\varphi:\mathcal{P}\to\mathcal{M}$, where $\mathcal{P}$ is the set of possible preference profiles, and use $\varphi[P]$ as shorthand for $\varphi[I,S,\succ,q,P]$. 
Given a preference profile $P$ and a student $i$, let $P_{-i}=(P_j)_{j\in I\setminus \{i\}}$. Denote by $\varphi[P_i, P_{-i}](i)$ the assignment of student $i$ when she reports $P_i$ and the other students report $P_{-i}$. Similarly, $\varphi[P](s)$ denotes the set of students assigned to school $s$ under the preference profile $P$. A mechanism $\varphi$ is \textbf{stable (efficient)} if $\varphi[P]$ is stable (efficient) under every $P\in \mathcal{P}$. 

A report $P'_i$ of student $i\in I$ is a (profitable) manipulation of the mechanism $\varphi$ under the preference profile $P$ if $\varphi[P'_i, P_{-i}](i) \mathrel{P_i} \varphi[P](i)$. 
A mechanism $\varphi$ is \textbf{strategy-proof} if no student has a manipulation under any $P\in \mathcal{P}$. 

Students have only partial information about the school choice problem $[I, S, \succ, q, P]$ and the matching implemented by the designer. In particular, each student knows the sets of students $I$ and schools $S$, the priority profile $\succ$, the quota profile $q$, and her own preference $P_i$, but not the preferences reported by other students. In addition, she observes only part of the assignment. More precisely, consider an \textbf{observation structure} $\Gamma: I\to 2^I$ such that $i\in\Gamma(i)$ for each student $i$. Given the observation structure $\Gamma$ and a matching $\mu \in \mathcal{M}$, let $o_i(\mu, \Gamma) = \{(j, s)\in \Gamma(i)\times (S\cup\{\emptyset\}): \mu(j)=s\}$ be student $i$'s \textbf{observation} under matching $\mu$. In words, each student $i\in I$ observes only the assignments of the students in $\Gamma(i)$. Thus, if $\Gamma(i)=\{i\}$ for every student $i$, we say that \textbf{students observe their own assignment}; if $\Gamma(i)=I$ for every student $i$, we say that \textbf{students observe the entire assignment}.

\begin{definition}\label{def:safe}
Given an observation structure $\Gamma$, a matching $\mu$ is a \textbf{safe deviation} of a mechanism $\varphi$ under $P\in \mathcal{P}$ if the following conditions hold:
\begin{itemize}
  \item[(i)] $\varphi[P]\neq \mu$.
  \item[(ii)] For any $i\in I$, there exists $P' \in \mathcal{P}$ such that $o_i(\varphi[P_i, P'_{-i}], \Gamma)=o_i(\mu, \Gamma)$.
  \item[(iii)] $\mu$ is not Pareto dominated by $\varphi[P]$.
\end{itemize}
\end{definition}
In words, condition~(i) requires the designer to change the outcome. Condition~(ii) requires each student to be able to \emph{explain} her observation under $\mu$ as the announced mechanism's outcome for some reports by the other students. We refer to the corresponding profile $P_{-i}'$ as an \textbf{innocent explanation} for student $i$'s observation under $\mu$. Condition~(iii) requires the deviation to benefit at least one student.

Given an observation structure $\Gamma$, a mechanism $\varphi$ is \textbf{credible} if it has no safe deviation under any $P\in \mathcal{P}$. 

\begin{definition}
Given an observation structure $\Gamma$, a mechanism $\varphi$ is \textbf{more credible} than mechanism $\psi$ if, for any $P \in \mathcal{P}$, if $\varphi$ has a safe deviation under $P$, then $\psi$ also has a safe deviation under $P$.
\end{definition}
A mechanism $\varphi$ is \textbf{strictly more credible} than mechanism $\psi$ if $\varphi$ is more credible than $\psi$ and there exists a school choice problem $[I, S, \succ, q, P]$ under which $\psi$ has a safe deviation but $\varphi$ does not.

\section{Main results}\label{sec:main}

This section compares the credibility of $\DA$ \citep{gale1962college}, $\BM$ \citep{abdulkadirouglu2003school}, $\TTC$ \citep{abdulkadirouglu2003school}, $\EA$ \citep{kesten2010school}, and the class of stable mechanisms. We begin with stability, then turn to Pareto-efficient mechanisms and show how the comparison depends on what students observe.

Stability is a central objective in school choice since it respects publicly known priorities. Among stable mechanisms, $\DA$ selects the student-optimal stable matching \citep{gale1962college}. Our first result shows that this optimality makes $\DA$ more credible than any other stable mechanism, regardless of what students observe.

\begin{thm}\label{thm:DA-stable}
Fix any observation structure $\Gamma$.
\begin{itemize}
  \item[(i)] $\DA$ is more credible than any other stable mechanism.
  \item[(ii)] If $\Gamma(i)=I$ for any $i\in I$, $\DA$ is the only stable and credible mechanism.
\end{itemize}
\end{thm}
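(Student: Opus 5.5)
The plan has two steps. First, a transfer lemma: every observation that $\DA$ can explain, any stable mechanism $\varphi$ can also explain. Second, part (i) follows by a short case split, and part (ii) combines the lemma with a direct argument that $\DA$ is credible under full observation.

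\emph{Transfer lemma.} Fix a stable mechanism $\varphi$, a student $i$, her preference $P_i$, and any $P'_{-i}$. Let $\nu=\DA[P_i,P'_{-i}]$. For each $j\neq i$, I define $P''_j$ to rank $\nu(j)$ as the only acceptable school, or to rank nothing acceptable if $\nu(j)=\emptyset$. The claim is that $\nu$ is the unique stable matching under $(P_i,P''_{-i})$, so $\varphi[P_i,P''_{-i}]=\nu$.

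\begin{itemize}
\item[(a)] $\nu$ is stable under $(P_i,P''_{-i})$. Truncating the other students' lists removes all their blocking desires. Student $i$'s preferences are unchanged, and $\nu$ was already stable under $(P_i,P'_{-i})$.
\item[(b)] Every other stable matching $\eta$ gives each $j\neq i$ exactly $\nu(j)$. By the rural hospitals theorem, the set of matched students is the same in all stable matchings. Each $j\neq i$ has at most one acceptable school, so $j$ is matched to $\nu(j)$ or unmatched, as in $\nu$.
\item[(c)] $\eta(i)=\nu(i)$. Suppose instead $\eta(i)=s\neq\nu(i)$, and recall all other students sit as in $\nu$.
\begin{itemize}
\item If $s$ has a free seat in $\nu$, then either $i$ blocks $\nu$ under $P_i$ or $\nu(i)$ blocks $\eta$. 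Stability of $\nu$ under $(P_i,P'_{-i})$ rules out the first, and stability of $\eta$ rules out the second.
\item If $s$ is full in $\nu$, then $i$ cannot occupy $s$ unless someone else leaves, which is impossible since all other students are fixed.
\end{itemize}
\end{itemize}
Hence any observation $o_i(\mu,\Gamma)$ that has an innocent explanation for $\DA$ also has one for $\varphi$, for every $\Gamma$.

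\emph{Part (i).} Suppose $\mu$ is a safe deviation of $\DA$ under $P$. There are two cases.
\begin{itemize}
\item \emph{Case $\varphi[P]\neq\mu$.} Condition (ii) for $\varphi$ follows from the lemma. For (iii), $\DA[P]$ weakly Pareto dominates the stable matching $\varphi[P]$ by student-optimality. So if $\varphi[P]$ Pareto dominated $\mu$, then $\DA[P]$ would also Pareto dominate $\mu$, a contradiction. Thus $\mu$ is a safe deviation of $\varphi$.
\item \emph{Case $\varphi[P]=\mu$.} Then $\varphi[P]\neq\DA[P]$, and I take $\DA[P]$ as the deviation for $\varphi$.
\begin{itemize}
\item Condition (ii) holds: $\DA[P]$ is trivially explained for $\DA$ by $P'_{-i}=P_{-i}$, so the lemma gives explanations for $\varphi$.
\item Condition (iii) holds: $\DA[P]$ weakly Pareto dominates $\varphi[P]$, so $\varphi[P]$ cannot Pareto dominate it.
\end{itemize}
\end{itemize}

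\emph{Part (ii).} Assume $\Gamma(i)=I$ for all $i$. There are two directions.
\begin{itemize}
\item \emph{Any stable $\varphi\neq\DA$ is not credible.} Pick $P$ with $\varphi[P]\neq\DA[P]$. The argument of the second case above shows that $\DA[P]$ is a safe deviation of $\varphi$.
\item \emph{$\DA$ is credible.} Suppose $\mu$ satisfies (ii). Under full observation, for each $i$ we have $\mu=\DA[P_i,P'^{(i)}_{-i}]$ for some $P'^{(i)}_{-i}$. So $\mu$ is individually rational for $i$ under $P_i$, and no pair $(i,s)$ blocks $\mu$ under $P_i$. Since this holds for every $i$, $\mu$ is stable under $P$. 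Then $\DA[P]$ weakly Pareto dominates $\mu$, and if $\mu\neq\DA[P]$ some student strictly gains. Hence $\DA[P]$ Pareto dominates $\mu$, which violates (iii).
\end{itemize}

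The main obstacle is the uniqueness step (b)–(c) of the lemma. The delicate point is ruling out that $i$ gets a better school in another stable matching. I expect to handle it through the rural hospitals theorem together with stability of $\nu$ under $i$'s unchanged preference, as sketched in (c).
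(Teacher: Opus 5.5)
Your proposal is correct and matches the paper's proof: your transfer lemma is the paper's Lemma~\ref{lm:DA-stable} (truncate the other students to $P^\nu_{-i}$ and apply the rural hospitals theorem), and you combine it with the same student-optimality argument and the same use of $\DA[P]$ as a safe deviation of any other stable mechanism. The differences are minor: the paper shows your second case $\varphi[P]=\mu$ cannot arise, since such a $\mu$ would be stable and hence Pareto dominated by $\DA[P]$. Also, where the paper cites Theorem~3 of \citet{moller2026transparent} for the stability of a fully observed safe deviation of $\DA$, you verify this directly, which keeps your argument self-contained.
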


We provide an intuition for \Cref{thm:DA-stable}. First, consider condition~(ii) of \Cref{def:safe}.
Fix a student $i$ and a matching $\mu$. If $i$ observes either an empty seat at a school she prefers to $\mu(i)$ or a lower-priority student assigned there, then no reports by other students can make this observation consistent with stability. Conversely, if the observed assignments satisfy these restrictions and enough unobserved higher-priority students can fill the remaining seats at every school that $i$ prefers, then we can choose those students' preferences so that any stable mechanism produces the assignments that $i$ observes. Thus, whether $\mu$ satisfies condition~(ii) is the same for all stable mechanisms.

Stable mechanisms differ only under condition~(iii) of \Cref{def:safe}, which compares $\mu$ with the original outcome of the mechanism. Since $\DA[P]$ weakly Pareto dominates the outcome of any other stable mechanism, any matching that is not Pareto dominated by $\DA[P]$ cannot be Pareto dominated by the outcome of another stable mechanism. Thus, any safe deviation of $\DA$ is also a safe deviation of any other stable mechanism, which proves the weak ranking in part~(i) of \Cref{thm:DA-stable}.

For part~(ii) of \Cref{thm:DA-stable}, if students observe the entire assignment, condition~(ii) of \Cref{def:safe} requires the deviation to be stable. Student optimality then implies that $\DA[P]$ Pareto dominates any other deviation, so $\DA$ is credible and no other stable mechanism is credible.

\Cref{thm:DA-stable} therefore provides a credibility-based reason to prefer the student-proposing deferred acceptance mechanism over the school-proposing version. Although part~(i) provides only a weak comparison between these two mechanisms, \Cref{exp:school-proposing} in \Cref{sec:example} shows that $\DA$ is strictly more credible than the school-proposing deferred acceptance mechanism under any observation structure.

Pareto efficiency is another central objective in school choice settings. Since stability and Pareto efficiency are incompatible under general priority structures \citep{balinski1999tale}, the credibility comparisons among stable mechanisms leave open how $\DA$ compares with efficient mechanisms. While Pareto efficiency restricts the set of candidate deviations that can satisfy condition~(iii) of \Cref{def:safe}, an inefficient mechanism can still be more credible than an efficient one.
To show this, we first compare the two mechanisms most widely used in centralized school choice worldwide: $\DA$ and the Boston mechanism.

\begin{thm}\label{thm:DA-BM}
Fix any observation structure $\Gamma$. Then $\DA$ is strictly more credible than $\BM$.
\end{thm}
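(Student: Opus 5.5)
The plan is to prove the weak ranking by transporting any safe deviation of $\DA$ to a safe deviation of $\BM$ under the same $P$ and $\Gamma$. Strictness will come from an explicit small example. The central device is a \emph{truncation} explanation: given a matching $\nu$, let every $j\neq i$ report only $\nu(j)$ as acceptable (nothing if $\nu(j)=\emptyset$), while $i$ keeps $P_i$.

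\textbf{Step 1 (explainability transfers from $\DA$ to $\BM$).} Suppose $o_i(\mu,\Gamma)$ is explained under $\DA$ by $P'_{-i}$, and put $\nu=\DA[P_i,P'_{-i}]$. Then $\nu$ is stable, so in particular it has no blocking pair involving $i$. Run $\BM$ on $(P_i,\text{truncation of }\nu)$.
\begin{itemize}
\item In round one, each $j\neq i$ applies to $\nu(j)$. At most $q_s$ of them apply to any $s$, so all are accepted.
\item For every $s$ with $s\mathrel{P_i}\nu(i)$, stability of $\nu$ says $s$ is already full with students of higher priority than $i$. So $i$ is rejected there, whether she applies in round one or later.
\item When $i$ reaches $\nu(i)$, a seat is free, because $\nu(\nu(i))$ contains $i$.
\end{itemize}
Hence $\BM$ outputs exactly $\nu$, so the truncated profile explains $o_i(\mu,\Gamma)$ under $\BM$. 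The argument only uses the absence of blocking pairs involving $i$ in $\nu$. So condition~(ii) of \Cref{def:safe} for $\DA$ implies condition~(ii) for $\BM$.

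\textbf{Step 2 (condition (iii), the main obstacle).} Let $\mu$ be a safe deviation of $\DA$ under $P$. If $\mu$ is not Pareto dominated by $\BM[P]$ and $\mu\neq\BM[P]$, then $\mu$ itself is a safe deviation of $\BM$, and we are done. The hard case is when $\BM[P]$ weakly dominates $\mu$. I would handle it as follows.
\begin{enumerate}
\item Consider $\DA[P]$ as a deviation for $\BM$. By Step 1 it is explainable under $\BM$ for every $i$, taking $P'_{-i}=P_{-i}$ truncated. It is therefore a safe deviation of $\BM$ unless $\BM[P]$ weakly Pareto dominates $\DA[P]$.
\item In the remaining case, all students weakly prefer $\BM[P]$ to both $\DA[P]$ and $\mu$. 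Some $k$ satisfies $\mu(k)\mathrel{P_k}\DA[P](k)$, and $\BM[P]\neq\DA[P]$.
\item Since $\BM[P]$ is efficient with respect to $P$ and differs from $\DA[P]$, it is unstable. Take a blocking pair $(i,s)$ of $\BM[P]$.
\item Build a new deviation $\mu'$ by moving $i$ to $s$ and displacing the lowest-priority student there down to her $\DA[P]$ school, or otherwise modifying $\BM[P]$ so that $i$ strictly gains. Then $\mu'$ is not dominated by $\BM[P]$.
\item The delicate point is to keep every observer's constraint from Step 1 satisfied. I would try to use the structure of $\mu$, which already satisfies these observer-wise no-blocking constraints, to guide the modification.
\end{enumerate}
If this construction fails, the fallback is to search directly for a matching that is observer-wise stable, weakly improves someone relative to $\BM[P]$, and differs from it. The natural candidate is a suitable convex combination in the lattice sense of $\mu$ and $\BM[P]$ (for instance, componentwise assignment choices). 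One would then verify the no-blocking constraints for each observer.

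\textbf{Step 3 (strictness).} Construct a problem with three students and two schools of quota one in which $\BM$ has a safe deviation but $\DA$ has none, even under full observation $\Gamma(i)=I$. By \Cref{thm:DA-stable}(ii), $\DA$ is then credible. For a general $\Gamma$, choose the example so that $\DA$'s outcome is the unique student-optimal stable matching, and every matching that is observer-wise stable and differs from it is Pareto dominated by it. Meanwhile $\BM$ misassigns some student through its immediate-acceptance feature, so a stable matching giving that student a strictly better school serves as a safe deviation of $\BM$ that is explainable via Step 1. I would check this example first under own-assignment observation, where explainability is weakest and hence $\DA$ is hardest to keep credible.
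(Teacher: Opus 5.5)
Your Step~1 is correct. Simulating $\BM$ directly on $(P_i,P^\nu_{-i})$ is a clean alternative to the paper's proof of Lemma~\ref{lm:DA-BM}, which uses Pareto efficiency and nonbossiness instead. Your reduction in Step~2 is also right: the problematic case forces $\BM[P]\neq\DA[P]$, so $\BM[P]$ is unstable. The gap is what comes next. You give no construction, and the direction you propose cannot work. You suggest modifying $\BM[P]$ while keeping each observer's view consistent with a matching that has no blocking pair involving her, or, as a fallback, searching for an ``observer-wise stable'' matching. In this case, however, $\BM$'s safe deviation may have to be one that is \emph{not} explainable under $\DA$.

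Here is an example. Take unit-quota schools $s_1,s_2,s_3$ and students $i_1,\dots,i_4$ with $P_{i_1}: s_1,s_2$; $P_{i_2}: s_2,s_1$; $P_{i_3}: s_3,s_1$; $P_{i_4}: s_3$. Let priorities be $\succ_{s_1}: i_2,i_3,i_1,i_4$; $\succ_{s_2}: i_1,i_2,i_3,i_4$; $\succ_{s_3}: i_4,i_3,i_1,i_2$. Listing the assignments of $(i_1,i_2,i_3,i_4)$, we get $\DA[P]=(s_2,s_1,\emptyset,s_3)$ and $\BM[P]=(s_1,s_2,\emptyset,s_3)$.
\begin{itemize}
\item Under $\Gamma(i)=\{i\}$, $\BM[P]$ is itself a safe deviation of $\DA$. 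Each student's own assignment is a $\DA$ outcome for suitable reports of the others; for $i_3$, let $i_2$ list only $s_1$ and $i_4$ only $s_3$. So you are in your hard case with $\mu=\BM[P]$.
\item Every matching $\nu$ meeting your Step~1 constraints has $\nu(i_4)=s_3$ and $\nu(i_1),\nu(i_2)\in\{s_1,s_2\}$, because each of $i_1,i_2$ has top priority at one of her two acceptable schools. Hence $\nu\in\{\BM[P],\DA[P]\}$, and neither is a safe deviation of $\BM$.
\item Yet $\BM$ does have a safe deviation, $\eta=(\emptyset,s_2,s_1,s_3)$. It leaves $i_1$ unmatched despite her top priority at $s_2$, so no $\DA$-style explanation exists for $i_1$.
\end{itemize}

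The missing idea is that when $\BM[P]$ is unstable you should ignore $\mu$ and exploit $\BM$'s manipulability, as in Lemma~\ref{lm:BM-unstable}.
\begin{itemize}
\item Since $\BM$ is nonwasteful, instability gives $i^*$ and $s$ with $s\mathrel{P_{i^*}}\BM[P](i^*)$ and $i^*\succ_s j$ for some $j\in\BM[P](s)$. So fewer than $q_s$ students ranked above $i^*$ apply to $s$ in round one, and $i^*$ secures $s$ by listing it first.
\item Let $s^*$ be the best school $i^*$ can secure with some report $P^*_{i^*}$ against $P_{-i^*}$, and set $\eta=\BM[P^*_{i^*},P_{-i^*}]$.
\item Each $j\neq i^*$ explains $\eta$ using the true reports of the others, except that $i^*$ reports $P^*_{i^*}$.
\item Student $i^*$ explains $\eta$ with the truncation profile $P^\eta_{-i^*}$. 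Maximality of $s^*$ implies that every $t\mathrel{P_{i^*}}s^*$ is filled in round one by $q_t$ students ranked above $i^*$, and these are exactly $\eta(t)$.
\item Since $i^*$ strictly gains, $\eta$ is a safe deviation under every $\Gamma$. In the example above, this construction produces exactly $\eta$ with $i^*=i_3$.
\end{itemize}
The paper then splits on whether $\BM[P]$ is stable. If it is, $\BM[P]=\DA[P]$ and $\mu$ itself transfers by your Step~1; if not, use $\eta$.

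Your Step~3 is only a plan. The paper's example realizes it: unit quotas, $P_{i_1},P_{i_2}: s_1,s_2$, $P_{i_3}: s_2,s_1$, and common priority $i_1\succ i_2\succ i_3$. There the own-assignment check you describe forces any candidate deviation to equal $\DA[P]$.
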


We provide an intuition for \Cref{thm:DA-BM}. 
As discussed after \Cref{thm:DA-stable}, the constraints on safe deviations under $\DA$ arise from stability and observable priorities. In contrast, $\BM$ makes much weaker use of priorities: a school compares students by priority only when the number of applicants in the same round exceeds its remaining capacity. Thus, priorities impose fewer restrictions on safe deviations under $\BM$. Pareto efficiency works in the opposite direction by restricting safe deviations under $\BM$, but this restriction does not offset the weaker use of priorities. Therefore, $\BM$ is less credible than $\DA$.

\Cref{thm:DA-BM} shows that replacing $\BM$ with $\DA$ improves credibility under any observation structure. Previous work shows that $\BM$ is neither strategy-proof nor stable \citep{abdulkadirouglu2003school}, and these concerns have led school districts to replace it with $\DA$. Boston, for example, adopted $\DA$ in 2005, partly because strategic reporting under $\BM$ harmed families that did not understand the mechanism \citep{abdulkadiroglu2006changing}. Credibility therefore provides an additional reason for such reforms.

We next compare $\DA$ with two Pareto-efficient mechanisms: $\TTC$ and $\EA$. $\TTC$ is strategy-proof, but its outcome need not Pareto dominate the outcome of $\DA$. By contrast, $\EA$ is not strategy-proof, but it weakly Pareto dominates $\DA$. The next result shows that their credibility rankings relative to $\DA$ depend on what students observe.

\begin{thm}\label{thm:DA-TTC}
Fix any observation structure $\Gamma$.
\begin{itemize}
  \item[(i)] If $\Gamma(i)=I$ for any $i\in I$, then $\DA$ is strictly more credible than $\TTC$ and $\EA$.
  \item[(ii)] If $\Gamma(i)=\{i\}$ for any $i\in I$, then $\TTC$ and $\EA$ are strictly more credible than $\DA$.
\end{itemize}
\end{thm}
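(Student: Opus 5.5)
For part~(i), I will use \Cref{thm:DA-stable}(ii): under full observation $\DA$ is credible, so it has no safe deviation under any $P$. Hence $\DA$ is vacuously more credible than $\TTC$ and $\EA$.

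For strictness, I will construct one problem in which $\TTC$ has a safe deviation, and one (possibly the same) in which $\EA$ has one. The natural choice is a small instance with three students and two schools of unit quota. Under full observation, condition~(ii) requires the deviation $\mu$ to equal $\TTC[P_i,P'_{-i}]$ for some $P'_{-i}$, separately for each $i$. I will take $\mu$ to be the $\TTC$ outcome of a different profile and check it against each student's report. I will choose $\mu$ to benefit some student relative to $\TTC[P]$; for instance, $\mu$ could be the outcome in which a lower-priority student gets a school through trading while some other student is hurt. Then $\TTC[P]$ does not Pareto dominate $\mu$.

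The key design point is that each student's own report must be compatible with $\mu$ being a $\TTC$ or $\EA$ outcome. For example, a student assigned her top choice in $\mu$ can be explained easily. A student who is worse off in $\mu$ needs an explanation in which the others' reports make her preferred school go elsewhere through a cycle. That is possible in $\TTC$ because priorities are not respected. For $\EA$, I will use an instance where the $\EA$ outcome differs from $\DA$, so the Kesten cycle-removal step lets a seat go to a lower-priority student. The same style of check then gives the deviation.

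For part~(ii), with own-assignment observation, condition~(ii) for $\varphi$ says only that $\mu(i)$ lies in the set $A_\varphi(i,P_i)=\{\varphi[P_i,P'_{-i}](i):P'_{-i}\}$. The plan is to show $A_{\TTC}(i,P_i)\subseteq A_{\DA}(i,P_i)$ and $A_{\EA}(i,P_i)\subseteq A_{\DA}(i,P_i)$. I also need to handle condition~(iii) via Pareto comparisons.

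I first plan to compute $A_{\DA}(i,P_i)$. By the stability intuition after \Cref{thm:DA-stable}, student $i$ can obtain any acceptable school, or $\emptyset$, provided that every school she prefers can be filled by higher-priority students. So $A_{\DA}(i,P_i)$ consists of $\emptyset$ together with the acceptable schools $s$ such that every school $s'$ with $s' \mathrel{P_i} s$ has at least $q_{s'}$ students ranked above $i$. The same characterisation should hold for $\TTC$ and $\EA$, because a school that $i$ prefers with fewer than $q_{s'}$ higher-priority students can never be fully taken away from $i$. So I conjecture that the sets coincide. This ties condition~(ii) together across all three mechanisms.

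Then, if $\mu$ is a safe deviation of $\TTC$ under $P$, I need a safe deviation of $\DA$. The direct route is to show that the same $\mu$ works: it is not Pareto dominated by $\DA[P]$. If $\DA[P]$ did Pareto dominate $\mu$, compose $\mu$ with some improvement. The difficulty is that $\TTC[P]$ and $\DA[P]$ are not Pareto comparable.

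The alternative is to construct a new deviation. If $\DA[P]$ is inefficient, take $\mu'=\EA[P]$ or any Pareto improvement of $\DA[P]$. Every student weakly gains, so condition~(ii) holds provided the improved assignments remain in $A_{\DA}$. They do, since $i$'s assignment under $\EA[P]$ lies in $A_{\EA}(i,P_i)=A_{\DA}(i,P_i)$. This $\mu'$ then satisfies (i)–(iii) for $\DA$.

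If $\DA[P]$ is efficient, then $\DA[P]=\EA[P]$. I expect $\TTC[P]$ would then also be constrained enough that the original $\mu$ is not dominated by $\DA[P]$; this is the main obstacle and needs a careful case argument. For $\EA$ the argument is simpler: $\EA[P]$ weakly dominates $\DA[P]$, so a matching not dominated by $\EA[P]$ is not dominated by $\DA[P]$, and the $A$-sets coincide. For strictness I will give a profile where $\DA[P]$ is inefficient, so that $\DA$ has a safe deviation. Since $\TTC$ and $\EA$ are efficient there, only deviations making someone strictly worse off are possible, and I will choose priorities so that this is impossible.
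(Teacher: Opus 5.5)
Your weak direction in part~(i) is the paper's. Your reduction of condition~(ii) in part~(ii) is also the paper's: you reduce it to the claim that any assignment $i$ can receive under $\TTC$ or $\EA$ with her report $P_i$ she can also receive under $\DA$, which is exactly \Cref{lm:DA-TTC}.

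The genuine gap is condition~(iii) in the $\TTC$ comparison. You split on whether $\DA[P]$ is efficient and, in the efficient case, hope that the $\TTC$ deviation $\mu$ itself works for $\DA$. It need not. Take the paper's part-(i) profile ($P_{i_1}\colon s_2,s_1$; $P_{i_2}\colon s_1$; $P_{i_3}\colon s_2$; $\succ_{s_1}\colon i_1,i_2,i_3$; $\succ_{s_2}\colon i_2,i_3,i_1$; unit quotas). There $\DA[P]=\EA[P]$ assigns $(s_1,\emptyset,s_2)$ to $(i_1,i_2,i_3)$ and is efficient, while $\TTC[P]$ assigns $(s_2,s_1,\emptyset)$. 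Under own-assignment observation, $i_1$ has top priority at $s_1$, which she accepts, so $\TTC$ never leaves her unmatched. Since $\TTC[P]$ gives $i_1$ and $i_2$ their top schools, any safe deviation of $\TTC$ must give $s_2$ to $i_3$. This forces the deviation to be $(s_1,\emptyset,s_2)$, that is, $\DA[P]$ itself, which violates condition~(i) for $\DA$.

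The missing idea is to split on whether $\TTC[P]=\DA[P]$. If they coincide, conditions~(i) and~(iii) for $\DA$ are literally those for $\TTC$, and condition~(ii) transfers through the inclusion. If they differ, $\TTC[P]$ itself is a safe deviation of $\DA$. It is distinct from $\DA[P]$, it is Pareto efficient and hence not dominated by $\DA[P]$, and each $\TTC[P](i)$ is attainable for $i$ under $\DA$. This also subsumes your separate $\EA[P]$ argument for inefficient $\DA[P]$.

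You also do not establish the inclusion itself. Your description of the $\DA$-attainable set counts higher-priority students school by school and contains $\emptyset$ unconditionally. But the schools $i$ prefers must be filled \emph{simultaneously} by distinct students who outrank her. With unit quotas, suppose $j$ is the only student above $i$ at both $s_1$ and $s_2$ and $i$ accepts both. Then neither $\DA$ nor $\TTC$ can leave $i$ unmatched, although your set contains $\emptyset$. So the conjectured equality of the three sets is unsupported.

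You only need the inclusions, which the paper imports from \citet{decerf2024incontestable}, and both have short direct proofs.
\begin{itemize}
\item[(a)] For $\EA$: suppose $\nu$ weakly Pareto dominates the stable matching $\DA[P]$. Then every school $s'$ with $s'\mathrel{P_i}\nu(i)$ is filled under $\DA[P]$ by students who outrank $i$. If exactly these students report only their $\DA[P]$ school, and every other student except $i$ reports nothing, $\DA$ gives $i$ the assignment $\nu(i)$. Applying this at $(P_i,P'_{-i})$ with $\nu=\EA[P_i,P'_{-i}]$ gives the $\EA$ inclusion.
\item[(b)] For $\TTC$: every school $i$ prefers to her assignment is exhausted before she is removed. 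The students such a school points to in the cycles that consume its seats are distinct, differ from $i$, and outrank her, so they can play the same role.
\end{itemize}

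Two further points. For $\EA$ you should also verify condition~(i): if $\mu=\DA[P]$, then $\EA[P]\neq\mu$ would Pareto dominate $\mu$. And neither strictness claim is actually proved: you describe how you would search for examples but give none. In the paper's part-(i) example, $\EA[P]=\DA[P]$, and the safe deviation of $\EA$ is the unstable matching $\TTC[P]$, which $\EA$ produces at suitable other profiles.
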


The intuition for \Cref{thm:DA-TTC} differs from the Boston case, because priorities impose additional restrictions on safe deviations under $\TTC$ and $\EA$. Under $\TTC$, priorities determine the order in which schools point to students. Under $\EA$, priorities restrict outcomes indirectly since $\EA$ must weakly Pareto dominate $\DA$. Whether these restrictions offset the restrictions imposed by stability under $\DA$ depends on what students observe.

Part~(i) considers the case in which students observe the entire assignment. \Cref{thm:DA-stable} then implies that $\DA$ is credible. By contrast, neither $\TTC$ nor $\EA$ is stable, so an unstable matching can still have an innocent explanation under either mechanism, leaving room for safe deviations. Indeed, we show in the proof that neither mechanism is credible in this case. Thus, $\DA$ is strictly more credible than both mechanisms.

Part~(ii) considers the case in which students observe only their own assignment. Here, students can no longer compare their priorities with those of students assigned to preferred schools. Stability therefore imposes fewer restrictions in this case, while Pareto efficiency makes condition~(iii) of \Cref{def:safe} more restrictive under $\TTC$ and $\EA$ than under $\DA$. As a result, both mechanisms are strictly more credible than $\DA$. 

\Cref{thm:DA-TTC} shows that the tradeoff between efficiency and credibility depends on what students observe. 
Under full observation, $\DA$ is credible, while $\TTC$ and $\EA$ are not. Under own-assignment observation, the ranking reverses: $\TTC$ and $\EA$ are strictly more credible than $\DA$, so the designer need not trade off efficiency against credibility. Because centralized systems are closer in practice to own-assignment observation than to full observation, the own-assignment ranking is especially relevant.
However, the comparison is less robust between these two extremes. When students observe only part of the assignment, both $\DA$ and $\BM$ can be incomparable with each of the two efficient mechanisms (\Cref{exp:DA-TTC-partial,exp:BM-TTC} in \Cref{sec:example}).

\section{School choice in practice}\label{sec:practice}

In practice, school choice mechanisms face multiple constraints. In this section, we consider three common departures from the model in \Cref{sec:model}: limits on the length of preference lists (\Cref{sec:constrained}), weak priorities with tie-breaking (\Cref{sec:weak}), and affirmative-action policies (\Cref{sec:affirmative}). For each setting, we study how our results are affected.

\subsection{Constrained school choice}\label{sec:constrained}

Centralized admissions systems often limit the number of choices that students can report. Chicago high school applicants may apply to up to twenty programs.\footnote{For more on Chicago high school admissions, see \url{https://www.cps.edu/gocps/high-school/explore/choice-programs/}, accessed August 25, 2026.} In London, students can apply to at most six schools in primary and secondary school admissions.\footnote{For more on London primary and secondary school admissions, see \url{https://explore-education-statistics.service.gov.uk/find-statistics/primary-and-secondary-school-applications-and-offers/2025-26}, accessed August 25, 2026.} Beijing's 2025 college admissions allow only thirty choices.\footnote{For more on Beijing college admissions, see \url{https://www.beijing.gov.cn/fuwu/bmfw/sy/jrts/202506/t20250610_4109714.html}, accessed August 25, 2026.} Under this list-length constraint, $\DA$ is no longer strategy-proof \citep{haeringer2009constrained}, raising the question of whether the constraint also changes the credibility rankings in \Cref{sec:main}. To answer this question, we first define the constrained version of each mechanism.

Suppose that students may report at most $k\geq 1$ acceptable schools. If $n_i$ is the number of schools acceptable under $P_i$, let $P_i^k$ preserve the first $\min\{k,n_i\}$ acceptable schools under $P_i$ and rank any other school below $\emptyset$. The \textbf{constrained version} of a mechanism $\varphi$ is the mechanism $\varphi^k$ defined by $\varphi^k[P]=\varphi[P^k]$ for any $P\in\mathcal{P}$. A mechanism is \textbf{constrained stable} or \textbf{constrained efficient} if its outcome is stable or efficient under $P^k$ for any $P\in\mathcal{P}$.

This list-length constraint affects credibility in two ways. On the one hand, by restricting the reports available in an innocent explanation, it can make a deviation easier to detect. On the other hand, truncation can leave a student unmatched or assign her to a less preferred school. These worse outcomes then allow more deviations to satisfy condition~(iii) of \Cref{def:safe}. Therefore, the results from the unrestricted model do not apply here automatically.

That said, we show that a list-length constraint does not change the credibility rankings when students can report at least two schools.

\begin{prop}\label{prop:constrained}
Fix any observation structure $\Gamma$ and constraint $k\geq 2$.
\begin{itemize}
  \item[(i)] $\DA^k$ is more credible than the constrained version of any other stable mechanism.
  \item[(ii)] $\DA^k$ is strictly more credible than $\BM^k$.
  \item[(iii)] If $\Gamma(i)=I$ for any $i\in I$, $\DA^k$ is the only constrained stable and credible mechanism. Moreover, $\DA^k$ is strictly more credible than $\TTC^k$ and $\EA^k$.
  \item[(iv)] If $\Gamma(i)=\{i\}$ for any $i\in I$, then $\TTC^k$ and $\EA^k$ are strictly more credible than $\DA^k$.
\end{itemize}
\end{prop}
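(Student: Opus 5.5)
The plan is to reduce each part to the unconstrained arguments behind \Cref{thm:DA-stable,thm:DA-BM,thm:DA-TTC}, applied to the truncated profile $P^k$. Two facts do most of the work. First, $\varphi^k[P_i,P'_{-i}]=\varphi[P_i^k,P'^k_{-i}]$, and every truncated profile is itself a profile with at most $k$ acceptable schools. Hence the innocent explanations available under $\varphi^k$ are exactly the explanations under $\varphi$ drawn from $k$-truncated reports, with student $i$'s own report replaced by $P_i^k$. Second, condition~(iii) is evaluated with the true preferences $P$, not with $P^k$.

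\emph{Condition (ii) for stable mechanisms.} I would first redo the characterization sketched after \Cref{thm:DA-stable}. Fix $i$ and $\mu$. The observation $o_i(\mu,\Gamma)$ is explainable by some constrained stable mechanism if and only if it satisfies the following conditions.
\begin{itemize}
\item[(a)] No observed student of lower priority than $i$, and no observable empty seat, sits at a school that $i$ ranks above $\mu(i)$ within $P_i^k$.
\item[(b)] Enough unobserved students have higher priority than $i$ to fill the remaining seats at those schools.
\end{itemize}
The constructed explanation must use lists of length at most $k$. In the unconstrained construction, each filler student lists only the school she fills; each observed student $j$ lists $\mu(j)$ first, or, when necessary, one school she is rejected from followed by $\mu(j)$. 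This is where $k\ge 2$ enters: two-school lists suffice. Because the characterization does not depend on which constrained stable mechanism is used, part~(i) follows as before. $\DA^k[P]=\DA[P^k]$ weakly $P^k$-Pareto dominates every stable matching under $P^k$. I then need this dominance to transfer to $P$. Assignments that are acceptable under $P^k$ are ranked the same way under $P$, and a student unmatched under $P^k$ is in the same position in both matchings. So $\DA^k[P]$ weakly $P$-dominates $\varphi^k[P]$, and a $\mu$ that is not $P$-dominated by $\DA^k[P]$ is not dominated by $\varphi^k[P]$ either.

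\emph{Part (iii), first claim.} Under full observation, condition~(ii) forces $\mu$ to be stable with respect to $P^k$, at least for the students who are explaining it. I must check that a safe deviation cannot exploit students who are ranked at $\emptyset$ by truncation. My plan is to argue that any $\mu$ satisfying (ii) for all $i$ is $P^k$-stable, and that $\mu(j)$ is $P_j^k$-acceptable or empty. If some $j$ is assigned a school outside her truncated list, her own explanation fails, because her report $P_j^k$ makes that school unacceptable. $\DA[P^k]$ is then $P^k$-student-optimal among such matchings, hence $P$-weakly dominates $\mu$, which contradicts (i) together with (iii). For other constrained stable mechanisms, I would reuse the unconstrained counterexample, whose preference lists have length at most $2$.

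\emph{Parts (ii), (iii) second claim, and (iv).} For the weak rankings I would follow the proofs of \Cref{thm:DA-BM,thm:DA-TTC}. I will take a safe deviation of the more credible mechanism, transform it into a matching explainable under the other mechanism using $k$-truncated reports, and verify condition~(iii) with respect to the true $P$. For $\BM^k$ and own-assignment observation, the key constructions explain an assignment $\mu(i)=s$ by letting competitors rank a single school, which respects the cap. For strictness, I would check that every counterexample from \Cref{thm:DA-BM,thm:DA-TTC} uses preference lists with at most two acceptable schools, so that $P^k=P$ for $k\ge2$ and the examples transfer verbatim.

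\emph{Main obstacle.} I expect the main obstacle to be condition~(iii) in the comparisons with efficient mechanisms, especially part~(iv). There $\TTC^k$ and $\EA^k$ are efficient only under $P^k$, not under $P$. A deviation could therefore give some student a school she truly prefers but truncated away, which would make it non-dominated under $P$. I would handle this by showing that, under own-assignment observation, such a student cannot explain her assignment, since her report $P_i^k$ makes that school unacceptable. Deviations satisfying (ii) therefore only use $P_i^k$-acceptable assignments, on which $P$ and $P^k$ agree.
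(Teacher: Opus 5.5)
The genuine gap is in part~(iii). Your argument that $\DA^k$ itself is credible under full observation is fine, but the other half of that claim is not established: that no other constrained stable mechanism is credible. You propose to reuse ``the unconstrained counterexample'' with lists of length at most $2$. No fixed problem can do this job, because an arbitrary constrained stable mechanism $\varphi'\neq\DA^k$ may agree with $\DA^k$ at that problem and differ only elsewhere. (The uniqueness claim in \Cref{thm:DA-stable}~(ii) is itself proved by a general argument, not an example.) What you need instead is the following. Take any constrained stable $\varphi'$ and any $P$ with $\varphi'[P]\neq\DA^k[P]$; you do not get to choose this $P$, and it may contain lists longer than $k$. Then show that $\mu=\DA[P^k]$ is a safe deviation of $\varphi'$ under $P$. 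Condition~(iii) holds because $\mu$ weakly $P^k$-dominates the $P^k$-stable matching $\varphi'[P]$, and individual rationality under $P^k$ carries this dominance over to $P$. Condition~(ii) holds for every $\Gamma$ because $\varphi'[P_i,P^\mu_{-i}]$ and $\mu$ are both stable under $(P_i^k,P^\mu_{-i})$. The rural-hospitals argument of \Cref{lm:DA-stable} then forces them to coincide. However, $\varphi'$ is only assumed to return $P^k$-stable outcomes and need not be the truncation of a stable mechanism, so \Cref{lm:DA-stable} cannot be quoted verbatim. The paper handles this with an auxiliary stable mechanism that agrees with $\varphi'$ at $(P_i^k,P^\mu_{-i})$.

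The rest of your plan is workable, but it re-derives each theorem at $P^k$, whereas the paper argues modularly, and it misplaces the role of $k\geq 2$. \Cref{lm:constrained} shows once and for all that a safe deviation of $\varphi^k$ under $P$ is a safe deviation of $\varphi$ under $P^k$. That step uses $j\in\Gamma(j)$ and individual rationality, which is exactly your ``main obstacle'' fix. So your fix works under every $\Gamma$, not only under own-assignment observation. The same lemma lifts a safe deviation of $\psi$ under $P^k$ to one of $\psi^k$ under $P$ whenever $\psi$ admits explanations with single-school lists. \Cref{lm:simple} verifies this for stable mechanisms and $\BM$, which gives the weak rankings for every $k\geq 1$. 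The assumption $k\geq 2$ matters only for strictness, since for $k=1$ the constrained versions of all stable mechanisms, $\BM$, and $\EA$ coincide. Your claim that $k\geq 2$ enters because two-school lists are needed is therefore mistaken: by \Cref{lm:DA-stable}, the single-school lists $P^\eta_{-i}$ always suffice. This also makes your (a)/(b) characterization unnecessary; as stated, it omits the requirement that $\mu(i)$ be $P_i^k$-acceptable or empty. Your direct route does avoid \Cref{lm:simple}~(ii), but you must then redo \Cref{lm:BM-unstable} at $P^k$. Choose $s^*$ as the $P^k_{i^*}$-best school that $i^*$ can obtain, and let $P^*_{i^*}$ list only $s^*$. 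Then the explanation $(P^*_{i^*},P^k_{-i^*})$ offered to the students other than $i^*$ is a legitimate truncated profile. Your description, in which competitors each rank a single school, does not cover this explanation.
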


In the proof of \Cref{prop:constrained}, parts~(i),~(ii), and~(iv) rely on the idea of \emph{simple explanations}: each explanation can be constructed so that every other student reports at most one acceptable school. Stable mechanisms and $\BM$ admit simple explanations (see \Cref{lm:simple}). Moreover, if a mechanism is more credible than one that admits simple explanations, then the constrained version of the first is more credible than the constrained version of the second (see \Cref{lm:constrained}). This establishes the weak comparisons in parts~(i),~(ii), and~(iv), which hold for any constraint $k\geq 1$.

To establish the strict comparisons, we rely on examples in which some students find at least two acceptable schools. When $k = 1$, the constrained versions of all stable mechanisms, $\BM$, and $\EA$ coincide on every preference profile. Therefore, no strict ranking among them is possible, which explains why \Cref{prop:constrained} requires $k \geq 2$. For part~(iii), since $\DA^k$ is credible under full observation, it suffices to show one such example in which $\TTC^k$ and $\EA^k$ are not credible.

A list-length constraint can affect assignments, welfare, and incentives, but \Cref{prop:constrained} shows that it does not change the credibility rankings when $k\geq 2$. Therefore, a designer can choose the list length for other reasons without affecting the ranking of mechanisms implied by credibility.

\subsection{Weak priorities}\label{sec:weak}

Schools often place students into priority groups rather than rank them strictly. These groups may depend on sibling status, residence, or program eligibility, with lotteries used to break ties within each group. 
Boston, New York City, Washington, D.C., and Beijing use such groups and lotteries to assign public school seats.\footnote{For more on Boston public school admissions, see \url{https://www.bostonpublicschools.org/enrollment/welcome-services/student-assignment-policy}, accessed August 25, 2026. For more on New York City public school admissions, see \url{https://www.schools.nyc.gov/enrollment/enroll-grade-by-grade/how-students-get-offers-to-doe-public-schools/random-numbers-in-admissions}, accessed August 25, 2026. For more on Washington, D.C., public school admissions, see \url{https://www.myschooldc.org/faq/my-school-dc-lottery-application-and-common-lottery}, accessed August 25, 2026. For more on Beijing public school admissions, see \url{https://www.bjdch.gov.cn/zwgk/zcwj2024/202604/t20260428_4620458.html}, accessed August 25, 2026.}
These lotteries create an additional information problem, as a student may know the priority group and her own lottery position without knowing the lottery positions of other students.

We capture this problem by assuming that each student observes the weak priority profile and her own rank under the realized tie-breaking, but not the realized ranks of other students. Let $\succsim_s$ denote the weak priority of school $s$. A strict priority profile $\succ$ is a \textbf{tie-breaking} of $\succsim$ if it preserves any strict comparison under $\succsim_s$ at every school. That is, $i \succ_s j$ implies $i \succsim_s j$ for any $i, j \in I$. Under a realized tie-breaking $\succ$, student $i$ observes her rank $r_{is}=|\{j\in I: j\succ_s i\}|+1$ at each school $s$.

\begin{definition}\label{def:safe-weak}
Given an observation structure $\Gamma$, a weak priority profile $\succsim$, and a tie-breaking $\succ$ of $\succsim$, a matching $\mu$ is a \textbf{safe deviation} of $\varphi$ under $[P,\succsim,\succ]$ if the following conditions hold:
\begin{itemize}
  \item[(i)] $\varphi[P,\succ]\neq\mu$.
  \item[(ii)] For any $i\in I$, there exist $P' \in \mathcal{P}$ and a tie-breaking $\succ'$ of $\succsim$ such that $o_i(\varphi[P_i,P'_{-i},\succ'],\Gamma)=o_i(\mu,\Gamma)$ and $|\{j\in I: j\succ'_s i\}|+1=r_{is}$ for any $s\in S$.
  \item[(iii)] $\mu$ is not Pareto dominated by $\varphi[P,\succ]$.
\end{itemize}
\end{definition}

Compared to \Cref{def:safe}, \Cref{def:safe-weak} changes only condition~(ii). Under weak priorities, a student may explain her observation using a different tie-breaking of the same weak priority profile, provided that her observed rank at each school remains unchanged.

The unobserved part of the tie-breaking weakens the role of stability. Even under full observation, a student may not know whether the students assigned to her preferred schools have higher realized priority. Thus, $\DA$ is not credible and is not even more credible than $\TTC$ or $\EA$ when students observe the entire assignment. That is, part~(ii) of \Cref{thm:DA-stable} and part~(i) of \Cref{thm:DA-TTC} fail under weak priorities. However, the next result shows that the other rankings continue to hold.

\begin{prop}\label{prop:weak}
Fix any observation structure $\Gamma$. Under weak priorities,
\begin{itemize}
  \item[(i)] $\DA$ is more credible than any other stable mechanism.
  \item[(ii)] $\DA$ is strictly more credible than $\BM$.
  \item[(iii)] If $\Gamma(i)=\{i\}$ for any $i\in I$, then $\TTC$ and $\EA$ are strictly more credible than $\DA$.
\end{itemize}
\end{prop}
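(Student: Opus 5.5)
The plan is to reduce each part to the corresponding strict-priority argument by treating the tie-breaking as an additional free parameter in the innocent explanation. The key observation is that condition~(ii) of \Cref{def:safe-weak} only enlarges the set of explanations relative to \Cref{def:safe}. The student may choose both $P'_{-i}$ and a tie-breaking $\succ'$ of $\succsim$ that fixes her own ranks $r_{is}$. Condition~(iii) is unchanged and compares $\mu$ with $\varphi[P,\succ]$ under the realized tie-breaking.

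For part~(i), I would first prove an analogue of the characterization used for \Cref{thm:DA-stable}. Whether $\mu$ satisfies condition~(ii) is the same for every stable mechanism. Fix $i$ and a candidate tie-breaking $\succ'$ preserving her ranks. By the construction behind \Cref{thm:DA-stable}, for fixed strict priorities, the existence of $P'_{-i}$ with $o_i(\varphi[P_i,P'_{-i},\succ'],\Gamma)=o_i(\mu,\Gamma)$ depends only on three things:
\begin{itemize}
\item the absence of observed empty seats at schools $i$ prefers to $\mu(i)$;
\item the absence of observed lower-$\succ'$ students at those schools;
\item enough unobserved higher-$\succ'$ students to fill the remaining seats.
\end{itemize}
I would take this as established for each fixed $\succ'$. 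Taking the union over admissible $\succ'$ preserves mechanism-independence. Since $\DA[P,\succ]$ weakly Pareto dominates $\varphi[P,\succ]$ for any stable $\varphi$, a matching not Pareto dominated by $\DA[P,\succ]$ is not dominated by $\varphi[P,\succ]$. Condition~(i) transfers as in the strict case: if $\mu=\varphi[P,\succ]\neq\DA[P,\succ]$, then $\mu$ is Pareto dominated by $\DA[P,\succ]$, but I need the reverse direction. If $\mu$ is a safe deviation of $\DA$ and equals $\varphi[P,\succ]$, then condition~(iii) fails for $\DA$ unless $\mu=\DA[P,\succ]$, which condition~(i) excludes. So $\mu\neq\varphi[P,\succ]$ automatically.

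For part~(ii), the weak comparison follows the proof of \Cref{thm:DA-BM} with $\succ'$ added. Given a safe deviation $\mu$ of $\DA$, I would produce a safe deviation of $\BM$ under the same $[P,\succsim,\succ]$ by reusing, for each student, the explanation tie-breaking $\succ'$ found for $\DA$. I would then apply the strict-priority construction converting a $\DA$ explanation into a $\BM$ explanation, presumably via simple explanations (\Cref{lm:simple}), under which $\DA$ and $\BM$ coincide. For strictness, I would take the example from \Cref{thm:DA-BM} with strict $\succsim$: when $\succsim$ is already strict, the only tie-breaking is $\succsim$ itself, so the model reduces exactly to the original one.

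For part~(iii), strictness again follows by embedding the strict example from \Cref{thm:DA-TTC}(ii). The weak direction is the main obstacle. I must show that any safe deviation of $\TTC$ (respectively $\EA$) under own-assignment observation yields one of $\DA$. Under own observation, explainability for $\DA$ should be nearly trivial: $i$ can be explained at $\mu(i)$ whenever $\mu(i)$ is individually rational for $i$ and can be reached by letting higher-$\succ'$ students fill preferred schools. I would show that $\TTC$ and $\EA$ explainability imply this, by the same argument as in the strict case applied with the explaining $\succ'$. Then I need $\mu'\neq\DA[P,\succ]$ with $\mu'$ not dominated by $\DA[P,\succ]$. Since $\TTC$ and $\EA$ outcomes are efficient, a $\mu$ not dominated by them is either undominated by $\DA[P,\succ]$ or can be modified, following the strict proof. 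The delicate point is checking that the strict proof never uses unobserved ranks of others beyond what a freely chosen $\succ'$ supplies.
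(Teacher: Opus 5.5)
Your reduction (fix the tie-breaking $\succ'$ in student $i$'s explanation, run the strict-priority argument at $\succ'$, and check conditions~(i) and~(iii) at the realized $\succ$) is the paper's approach. Part~(i) and both strictness claims go through as you describe. For part~(i) you do not need the three-bullet characterization, which as stated also omits individual rationality of $\mu(i)$ for $i$. \Cref{lm:DA-stable} applied at $\succ'$ gives mechanism-independence directly: any $\eta=\DA[P_i,P'_{-i},\succ']$ is reproduced by every stable $\varphi$ at $(P_i,P^\eta_{-i},\succ')$.

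\textbf{Gap in the weak direction of part~(ii).} You only transfer explanations, implicitly taking the $\DA$ deviation $\mu$ itself as the $\BM$ deviation. However, conditions~(i) and~(iii) of \Cref{def:safe-weak} need not hold for $\BM$. Since $\BM$ is not stable, student optimality of $\DA[P,\succ]$ says nothing about how $\BM[P,\succ]$ compares with $\mu$.

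Here is a concrete failure with strict priorities (so $\succsim=\succ$) and unit quotas:
\begin{itemize}
\item Preferences: $P_{i_1}\colon s_1,s_2$; $P_{i_2}\colon s_2$; $P_{i_3}\colon s_2,s_1$; $i_4$ finds nothing acceptable.
\item Priorities: $\succ_{s_1}\colon i_4,i_3,i_1,i_2$ and $\succ_{s_2}\colon i_1,i_2,i_3,i_4$.
\end{itemize}
Then $\DA[P]$ assigns $i_1\mapsto s_2$ and $i_3\mapsto s_1$, while $\BM[P]$ assigns $i_1\mapsto s_1$ and $i_2\mapsto s_2$. Under own-assignment observation, $\BM[P]$ is a safe deviation of $\DA$. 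For $i_3$'s explanation, let $i_2$ report only $s_2$, $i_4$ only $s_1$, and $i_1$ nothing; the other students' explanations are immediate. Yet $\BM[P]$ violates condition~(i) for $\BM$.

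The missing step is the case split from the proof of \Cref{thm:DA-BM}.
\begin{itemize}
\item If $\BM[P,\succ]$ is stable under $\succ$, then efficiency and student optimality give $\BM[P,\succ]=\DA[P,\succ]$. Conditions~(i) and~(iii) then transfer, and \Cref{lm:DA-BM} at $\succ'$ converts each explanation. That lemma, not simple explanations per se, is the relevant tool.
\item If $\BM[P,\succ]$ is unstable, you must build a \emph{different} matching via \Cref{lm:BM-unstable}. Its explanations use the realized $\succ$. These are admissible under \Cref{def:safe-weak} because $\succ$ is a tie-breaking of $\succsim$ that gives every student her observed ranks.
\end{itemize}

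\textbf{Part~(iii) needs the same precision.} Your sentence about efficiency and ``modifying'' $\mu$ is not the argument.
\begin{itemize}
\item For $\TTC$, split on whether $\TTC[P,\succ]=\DA[P,\succ]$. If they are equal, $\mu$ itself works, with explanations from \Cref{lm:DA-TTC} applied at $\succ'$. If they differ, the safe deviation of $\DA$ is $\TTC[P,\succ]$ itself: efficiency gives~(iii), and \Cref{lm:DA-TTC} at the realized $\succ$ gives explanations.
\item For $\EA$, no modification is needed. Because $\EA[P,\succ]$ weakly Pareto dominates $\DA[P,\succ]$, conditions~(i) and~(iii) for $\mu$ transfer directly.
\end{itemize}
Since every explanation is then built at either $\succ'$ or the realized $\succ$, the ``delicate point'' you flag does not arise.
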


\Cref{prop:weak} is not a direct consequence of the strict-priority results. This is because, under strict priorities, all students' explanations must use the same priority profile, whereas under weak priorities, each student's explanation may use a different tie-breaking. The proof handles this difference by applying the relevant strict-priority argument separately to each student. Specifically, fix a student $i$ and a tie-breaking $\succ'$ used in her explanation. Holding $\succ'$ fixed, the strict-priority argument constructs an explanation under the other mechanism that gives $i$ the same observation. Thus, the same safe-deviation comparison extends to weak priorities.

\Cref{prop:weak} shows that the credibility rankings do not require publicly known strict priorities. It does not, however, rank $\DA$ against $\TTC$ or $\EA$ under an arbitrary observation structure. To obtain such a ranking, we next suppose that all schools are indifferent among all students.

\begin{prop}\label{prop:weak-indifferent-DA-TTC}
Suppose that all schools are indifferent among all students under the weak priority profile. Fix any observation structure $\Gamma$.
\begin{itemize}
  \item[(i)] $\TTC$ is strictly more credible than $\DA$.
  \item[(ii)] $\EA$ is strictly more credible than $\DA$.
\end{itemize}
\end{prop}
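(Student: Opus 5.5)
The plan is to reduce both parts to one explainability inclusion and one outcome-swapping argument, and then give an example for strictness. Throughout, all schools are indifferent among all students. So a tie-breaking $\succ'$ of $\succsim$ is any profile of strict orders. Condition~(ii) of \Cref{def:safe-weak} only requires that $\succ'$ keep student $i$'s realized rank $r_{is}$ at each school $s$. The ranks of all other students are free.

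\emph{Step 1 (a necessary condition shared by all three mechanisms).} Fix $i$, a tie-breaking $\succ'$ preserving her ranks, and any profile $P'_{-i}$. Let $\varphi\in\{\DA,\TTC,\EA\}$ and $\nu=\varphi[P_i,P'_{-i},\succ']$. I claim two things.
\begin{itemize}
\item[(a)] $\nu(i)\mathrel{R_i}\emptyset$.
\item[(b)] For every school $t$ with $t\mathrel{P_i}\nu(i)$, we have $r_{it}>q_t$, i.e., $i$ is not among the top $q_t$ students at $t$.
\end{itemize}
For $\DA$, (b) follows from stability: if $i$ were among the top $q_t$, then $t$ would be filled by lower-ranked students or have an empty seat, which blocks. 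For $\EA$, it follows because $\EA$ weakly Pareto dominates $\DA$. For $\TTC$, if $i$ is among the top $q_t$ at $t$, then before $t$ exhausts its seats it must point to $i$. At that point $i$ points to $t$ or to something better, so $i$ is assigned $t$ or better.

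\emph{Step 2 (sufficiency for $\DA$).} Conversely, suppose an observation $o_i(\mu,\Gamma)$ comes from a feasible matching $\mu$ and satisfies (a) and (b). I will build a $\DA$ explanation for it.
\begin{itemize}
\item Order students at each school $t\mathrel{P_i}\mu(i)$ so that $r_{it}-1\ge q_t$ students, including those observed at $t$, sit above $i$, and fill $t$'s remaining seats with unobserved students.
\item Give every other student $j$ a one-school report $\mu(j)$ if $j$ is observed. For the rest, give a one-school report matching the fill-ins above, and any other unobserved student reports nothing.
\item Put the students assigned to $\mu(i)$ below $i$ there if needed. Since the tie-breaking is free except for $i$'s ranks, this is possible.
\end{itemize}
$\DA$ then produces the observed assignments. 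This mirrors the stable-explanation construction behind \Cref{thm:DA-stable} and the simple explanations of \Cref{lm:simple}. The main obstacle is the counting in this step. At each preferred school there must be enough unobserved students to fill the seats while respecting $i$'s fixed ranks, and these fill-ins must not collide across schools. If one student would be needed at two schools, I will instead rank the observed students at $t$ above $i$. Together, Steps 1 and 2 give the inclusion: every observation explainable under $\TTC$ or $\EA$ is explainable under $\DA$.

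\emph{Step 3 (transferring safe deviations).} Let $\varphi\in\{\TTC,\EA\}$ have a safe deviation $\mu$ under $[P,\succsim,\succ]$. Write $\delta=\DA[P,\succ]$ and $\theta=\varphi[P,\succ]$. By Step 2, $\mu$ satisfies condition~(ii) of \Cref{def:safe-weak} for $\DA$, and so does $\theta$, since it is explained by the true profile.
\begin{itemize}
\item If $\theta\neq\delta$, then $\theta$ is Pareto efficient and hence not Pareto dominated by $\delta$. So $\theta$ is a safe deviation of $\DA$.
\item If $\theta=\delta$, then $\mu\neq\delta$ and $\mu$ is not dominated by $\theta=\delta$. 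So $\mu$ itself is a safe deviation of $\DA$.
\end{itemize}
Either way $\DA$ has a safe deviation, which gives the weak rankings in (i) and (ii).

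\emph{Step 4 (strictness).} I need a single problem in which $\DA$ has a safe deviation but $\TTC$ and $\EA$ do not. I will adapt the own-assignment example from the proof of \Cref{thm:DA-TTC}(ii). Take three students and two schools with unit quotas, and choose preferences so that $\DA[P,\succ]$ is Pareto inefficient while $\TTC$ and $\EA$ both give the efficient matching. Then a Pareto-improving reassignment is a safe deviation of $\DA$. For the efficient mechanisms, I will check for each observation structure $\Gamma$ that every alternative matching $\mu$ makes some student worse off. That student is an $i$ with $\mu(i)$ below the efficient outcome, and the aim is to show via (b) and her realized ranks that she cannot explain her observation, or else that condition~(iii) fails. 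If this fails for a general $\Gamma$, I will choose ranks so that the harmed student is top at her preferred school, so that (b) alone rules the deviation out.
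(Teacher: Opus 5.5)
\textbf{Gap in Step 2.} Your Steps 1, 3 and 4 are sound in outline, and your Step 3 case split is the same as the paper's. The gap is in Step 2: conditions (a) and (b) on the observation alone do not make it explainable under $\DA$. For $\DA$ to leave $i$ at $\mu(i)$, every school $t$ with $t \mathrel{P_i} \mu(i)$ must be filled to capacity by other students ranked above $i$. These fillers must be pairwise distinct and consistent with what $i$ observes, and nothing in (a)--(b) guarantees that enough of them exist.

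Concretely, take $I=\{i,j,k\}$, $S=\{s_1,s_2\}$, $q_{s_1}=q_{s_2}=2$, and let $P_i$ rank $s_1,s_2,\emptyset$. Choose a realized tie-breaking that puts $i$ last at both schools, so $r_{is_1}=r_{is_2}=3$, and set $\Gamma(i)=\{i\}$. The observation $\mu(i)=\emptyset$ satisfies (a) and (b). Yet $\DA$ can never leave $i$ unassigned, since that would require two students to fill four seats. This observation is not $\TTC$- or $\EA$-explainable either, so (a)--(b) is simply the wrong intermediate condition. Your proposed fix, ranking the observed students at $t$ above $i$, creates no new fillers, so the counting obstacle you flag remains. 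The inclusion claimed at the end of Step 2 therefore does not follow. Step 3 inherits the gap where it uses Step 2 to make $\varphi[P,\succ]$ explainable under $\DA$.

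\textbf{The missing idea.} Work with the whole explaining matching instead of the observation. Let $\nu=\varphi[P_i,P'_{-i},\succ']$ be the $\TTC$ or $\EA$ outcome that explains $i$'s observation, in your Step 1 notation. Because $\nu$ is Pareto efficient, every school $t \mathrel{P_i} \nu(i)$ satisfies $|\nu(t)|=q_t$. Your (b) gives $q_t<r_{it}$, so you can redefine the tie-breaking to place all of $\nu(t)$ above $i$ at each such $t$ while keeping all of $i$'s ranks. Then $\nu$ is stable under $(P_i,P^{\nu}_{-i})$ and the new tie-breaking $\widehat\succ$. By \Cref{lm:DA-stable}, $\DA[P_i,P^{\nu}_{-i},\widehat\succ]=\nu$. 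This reproduces the entire matching, so it matches $i$'s observation under every $\Gamma$. The fullness conclusion $|\nu(t)|=q_t$ is exactly what your Step 1 omits and the paper's \Cref{lm:weak-indifferent-DA-TTC}(i) includes. Part (ii) of that lemma is this construction.

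\textbf{Steps 3 and 4 after the fix.} With this lemma, Step 3 goes through. Applying the same construction with $P'_{-i}=P_{-i}$ supplies the $\DA$ side of Step 4. Your Step 4 plan to reuse the example from \Cref{thm:DA-TTC}(ii), with its priorities as the realized tie-breaking, does work:
\begin{itemize}
  \item Condition (iii) forces any candidate to give $i_3$ school $s_2$.
  \item Feasibility then leaves $i_1$ or $i_2$ unassigned.
  \item (b) rules both out, because $i_1$ is ranked first at $s_1$ and $i_2$ is ranked first at $s_2$.
\end{itemize}
You should state explicitly that ruling out safe deviations under $\Gamma(i)=\{i\}$ rules them out under every $\Gamma$, since $i\in\Gamma(i)$. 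The paper uses a different three-student instance with the same logic.
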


The intuition for \Cref{prop:weak-indifferent-DA-TTC} is similar to that for \Cref{thm:DA-TTC}. Under strict priorities, students can compare observed assignments with the priority order and rule out deviations that are inconsistent with stability. When all students are tied before the lottery, the weak priority profile provides no such comparison. As a result, condition~(ii) of \Cref{def:safe-weak} rules out fewer deviations under $\DA$, while condition~(iii) rules out more deviations under the efficient mechanisms $\TTC$ and $\EA$. Therefore, both mechanisms are more credible than $\DA$.

\Cref{prop:weak-indifferent-DA-TTC} shows that observing the entire assignment is not enough to make $\DA$ credible when the lottery remains partly hidden. This is because detecting deviations through stability requires the realized tie-breaking, from which each student observes only her own rank at each school. Therefore, a designer who relies on $\DA$ for credibility should disclose enough information about the realized tie-breaking for students to verify their assignments.

To relate our results to the literature on tie-breaking, we note that the setting above corresponds to multiple tie-breaking, under which schools may use different lotteries to break ties \citep{abdulkadirouglu2009strategy,ashlagi2020matters}. The literature also considers single tie-breaking, under which all schools use the same lottery. 
We next ask whether our results extend to this setting. 
We assume that students know that all schools use the same lottery, so each innocent explanation must also use the same lottery at every school.
Even with this restriction, \Cref{prop:weak} continues to hold. The proof requires only a small change: for each student, fix the common lottery used in her explanation and apply the relevant strict-priority argument to the resulting priority profile.
By contrast, \Cref{prop:weak-indifferent-DA-TTC} does not extend to single tie-breaking. When all schools are indifferent among students, the common lottery yields a common strict priority profile. In this case, $\DA$, $\TTC$, and $\EA$ coincide with serial dictatorship and are therefore equally credible.

\subsection{Affirmative action}\label{sec:affirmative}

Admission rules often differ across types of students. Cambridge's controlled choice policy seeks socioeconomic balance across schools, Chicago allocates 70\% of selective-enrollment seats equally across four neighborhood tiers, and New York City's screened high school programs reserve seats for students with disabilities.\footnote{For more on Cambridge public school admissions, see \url{https://www.cpsd.us/administration/student-registration-center/making-your-choices/about-controlled-choice}, accessed August 25, 2026. For Chicago, see \url{https://www.cps.edu/gocps/high-school/results/selective-enrollment/}, accessed August 25, 2026. For New York City, see \url{https://www.schools.nyc.gov/enrollment/enroll-grade-by-grade/high-school/screened-admissions}, accessed August 25, 2026.} 
Type-based admissions extend beyond socioeconomic status and disability.
Chinese colleges use separate admissions tracks for applicants to arts and sports programs, and U.S. colleges use a separate recruitment process for student-athletes.\footnote{For more on admissions to arts and sports programs in China, see the Beijing policy at \url{https://jw.beijing.gov.cn/bsfw/kdx/202505/t20250507_4083518.html}, accessed August 25, 2026. For more on U.S. student-athlete admissions, see \url{https://www.ncaa.org/eligibility-center/recruiting/}, accessed August 25, 2026.}
These examples motivate us to compare the credibility of different policies that favor designated types.

We compare two common policies: minority reserves and majority quotas. Minority reserves $r=(r_s)_{s \in S}$ give minority students priority for $r_s$ seats at school $s$ but allow majority students to use unfilled reserved seats. Majority quotas $q^M = (q_s^M)_{s \in S}$ limit the number of majority students assigned to school $s$ to $q_s^M$.

Let $I_m\subseteq I$ be the set of minority students. Let $\DA^r$ denote student-proposing deferred acceptance with minority reserves $r$, and let $\DA^q$ denote student-proposing deferred acceptance with majority quotas $q^M$. 
Each student knows the set of minority students $I_m$ and the policies $r$ and $q^M$, in addition to her own preference, the priority profile, and the quota profile.

We first suppose that students observe the entire assignment. The next result shows that both mechanisms are credible.

\begin{prop}\label{prop:affirmative-full-observation}
If $\Gamma(i)=I$ for any $i\in I$, then $\DA^r$ and $\DA^q$ are credible.
\end{prop}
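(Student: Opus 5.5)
The plan is to mimic the argument behind part~(ii) of \Cref{thm:DA-stable}: under full observation, condition~(ii) of \Cref{def:safe} forces any safe deviation to be ``stable'' in the appropriate affirmative-action sense, and the student-optimality of $\DA^r$ (resp.\ $\DA^q$) among such matchings then forces the original outcome to Pareto dominate any other candidate, contradicting condition~(iii). I will therefore fix a profile $P$ and a candidate $\mu\neq\DA^r[P]$ (resp.\ $\DA^q[P]$) satisfying condition~(ii), and show that $\mu$ is Pareto dominated by the original outcome.

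First I need the relevant stability notions. For majority quotas, a matching is \emph{stable under $q^M$} if it is individually rational, respects $|\mu(s)\setminus I_m|\le q^M_s$, and has no blocking pair $(i,s)$ with $s\mathrel{P_i}\mu(i)$ where either $s$ has an empty seat and admitting $i$ would not violate the majority quota, or $i\succ_s j$ for some $j\in\mu(s)$ such that replacing $j$ with $i$ keeps the majority quota. For minority reserves, I use the notion of \citet{hafalir2013effective}: $(i,s)$ blocks if $s\mathrel{P_i}\mu(i)$ and either $s$ has an empty seat, or $i\succ_s j$ for some $j\in\mu(s)$ with $j$ majority or $i$ minority, or $i$ is minority, $j\in\mu(s)$ is majority, and fewer than $r_s$ minority students are at $s$. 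The known facts I will invoke are that $\DA^q$ and $\DA^r$ produce the student-optimal matching among those stable in the respective sense (\citealp{hafalir2013effective}, and the standard reduction of majority quotas to a many-to-one problem with type-specific capacities, which is a substitutable choice function, so the generalized DA of Hatfield--Milgrom is student-optimal stable).

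Second, the key step: under full observation, condition~(ii) for student $i$ says there is $P'_{-i}$ with $\DA^r[P_i,P'_{-i}]=\mu$ exactly, since $i$ observes the whole matching. Because $\DA^r[P_i,P'_{-i}]$ is stable under $(P_i,P'_{-i})$, no pair $(i,s)$ involving student $i$ can block $\mu$ under $P_i$; the blocking conditions for a pair $(i,s)$ depend only on $i$'s own preference, the priorities, $I_m$, $r$, $q^M$, and $\mu$, none of which depend on $P'_{-i}$. Individual rationality for $i$ likewise depends only on $P_i$. Ranging over all $i$, $\mu$ has no blocking pair and is individually rational under the true $P$, so $\mu$ is stable (in the respective sense) under $P$. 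I expect this step to be the one requiring care: I must verify that each version of the blocking condition is genuinely ``local'' to the pair $(i,s)$, in particular for minority reserves, where the reserve clause refers to the count of minority students at $s$ under $\mu$, which is observed, so locality still holds.

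Finally, student-optimality gives $\DA^r[P](i)\mathrel{R_i}\mu(i)$ for all $i$; since $\mu\neq\DA^r[P]$ and preferences are strict, some student strictly prefers her $\DA^r[P]$ assignment, so $\DA^r[P]$ Pareto dominates $\mu$, violating condition~(iii). Hence no safe deviation exists, and the same argument applies verbatim to $\DA^q$.
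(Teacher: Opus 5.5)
\paragraph{Same structure as the paper.} Your route is exactly the paper's proof. Under full observation, condition~(ii) gives $\DA^\alpha[P_i,P'_{-i}]=\mu$. Any pair $(i,s)$ blocking $\mu$ under $P$ would also block it under $(P_i,P'_{-i})$, so $\mu$ is stable under the policy. Student-optimality \citep{hafalir2013effective} then makes $\DA^\alpha[P]$ Pareto dominate $\mu$. Your majority-quota blocking notion is fine: it is equivalent to saying that the quota-constrained choice rule at $s$ would select $i$ from $\mu(s)\cup\{i\}$.

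\paragraph{The gap: your minority-reserve notion is too weak.} It omits the case where a majority student $i$ with $s\mathrel{P_i}\mu(i)$ has higher priority than some minority student $j\in\mu(s)$ while $|\mu(s)\cap I_m|>r_s$. In that case the lowest-priority minority student at $s$ sits in an open seat, and the reserve choice rule of $\DA^r$ would take $i$ instead.

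With the clause omitted, your final step is false. Take the following problem:
\begin{itemize}
\item a school $s$ with $q_s=2$, $r_s=1$, and a school $t$ with $q_t=1$, $r_t=0$;
\item minority students $a,b$ and a majority student $c$, with $a\succ_s c\succ_s b$;
\item $a$ finds only $s$ acceptable, while $b$ and $c$ rank $s$ above $t$.
\end{itemize}
Then $\DA^r[P]$ assigns $\{a,c\}$ to $s$ and $b$ to $t$. Now let $\nu$ assign $\{a,b\}$ to $s$ and $c$ to $t$. Under your definition $\nu$ has no blocking pair: the only candidate is $(c,s)$, but $c$ is majority and $\nu(s)$ contains only minority students. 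Yet $b$ strictly prefers $\nu$. So $\DA^r[P]$ is not student-optimal among matchings that are stable in your sense, and the Pareto-domination conclusion does not follow.

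\paragraph{The repair.} Add the missing clause, or define blocking as ``$s\mathrel{P_i}\mu(i)$ and the reserve choice rule at $s$ selects $i$ from $\mu(s)\cup\{i\}$.'' With the clause, $(c,s)$ blocks $\nu$. The clause still depends only on $\mu$, $P_i$, $\succ_s$, $I_m$, and $r_s$, so your locality step goes through unchanged. The student-optimality result of \citet{hafalir2013effective} then applies, and the rest of your argument works exactly as in the paper.
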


The intuition for \Cref{prop:affirmative-full-observation} is as follows. Under full observation, condition~(ii) of \Cref{def:safe} requires any candidate deviation to be stable under the relevant policy. Moreover, for either policy, deferred acceptance selects the student-optimal matching among those satisfying this stability requirement \citep{hafalir2013effective}, so its outcome Pareto dominates any distinct candidate. Therefore, neither mechanism has a safe deviation.

Since both mechanisms are credible under full observation, \Cref{prop:affirmative-full-observation} does not rank the two policies. We therefore turn to a coarser observation structure in which a student observes her own assignment and the assignments of any subset of minority students, but not those of the other majority students.

\begin{prop}\label{prop:affirmative-reserve-quota}
Suppose $r_s+q_s^M=q_s$ and $q_s^M>0$ for any $s\in S$. Fix any observation structure $\Gamma$. If $\Gamma(i)\subseteq I_m\cup\{i\}$ for any $i\in I$, then $\DA^r$ is strictly more credible than $\DA^q$.
\end{prop}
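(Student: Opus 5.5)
The plan is to show that, whenever $\mu$ is a safe deviation of $\DA^r$ under $P$, it is also a safe deviation of $\DA^q$ under $P$. We check the three conditions of \Cref{def:safe} one at a time, and then give an example for strictness.

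\textbf{Conditions~(i) and~(iii).} These rely on the welfare comparison of \citet{hafalir2013effective}: when $r_s+q^M_s=q_s$ at every school, $\DA^r[P]$ weakly Pareto dominates $\DA^q[P]$.
\begin{itemize}
\item For~(iii): if $\DA^q[P]$ Pareto dominated $\mu$, then $\DA^r[P]$ would also Pareto dominate $\mu$, by transitivity of weak domination combined with one strict improvement. This contradicts~(iii) for $\DA^r$.
\item For~(i): suppose $\mu=\DA^q[P]$. Then $\DA^r[P]$ weakly dominates $\mu$. Since $\mu$ is not Pareto dominated by $\DA^r[P]$, every student is indifferent. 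Because preferences are strict, this forces $\DA^r[P]=\mu$, contradicting~(i) for $\DA^r$.
\end{itemize}

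\textbf{Condition~(ii), the substantive step.} Fix $i$ and an innocent explanation $P'_{-i}$ under $\DA^r$, and write $\nu=\DA^r[P_i,P'_{-i}]$. Student $i$ observes only $\nu(i)$ and the assignments of some minority students. I would first reduce to a \emph{simple} explanation, in the spirit of \Cref{lm:simple}:
\begin{itemize}
\item every other student lists at most one acceptable school, namely her assignment under $\nu$;
\item every student unmatched under $\nu$ lists none.
\end{itemize}
Since $\DA^r$ is stable with respect to reserves, this profile still produces $\nu(i)$ for $i$ and the same observed minority assignments. The reason is that the students assigned to each school $s'$ that $i$ prefers to $\nu(i)$ are exactly the ones that caused $i$'s rejection in the reserve choice function.

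I would then modify the fillers at each such $s'$ so that the set of applicants is \emph{balanced}: at most $q^M_{s'}$ majority students together with minority students. The point is that on balanced applicant sets the reserve choice function and the quota choice function coincide when $r_s+q^M_s=q_s$. The modification keeps:
\begin{itemize}
\item the observed minorities unchanged;
\item the rejection of $i$, which requires $r_{s'}$ minorities ranked above $i$ (if $i$ is a minority) or $q^M_{s'}$ majorities above $i$ together with a full school (if $i$ is a majority).
\end{itemize}
Under such a profile, $\DA^q$ yields the same observation for $i$.

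\textbf{Expected main obstacle.} I expect the hard part to be this balancing step, namely showing that a reserve explanation in which $s'$ is filled partly by majority students occupying unused reserved seats can be rearranged into a quota-feasible one.
\begin{itemize}
\item When $i$ is a majority student, the quota only makes rejection easier: $q^M_{s'}$ higher-priority majorities suffice to block $i$. So I would keep the majority fillers ranked above $i$ and drop any excess.
\item When $i$ is a minority student, the relevant fillers are minorities and at most $q^M_{s'}$ majorities above $i$, and these form a balanced set.
\end{itemize}
The hypothesis $q^M_s>0$ guarantees that the majority fillers needed to occupy non-reserved seats remain admissible. Unobserved students can be reassigned freely because $\Gamma(i)\subseteq I_m\cup\{i\}$, and this is exactly where the observation hypothesis enters.

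\textbf{Strictness.} I would build a small example: two schools, one minority student and two majority students, with $q_s=2$, $r_s=q^M_s=1$, and $\Gamma(i)=\{i\}$. In it, $\DA^q$ leaves a reserved seat empty and strands a majority student. Moving her into that seat is a safe deviation of $\DA^q$, since everyone can explain their own assignment and the move benefits her. By contrast, $\DA^r[P]$ already fills that seat and is Pareto efficient within the set of explainable matchings, so $\DA^r$ has no safe deviation under this $P$.
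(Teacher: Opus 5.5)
Your handling of conditions~(i) and~(iii) matches the paper: you use the weak Pareto dominance of $\DA^r$ over $\DA^q$ from \citet{hafalir2013effective}. For condition~(ii), your approach is also the paper's: an explanation in which every other student lists at most her own assignment, unobserved majority students are discarded, and a case analysis is done at each school $i$ prefers to her assignment. Your analysis at those preferred schools is sound.

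\textbf{The gap: the school $\nu(i)$ itself.} Your construction leaves $\nu(i)$ untouched, so every student of $\nu(\nu(i))$ still applies there. Under reserves, a majority student $i$ can sit at $\nu(i)$ alongside at least $q^M_{\nu(i)}$ higher-priority majority students who occupy unused reserved seats. For instance, take $q_s=2$, $r_s=q^M_s=1$, and $\nu(s)=\{j,i\}$, where $j$ and $i$ are majority students with $j\succ_s i$ and no minority student applies to $s$. If $j$ lists $s$, then $\DA^q$ admits only $j$ and rejects $i$. So your claim that ``under such a profile, $\DA^q$ yields the same observation for $i$'' fails.

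\textbf{The repair.} Do what the paper does: every majority student other than $i$ who is not assigned to a school $i$ prefers to $\nu(i)$ reports no acceptable school. This is legitimate because such students are unobserved. Then at $\nu(i)$ only minority students compete with $i$, and there are at most $q_{\nu(i)}-1$ of them. A majority $i$ is admitted because $q^M_{\nu(i)}>0$.

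This is exactly where the hypothesis $q^M_s>0$ is needed. If $q^M_{\nu(i)}=0$, no profile lets $\DA^q$ place a majority $i$ at $\nu(i)$. The role you assign to $q^M_s>0$ is therefore misplaced: at schools $i$ prefers, a smaller majority quota only makes rejecting a majority $i$ easier, and no majority fillers are needed when a minority $i$ is rejected.

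\textbf{Two smaller points.}
\begin{itemize}
\item The detour through a simple $\DA^r$ explanation is correct but unnecessary. You can apply the construction directly to $\nu$.
\item Your strictness example must work for the fixed $\Gamma$, not only for $\Gamma(i)=\{i\}$. Choose preferences so that $\DA^r[P]$ gives every student her top choice, as in the paper. Then $\DA^r$ trivially has no safe deviation. Since the deviation of $\DA^q$ is $\DA^r[P]$ itself, the repaired general construction supplies explanations for every $\Gamma$ with $\Gamma(i)\subseteq I_m\cup\{i\}$.
\end{itemize}
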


The intuition for \Cref{prop:affirmative-reserve-quota} has two parts.
First, when some majority students' assignments are unobserved, each student's observation of a $\DA^r$ outcome can be reproduced by an explanation under $\DA^q$. This is possible because minority students' assignments remain feasible under the corresponding majority quotas, and any necessary changes affect only unobserved majority students.
Second, in this setting, $\DA^r$ weakly Pareto dominates $\DA^q$ \citep{hafalir2013effective}. This is because minority reserves allow majority students to use unclaimed reserved seats, whereas majority quotas may leave those seats empty. This welfare advantage makes condition~(iii) of \Cref{def:safe} more restrictive under $\DA^r$. 
Together, these arguments explain the credibility ranking.

\Cref{prop:affirmative-reserve-quota} shows that disclosing minority students' assignments is not enough to make $\DA^q$ credible, a limitation that also appears in practice.
Before 2024, some Chinese universities admitted students with artistic talent under policies similar to majority quotas.\footnote{For more on admissions for students with artistic talent in China, see the 2021 policy of Tsinghua University at \url{https://join-tsinghua.edu.cn/info/1033/1364.htm}, accessed August 25, 2026.} 
Although universities were required to disclose qualified candidates and admission requirements, this policy still raised concerns about opaque selection and favoritism.\footnote{For more on this requirement, see \url{http://www.moe.gov.cn/srcsite/A15/moe_776/tslxzs/202211/t20221111_984077.html}, accessed August 25, 2026. For more on these concerns, see Jia Song, Ying Xie, and Xiyuan Chen, \textit{Multiple Universities End High-Level Art-Troupe Admissions Ahead of Schedule, Narrowing a Shortcut to Prestigious Universities}, Xinhua News Agency, March 1, 2022, \url{https://www.news.cn/politics/2022-03/01/c_1128426776.htm}, accessed August 25, 2026.} 
These concerns, among other reasons, led China to end this admission policy in 2024.\footnote{For more on the end of this policy, see \url{http://www.moe.gov.cn/srcsite/A15/moe_776/s3109/202109/t20210923_566071.html}, accessed August 25, 2026.} In our framework, these concerns reflect a credibility problem: even with the minority students' assignments disclosed, $\DA^q$ is not credible, and it is strictly less credible than $\DA^r$.

\section{Conclusion}\label{sec:conclusion}

The success of a centralized school choice system depends not only on the desirability of its assignments but also on whether families trust the designer to follow the announced rule. 
This trust may be difficult to maintain if families cannot rule out that the designer has deviated from the rule to favor some students. 
Just as mechanisms are compared by other desirable properties, we compare widely used school choice mechanisms by their credibility, namely their robustness to such deviations. 
How credible a mechanism is, however, depends strongly on how much of the assignment is disclosed. 
For this reason, allowing families to observe only part of the realized assignment is a central feature of our analysis.

Our main conclusion is that credibility depends on both the mechanism and what families observe about the realized assignment.
On the one hand, deferred acceptance is more credible than the Boston mechanism and any other stable mechanism, regardless of how much of the assignment families observe. 
On the other hand, its ranking relative to top trading cycles and efficiency-adjusted deferred acceptance reverses with the amount of disclosure: deferred acceptance is more credible when families observe the entire assignment, whereas the two efficient mechanisms are more credible when families observe only their own assignment. 
This suggests that deferred acceptance is the most credible choice in systems where the designer is willing to disclose the entire assignment. 
When such disclosure is not feasible, however, alternatives such as top trading cycles or efficiency-adjusted deferred acceptance may be preferable.


\appendix
\section{Omitted proofs}\label[appendix]{sec:proof}

Given $\mu\in\mathcal{M}$, let $P^\mu$ be the preference profile under which, for any student $i$, $\mu(i)$ is her only acceptable school if $\mu(i)\in S$, and no school is acceptable if $\mu(i)=\emptyset$.
Let $R_i^\mu$ denote the weak preference relation induced by $P_i^\mu$.

\subsection{Proof of Section~\ref{sec:main}}

\subsubsection{Proof of Theorem~\ref{thm:DA-stable}}

\begin{lm}\label{lm:DA-stable}
Fix a strict priority profile $\succ$, a preference profile $P$, a student $i\in I$, and a matching $\mu$ that is stable under $P$ and $\succ$. For any stable mechanism $\varphi$, $\varphi[P_i,P^\mu_{-i},\succ]=\mu$.
\end{lm}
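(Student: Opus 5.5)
The plan is to show that under the profile $Q=(P_i,P^\mu_{-i})$ the matching $\mu$ is the \emph{unique} stable matching. Since $\varphi$ is stable, $\varphi[Q,\succ]$ is stable under $Q$ and $\succ$, so it must equal $\mu$. The argument has two steps: first, $\mu$ is stable under $Q$; second, any matching $\nu$ that is stable under $Q$ coincides with $\mu$.

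\emph{Step 1: $\mu$ is stable under $Q$.} Individual rationality is immediate. Student $i$ keeps $P_i$ and $\mu$ is individually rational under $P$. Every $j\neq i$ finds $\mu(j)$ acceptable under $P^\mu_j$, or is unmatched with nothing acceptable. For blocking pairs, no $j\neq i$ can be part of one, since the only school $j$ finds acceptable is $\mu(j)$. A blocking pair $(i,s)$ would require $s\mathrel{P_i}\mu(i)$ together with either an empty seat at $s$ or some $j\in\mu(s)$ with $i\succ_s j$. Because $i$'s preference is unchanged, such a pair would already block $\mu$ under $P$, contradicting the stability of $\mu$.

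\emph{Step 2: uniqueness.} Let $\nu$ be stable under $Q$. Every $j\neq i$ has at most one acceptable school, so individual rationality gives $\nu(j)\in\{\mu(j),\emptyset\}$. I would first show that $\nu(i)=\mu(i)$.
\begin{itemize}
\item Suppose $\nu(i)\mathrel{P_i}\mu(i)$, and let $s=\nu(i)$. Then under $\mu$ the school $s$ has no empty seat, and every student in $\mu(s)$ has higher priority than $i$; otherwise $(i,s)$ would block $\mu$. Since $i\in\nu(s)\setminus\mu(s)$ and $|\nu(s)|\le q_s=|\mu(s)|$, some $j\in\mu(s)$ is not in $\nu(s)$. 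This $j$ satisfies $j\neq i$ and $\nu(j)=\emptyset$. Then $j$ prefers $s$ to $\emptyset$ under $P^\mu_j$ and $j\succ_s i$ with $i\in\nu(s)$, so $(j,s)$ blocks $\nu$, a contradiction.
\item Suppose $\mu(i)\mathrel{P_i}\nu(i)$, and let $s=\mu(i)\in S$. Every other student $j$ is either at $\mu(j)$ or unmatched, so $\nu(s)\subseteq\mu(s)\setminus\{i\}$. Hence $s$ has an empty seat under $\nu$, and $(i,s)$ blocks $\nu$, a contradiction.
\end{itemize}
By strictness of $P_i$, these two cases give $\nu(i)=\mu(i)$.

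It remains to handle the students $j\neq i$. Suppose some $j\neq i$ has $\nu(j)=\emptyset\neq\mu(j)=s$. Every student in $\nu(s)$ is in $\mu(s)$: for $i$ this holds because $\nu(i)=\mu(i)$, and for the others because $\nu(k)\in\{\mu(k),\emptyset\}$. Thus $|\nu(s)|<|\mu(s)|\le q_s$, so $s$ has an empty seat under $\nu$ and $(j,s)$ blocks $\nu$, a contradiction. Therefore $\nu=\mu$.

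I expect the first case of Step 2, $\nu(i)\mathrel{P_i}\mu(i)$, to be the main obstacle. It is the only place where the stability of $\mu$ under the \emph{original} profile $P$, through the priority comparison at $s$, is essential. The counting argument for displacing some $j\in\mu(s)$ must use that $s$ is full under $\mu$. All the other steps are routine.
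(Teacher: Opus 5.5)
Your proof is correct, but it takes a different route from the paper's. Both arguments start from the same fact, that $\mu$ remains stable under $(P_i,P^\mu_{-i})$; the paper asserts this without proof, while you verify it.

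From there the paper invokes the Rural Hospitals Theorem. Since $\mu$ and $\eta=\varphi[P_i,P^\mu_{-i},\succ]$ are both stable under the same profile, they match the same students and give every school the same enrollment. The single-school lists in $P^\mu_{-i}$ then pin down every $j\neq i$, and equal enrollments pin down $i$.

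You instead prove directly that $\mu$ is the unique stable matching under $(P_i,P^\mu_{-i})$. You first fix $i$'s assignment by ruling out two cases:
\begin{itemize}
\item $\nu(i)\mathrel{P_i}\mu(i)$, since a displaced higher-priority member of $\mu(s)$ would block $\nu$;
\item $\mu(i)\mathrel{P_i}\nu(i)$, since $\mu(i)$ would have an empty seat under $\nu$.
\end{itemize}
A vacancy argument then fixes everyone else.

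The paper's route is shorter because it delegates the counting to a standard but nontrivial structural result about the set of stable matchings. Yours is fully elementary and self-contained. It also shows exactly where the priority comparisons at schools $i$ prefers to $\mu(i)$ enter, which the paper's appeal to the theorem leaves implicit.

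Two trivial points you use without stating: in your first case $\nu(i)\in S$ because $\mu(i)\mathrel{R_i}\emptyset$, and in your second case $\mu(i)\in S$ because $\nu(i)\mathrel{R_i}\emptyset$.
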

  
\begin{proof}[Proof of \Cref{lm:DA-stable}]
Let $\eta=\varphi[P_i,P^\mu_{-i},\succ]$. The stability of $\varphi$ ensures that $\eta$ is stable under $(P_i,P^\mu_{-i})$.
Since $\mu$ is stable under $P$, it is also stable under $(P_i,P^\mu_{-i})$.
By the Rural Hospitals Theorem~\citep{gale1985some,roth1986allocation}, the set of matched students is the same in $\mu$ and $\eta$, and $|\mu(s)|=|\eta(s)|$ at every school $s\in S$.

Fix a student $j\neq i$. If $\mu(j)=\emptyset$, then $\eta(j)=\emptyset$. If $\mu(j)\in S$, then $j$ is matched under $\eta$, and individual rationality together with the definition of $P^\mu_j$ gives $\eta(j)=\mu(j)$. Moreover, equality of enrollment at every school implies $\eta(i)=\mu(i)$. Therefore, $\varphi[P_i,P^\mu_{-i},\succ]=\eta=\mu$.
\end{proof}

\begin{proof}[Proof of \Cref{thm:DA-stable}]
We start by proving part~(i). Fix a stable mechanism $\varphi$, and let $\mu$ be a safe deviation of $\DA$ under $P \in \mathcal{P}$. We show that $\mu$ is also a safe deviation of $\varphi$. Fix a student $i\in I$. By condition~(ii) of \Cref{def:safe}, there exists $P' \in \mathcal{P}$ such that $o_i(\DA[P_i,P'_{-i}],\Gamma)=o_i(\mu,\Gamma)$. Let $\eta=\DA[P_i,P'_{-i}]$. Since $\eta$ is stable under $(P_i,P'_{-i})$, \Cref{lm:DA-stable} gives $\varphi[P_i,P^\eta_{-i}]=\eta$. Thus, $o_i(\varphi[P_i,P^\eta_{-i}],\Gamma)=o_i(\eta,\Gamma)=o_i(\mu,\Gamma)$.

It remains to verify conditions~(i) and~(iii) of \Cref{def:safe}.
Suppose first that $\mu=\varphi[P]$. Then, the student optimality of $\DA$ and $\mu\neq\DA[P]$ imply that $\DA[P]$ Pareto dominates $\varphi[P]=\mu$, contradicting condition~(iii) of \Cref{def:safe}. Thus, $\mu\neq\varphi[P]$.
Suppose second that $\mu$ is Pareto dominated by $\varphi[P]$. Then, the student optimality of $\DA$ implies that $\DA[P]$ Pareto dominates $\mu$, again contradicting condition~(iii) of \Cref{def:safe}. 
Therefore, $\mu$ is a safe deviation of $\varphi$ and $\DA$ is more credible than $\varphi$.

To prove part~(ii), we first show that if $\Gamma(i) = I$ for any $i \in I$, $\DA$ is credible. Suppose, toward a contradiction, that $\mu$ is a safe deviation of $\DA$ under some $P\in \mathcal{P}$. By Theorem~3 of \citet{moller2026transparent}, $\mu$ is stable under $P$. Since $\DA$ is the student-optimal stable mechanism and $\mu\neq \DA[P]$, $\DA[P]$ Pareto dominates $\mu$. This contradicts condition~(iii) of \Cref{def:safe}. Therefore, $\DA$ is credible when students observe the entire assignment.

On the other hand, suppose, toward a contradiction, that there exists a credible stable mechanism $\varphi \neq \DA$. Fix $P \in \mathcal{P}$ such that $\varphi[P] \neq \DA[P]$, and let $\eta = \DA[P]$. Since $\eta$ is the student-optimal stable matching, $\eta$ is not Pareto dominated by $\varphi[P]$.
Fix a student $i\in I$. Applying \Cref{lm:DA-stable} to the matching $\eta$ gives $\varphi[P_i,P^\eta_{-i}]=\eta$ and thus $o_i(\varphi[P_i,P^\eta_{-i}],\Gamma)=o_i(\eta,\Gamma)$.
Therefore, $\eta$ is a safe deviation of $\varphi$ under $P$, which contradicts the fact that $\varphi$ is credible.
\end{proof}

\subsubsection{Proof of Theorem~\ref{thm:DA-BM}}

\begin{lm}\label{lm:DA-BM}
Fix a strict priority profile $\succ$, a preference profile $P$, a student $i\in I$, and a matching $\mu$ that is stable under $P$ and $\succ$. Then $\BM[P_i,P^\mu_{-i},\succ]=\mu$.
\end{lm}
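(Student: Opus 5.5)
We need to show that $\BM[P_i,P^\mu_{-i},\succ]=\mu$ whenever $\mu$ is stable under $P$ and $\succ$. The plan is to run the Boston mechanism on the profile $(P_i,P^\mu_{-i})$ and track it round by round. Every student $j\neq i$ lists at most one acceptable school, namely $\mu(j)$ if she is matched. So each such student applies only in round~1, to $\mu(j)$, and is otherwise unassigned. Student $i$ applies in round~$t$ to her $t$-th choice under $P_i$.

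First, consider the students $j\neq i$ in round~1. At any school $s$, the round-1 applicants are $\mu(s)\setminus\{i\}$, together with $i$ herself if $s$ is her top choice. Since $|\mu(s)|\le q_s$, there is room for all of $\mu(s)\setminus\{i\}$. The only possible overflow is at $i$'s top school $s_1$, when $s_1\neq\mu(i)$ and $|\mu(s_1)|=q_s$. In that case stability of $\mu$ rules out a blocking pair $(i,s_1)$: because $s_1 \mathrel{P_i} \mu(i)$ and the school is full, every $j\in\mu(s_1)$ has $j\succ_{s_1} i$. So $s_1$ rejects $i$ and keeps exactly $\mu(s_1)$. If $s_1\neq\mu(i)$ and $|\mu(s_1)|<q_{s_1}$, then $(i,s_1)$ would block $\mu$, since $\mu$ is individually rational and $s_1$ is acceptable to $i$ (or $s_1$ is unacceptable, in which case she does not apply). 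Hence in round~1 every $j\neq i$ is permanently assigned to $\mu(j)$, and $i$ is assigned to $s_1$ exactly when $s_1=\mu(i)$.

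Next, I handle the later rounds for $i$ by induction. Take any school $s$ with $s \mathrel{P_i} \mu(i)$. After round~1 its permanently filled seats are $\mu(s)\setminus\{i\}=\mu(s)$. Stability forces $|\mu(s)|=q_s$, since an unfilled seat would give the blocking pair $(i,s)$. So $s$ has no remaining capacity, and $i$'s application to it in a later round is rejected. When $i$ reaches $\mu(i)$, if $\mu(i)\in S$, the school has $|\mu(\mu(i))|-1<q_{\mu(i)}$ seats filled, so at least one seat remains and $i$ is accepted. If $\mu(i)=\emptyset$, individual rationality implies that $i$ runs through all her acceptable schools, each of which is full, and she ends unmatched. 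Schools $i$ ranks below $\mu(i)$ are never reached. The resulting matching is therefore $\mu$.

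The main obstacle is the bookkeeping in round~1 at $i$'s top choice, where priorities actually matter. The key step there is using stability to argue that $i$ is rejected by every school she prefers to $\mu(i)$: those schools are full under $\mu$, and in round~1 they are full of higher-priority students. No Rural Hospitals argument is needed, because the round structure pins down the outcome directly.
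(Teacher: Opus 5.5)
Your proof is correct, but it takes a different route from the paper's.

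You simulate $\BM$ directly on $(P_i,P^\mu_{-i})$. Every $j\neq i$ applies only in round~1, and only to $\mu(j)$, and she is accepted: the only possible overflow is at $i$'s top choice, where stability makes every member of $\mu(s_1)$ outrank $i$. After round~1 only $i$ is active. Every school she prefers to $\mu(i)$ is full by stability, and $\mu(i)$ has a free seat by counting.

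The paper shares only your first observation. Schools that $i$ prefers to $\mu(i)$ are filled in round~1 by higher-priority students, so $\mu(i)\mathrel{R_i}\BM[P_i,P^\mu_{-i}](i)$. It then rules out $i$ doing strictly worse by Pareto efficiency of $\BM$: $\mu$ gives every $j\neq i$ her unique acceptable school, so $\mu$ would otherwise Pareto dominate the outcome. Finally, it pins down everyone else's assignment by nonbossiness of $\BM$ \citep{harless2014school}, comparing with $\BM[P^\mu]=\mu$.

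Your argument is self-contained and needs no external nonbossiness result. It also shows explicitly that only round~1 matters for students other than $i$. The paper's argument is shorter and reuses the efficiency/nonbossiness template that recurs in \Cref{lm:BM-unstable} and \Cref{lm:simple}.

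There are three cosmetic slips:
\begin{itemize}
\item $|\mu(s_1)|=q_s$ should read $|\mu(s_1)|=q_{s_1}$.
\item When $\mu(i)=\emptyset$, each acceptable school is full because it is preferred to $\emptyset=\mu(i)$ and $\mu$ is stable, not because of individual rationality.
\item What individual rationality of $\mu$ actually supplies is that $\mu(i)$, when it is a school, is on $i$'s acceptable list, so she does reach it.
\end{itemize}
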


\begin{proof}[Proof of \Cref{lm:DA-BM}]
Let $s=\BM[P_i,P^\mu_{-i},\succ](i)$.
Suppose first that $s\mathrel{P_i}\mu(i)$. The stability of $\mu$ under $P$ implies that $s$ is full under $\mu$ and any student in $\mu(s)$ has higher priority than $i$ at $s$. Since any student in $\mu(s)$ applies to $s$ in the first round under $(P_i,P^\mu_{-i})$, $\BM$ cannot assign $s$ to $i$, a contradiction. Thus, $\mu(i)\mathrel{R_i}s$.

Suppose next that $\mu(i)\mathrel{P_i}s$. For any student $j\neq i$, the definition of $P^\mu_j$ and individual rationality imply that $\mu(j)\mathrel{R^\mu_j}\BM[P_i,P^\mu_{-i},\succ](j)$. Thus, $\mu$ Pareto dominates $\BM[P_i,P^\mu_{-i},\succ]$ under $(P_i,P^\mu_{-i})$, contradicting the Pareto efficiency of $\BM$. Therefore, $\BM[P_i,P^\mu_{-i},\succ](i)=\mu(i)$. 

Since $\BM[P^\mu,\succ]=\mu$ by construction, we have $\BM[P_i,P^\mu_{-i},\succ](i)=\BM[P^\mu,\succ](i)$.
Since $\BM$ is nonbossy \citep{harless2014school}, changing only $i$'s report from $P^\mu_i$ to $P_i$ without changing her assignment cannot change the matching. Thus, $\BM[P_i,P^\mu_{-i},\succ]=\BM[P^\mu,\succ]=\mu$.
\end{proof}

\begin{lm}\label{lm:BM-unstable}
Fix a strict priority profile $\succ$ and a preference profile $P$. If $\BM[P,\succ]$ is unstable, then $\BM$ has a safe deviation under $[P,\succ]$ for any observation structure.
\end{lm}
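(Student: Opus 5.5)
The plan is to exhibit an explicit safe deviation $\mu$ of $\BM$ under $[P,\succ]$ that every student can explain \emph{completely}, whatever she observes. I will take $\mu$ to be a matching that is stable under $P$ and $\succ$, with $\DA[P]$ as the first candidate. Because each student's explanation will reproduce the entire matching $\mu$, the observation structure $\Gamma$ plays no role, which is why the statement holds for every $\Gamma$. Write $\nu=\BM[P,\succ]$, which by hypothesis is unstable.

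\emph{Conditions (i) and (ii).} Condition~(ii) of \Cref{def:safe} follows directly from \Cref{lm:DA-BM}. For each student $i$, the profile $P^\mu_{-i}$ is an innocent explanation, since $\BM[P_i,P^\mu_{-i},\succ]=\mu$ gives $o_i(\BM[P_i,P^\mu_{-i}],\Gamma)=o_i(\mu,\Gamma)$ for every $\Gamma$. Condition~(i) holds because $\mu$ is stable while $\nu$ is not, so $\mu\neq\nu$.

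\emph{Condition (iii).} What remains is to ensure that $\mu$ is not Pareto dominated by $\nu$, and I expect this to be the main obstacle. The concern is that $\BM$ is Pareto efficient while $\DA$ is not, so a priori $\nu$ might Pareto dominate $\DA[P]$. My plan is to suppose that $\nu$ Pareto dominates $\eta=\DA[P]$ and derive a contradiction with the instability of $\nu$.
\begin{itemize}
\item Let $(i,s)$ be a blocking pair of $\nu$. Then $s\mathrel{P_i}\nu(i)\mathrel{R_i}\eta(i)$.
\item Stability of $\eta$ then forces $s$ to be full under $\eta$, with every $k\in\eta(s)$ satisfying $k\succ_s i$. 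Since $\nu$ dominates $\eta$, each such $k$ also has $\nu(k)\mathrel{R_k}s$.
\item I would combine this with the round structure of $\BM$. If $i$ was rejected by $s$ at some round $t$, then $s$ was filled by round $t$ either by students ranking $s$ higher than $i$ does or by higher-priority students in the same round. The blocking pair then pins down a student $j\in\nu(s)$ with $i\succ_s j$ (or an empty seat), which should contradict the chain of weak improvements over $\eta$. A counting argument at $s$ and at the schools these students move to may be needed here.
\end{itemize}

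\emph{Fallback.} If this contradiction cannot be obtained in general, I will instead choose $\mu$ among all stable matchings under $P$, using that condition~(ii) holds for every stable matching by \Cref{lm:DA-BM}. It then suffices to find one stable matching $\mu$ and one student who strictly prefers $\mu$ to $\nu$. I would look for such a student among the blocking student $i$ or the students displaced at $s$, using the lattice structure of stable matchings to move toward a stable matching that favors them.
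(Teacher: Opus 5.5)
Your conditions (i) and (ii) are fine for any stable $\mu$; this is exactly how the paper handles the case where $\BM[P]$ is stable in the proof of \Cref{thm:DA-BM}. The gap is condition (iii), and it cannot be closed by any counting argument. The contradiction you aim for in the first bullet runs the wrong way. If $\nu=\BM[P]$ strictly Pareto dominates $\DA[P]$, then $\nu$ is \emph{necessarily} unstable, because $\DA[P]$ is the student-optimal stable matching.

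This case does occur. Let $q_{s_1}=q_{s_2}=1$, and let $i_1$ rank $s_2,s_1,\emptyset$, $i_2$ rank $s_1,s_2,\emptyset$, and $i_3$ rank $s_2,s_1,\emptyset$. Let $\succ_{s_1}$ be $i_1,i_3,i_2$ and $\succ_{s_2}$ be $i_2,i_1,i_3$.
\begin{itemize}
\item Under $\DA$, $i_3$ is rejected at $s_2$ and displaces $i_2$ at $s_1$. Then $i_2$ displaces $i_1$ at $s_2$, and $i_1$ displaces $i_3$ at $s_1$. So $\DA[P]$ assigns $(i_1,i_2,i_3)$ to $(s_1,s_2,\emptyset)$, which is the unique stable matching.
\item Under $\BM$, $i_1$ and $i_2$ are accepted at their first choices in round one, since $i_1$ beats $i_3$ at $s_2$. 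So $\BM[P]$ assigns $(s_2,s_1,\emptyset)$. This is unstable, since $(i_3,s_1)$ blocks, yet it strictly Pareto dominates $\DA[P]$.
\end{itemize}
Since $\DA[P]$ weakly dominates every stable matching, $\BM[P]$ Pareto dominates every stable matching. Both your main plan and your fallback therefore fail here.

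The missing idea is that the deviation must itself be unstable and must favor the blocking student at someone else's expense. The paper proceeds as follows.
\begin{itemize}
\item Take a blocking pair $(i^*,s)$ of $\nu$. Fewer than $q_s$ students above $i^*$ can apply to $s$ in round one, since otherwise the lower-priority $j\in\nu(s)$ could not be there. Hence $i^*$ obtains $s$ by listing it alone.
\item Let $s^*$ be the best school $i^*$ can secure by some unilateral report $P^*_{i^*}$, and set $\eta=\BM[P^*_{i^*},P_{-i^*}]$.
\item Every $i\neq i^*$ explains $\eta$ completely with the profile in which only $i^*$ misreports.
\item Student $i^*$ explains $\eta$ with $P^\eta_{-i^*}$. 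This works because every school she prefers to $s^*$ must be filled in round one by students above her (otherwise she could obtain it), and under $P^\eta$ those same students apply there first.
\item Condition (iii) holds because $s^*\mathrel{R_{i^*}}s\mathrel{P_{i^*}}\nu(i^*)$.
\end{itemize}
In the example above, this construction gives $\eta$ with $(i_1,i_2,i_3)\mapsto(s_2,\emptyset,s_1)$.
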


\begin{proof}[Proof of \Cref{lm:BM-unstable}]
Let $\mu=\BM[P,\succ]$. Since $\BM$ is individually rational and nonwasteful, the instability of $\BM[P,\succ]$ implies that there exist a student $i^*$ and a school $s$ such that $s\mathrel{P_{i^*}}\mu(i^*)$ and $i^*\succ_s j$ for some $j\in\mu(s)$.

First, we show that student $i^*$ can obtain $s$ by reporting it as her only acceptable school. To see this, suppose that at least $q_s$ students with higher priority than $i^*$ apply to $s$ in the first round. Since these students fill all seats at $s$ in that round, the lower-priority student $j$ is not assigned to $s$, a contradiction. Thus, fewer than $q_s$ higher-priority students apply to $s$ in the first round, and $i^*$ is accepted when she reports $s$ as her only acceptable school.

Among the schools that $i^*$ can obtain by changing her report while holding $P_{-i^*}$ fixed, let $s^*$ be her most-preferred one, and let $P^*_{i^*}$ be a report that assigns her to $s^*$.
That is, for any $P'_{i^*}$, $\BM[P^*_{i^*},P_{-i^*},\succ](i^*) \mathrel{R_{i^*}} \BM[P'_{i^*},P_{-i^*},\succ](i^*)$.
Let $\eta=\BM[P^*_{i^*},P_{-i^*},\succ]$. 
The preceding paragraph implies that $s^*\mathrel{P_{i^*}}\mu(i^*)$.

We claim that $\BM[P_{i^*},P^\eta_{-i^*},\succ]=\eta$. 
Fix a school $t\mathrel{P_{i^*}}s^*$. We first show that at least $q_t$ students with higher priority than $i^*$ must apply to $t$ in the first round under $P_{-i^*}$. Suppose, toward a contradiction, that fewer than $q_t$ such students apply. The argument used for $s$ then shows that $i^*$ obtains $t$ by reporting it as her only acceptable school, contradicting the choice of $s^*$. Thus, under $(P^*_{i^*},P_{-i^*})$, school $t$ is filled in the first round with students who have higher priority than $i^*$.
Since any student other than $i^*$ applies to her assignment under $\eta$ in the first round under $(P_{i^*},P^\eta_{-i^*})$, $i^*$ is rejected by any school that she prefers to $s^*$. Then, since $\eta(i^*)=s^*$, at most $q_{s^*}-1$ other students apply to $s^*$, so a seat remains for $i^*$ when she applies. Moreover, any other student is accepted at her assignment under $\eta$. Therefore, $\BM[P_{i^*},P^\eta_{-i^*},\succ]=\eta$.

Finally, we show that $\eta$ is a safe deviation of $\BM$ under any observation structure $\Gamma$.
For any student $i\neq i^*$, $o_i(\BM[P^*_{i^*},P_{-i^*},\succ], \Gamma) = o_i(\eta, \Gamma)$.
For $i^*$, $o_{i^*}(\BM[P_{i^*},P^\eta_{-i^*},\succ], \Gamma) = o_{i^*}(\eta, \Gamma)$.
Moreover, $\eta\neq\mu$, and $\mu$ does not Pareto dominate $\eta$ since $i^*$ strictly prefers $\eta(i^*)$ to $\mu(i^*)$. Therefore, $\eta$ is a safe deviation of $\BM$.
\end{proof}

\begin{proof}[Proof of \Cref{thm:DA-BM}]
Let $\mu$ be a safe deviation of $\DA$ under $P\in \mathcal{P}$. We consider two cases. 

Suppose first that $\BM[P]$ is stable. Since $\DA[P]$ is the student-optimal stable matching and $\BM[P]$ is Pareto efficient, $\BM[P]=\DA[P]$. Thus, $\mu\neq\BM[P]$, and $\mu$ is not Pareto dominated by $\BM[P]$.
Fix a student $i\in I$. Since $\mu$ is a safe deviation of $\DA$ under $P$, there exists $P'_{-i}$ such that $o_i(\DA[P_i,P'_{-i}],\Gamma)=o_i(\mu,\Gamma)$. Let $\eta=\DA[P_i,P'_{-i}]$. Since $\eta$ is stable under $(P_i,P'_{-i})$, \Cref{lm:DA-BM} implies that $\BM[P_i,P^\eta_{-i}]=\eta$. Thus, $o_i(\BM[P_i,P^\eta_{-i}],\Gamma) = o_i(\eta,\Gamma) = o_i(\mu,\Gamma)$. Therefore, $\mu$ is a safe deviation of $\BM$ under $P$.

Suppose second that $\BM[P]$ is not stable. By \Cref{lm:BM-unstable}, $\BM$ has a safe deviation under $P$. Therefore, $\DA$ is more credible than $\BM$.
  
To show that $\DA$ is strictly more credible than $\BM$, consider the following example. Consider three students $\{i_1,i_2,i_3\}$ and two schools $\{s_1,s_2\}$. Let $q_{s_1}=q_{s_2}=1$. Consider the following preference profile and priority profile:
\begin{table}[H]
\centering
\begin{tabular}{lll|ll}
  $P_{i_1}$ & $P_{i_2}$ & $P_{i_3}$ & $\succ_{s_1}$ & $\succ_{s_2}$ \\ \hline
  $s_1$ & $s_1$ & $s_2$ & $i_1$ & $i_1$ \\
  $s_2$ & $s_2$ & $s_1$ & $i_2$ & $i_2$ \\
  $\emptyset$ & $\emptyset$ & $\emptyset$ & $i_3$ & $i_3$
\end{tabular}
\end{table}
Consider the matchings
\[
\mu=\begin{pmatrix}
i_1 & i_2 & i_3 \\
s_1 & s_2 & \emptyset
\end{pmatrix},
\qquad
\eta=\begin{pmatrix}
i_1 & i_2 & i_3 \\
s_1 & \emptyset & s_2
\end{pmatrix}.
\]
Notice that $\DA[P]=\mu$ and $\BM[P]=\eta$.

First, we show that $\DA$ has no safe deviation under $P$ for any observation structure. Notice that it is sufficient to consider $\Gamma(i)=\{i\}$ for any $i\in I$. Suppose, toward a contradiction, that $\nu$ is a safe deviation. For any $P'_{-i_1}$, since $i_1$ ranks $s_1$ first and has the highest priority there, stability implies that $\nu(i_1)=\DA[P_{i_1},P'_{-i_1}](i_1)=s_1$. Moreover, for any $P'_{-i_2}$, since $i_2$ is ranked second by both schools, stability implies that $\nu(i_2)=\DA[P_{i_2},P'_{-i_2}](i_2)\in\{s_1,s_2\}$.
Then, feasibility gives $\nu(i_2)=s_2$ and $\nu(i_3)=\emptyset$, so $\nu=\mu=\DA[P]$, contradicting condition~(i) of \Cref{def:safe}.

Second, we show that $\mu$ is a safe deviation of $\BM$ under $P$ for any observation structure. Since $\mu$ is stable under $(P,\succ)$, \Cref{lm:DA-BM} gives $\BM[P_i,P^\mu_{-i}]=\mu$ for any $i\in I$. Thus, $o_i(\BM[P_i,P^\mu_{-i}],\Gamma)=o_i(\mu,\Gamma)$ for any $i\in I$. Moreover, $\mu\neq\BM[P]=\eta$, and $\eta$ does not Pareto dominate $\mu$. Therefore, $\mu$ is a safe deviation of $\BM$ under $P$.
\end{proof}

\subsubsection{Proof of Theorem~\ref{thm:DA-TTC}}

\begin{lm}\label{lm:DA-TTC}
Fix a strict priority profile $\succ$, a preference profile $P$, and a student $i\in I$. For any $\varphi\in\{\TTC,\EA\}$, there exists $\widehat P_{-i}$ such that $\DA[P_i, \widehat P_{-i},\succ](i)=\varphi[P,\succ](i)$.
\end{lm}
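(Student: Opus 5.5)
The plan is to build $\widehat P_{-i}$ as a profile of single-school (or empty) reports that \emph{blocks} student $i$ at every school she prefers to $s:=\varphi[P,\succ](i)$ and leaves $s$ free. Concretely, let $T=\{t\in S: t\mathrel{P_i} s\}$. Suppose that for each $t\in T$ we can find a set $H_t\subseteq I\setminus\{i\}$ with the following properties: $|H_t|=q_t$, every $j\in H_t$ satisfies $j\succ_t i$, and the sets $H_t$ are pairwise disjoint.

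Given such sets, let each $j\in H_t$ report $t$ as her only acceptable school, and let every other student $j\neq i$ report no acceptable school. In $\DA[P_i,\widehat P_{-i},\succ]$, each $j\in H_t$ applies only to $t$. These are exactly $q_t$ applicants, all with higher priority than $i$, so $t$ holds them and $i$ is rejected at every $t\in T$. If $s\in S$, then $s\notin T$, so no other student applies to $s$ and $i$ is accepted there. If $s=\emptyset$, then every school acceptable to $i$ lies in $T$, so $i$ ends unmatched. Either way $\DA[P_i,\widehat P_{-i},\succ](i)=s$. The work therefore reduces to constructing the sets $H_t$ for each mechanism.

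For $\varphi=\EA$, I would use the fact that $\EA[P]$ weakly Pareto dominates $\DA[P]$. For $t\in T$ we have $t\mathrel{P_i} s\mathrel{R_i}\DA[P](i)$, so stability of $\DA[P]$ implies that $t$ is full under $\DA[P]$ and every member of $\DA[P](t)$ has higher priority than $i$ at $t$. Set $H_t=\DA[P](t)$. These sets are disjoint because each student is assigned to at most one school, and $i\notin H_t$ because $\DA[P](i)\neq t$.

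For $\varphi=\TTC$, I would read the sets off the cycles of the algorithm. Fix $t\in T$. Since $i$ is never assigned to $t$ although she prefers it to $s$, all $q_t$ seats of $t$ must be exhausted at a step when $i$ is still unassigned; otherwise $i$ would point to $t$ or something better before reaching $s$. Each seat of $t$ is consumed in a cycle in which $t$ points to its highest-priority remaining student. Let $H_t$ collect these pointed-to students, one per seat.

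Each member of $H_t$ then satisfies $j\succ_t i$ and $j\ne i$. Indeed, $i$ remains in the market at that step, so $t$'s top remaining student has priority at least that of $i$. If it were $i$ herself, $i$ would be assigned in that cycle to the school she points to, which is her favourite among available schools. Since $t$ is still available, that school would be $t$ or better, contradicting $t\mathrel{P_i} s$.

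Disjointness holds because the cycles are disjoint and, within a cycle, each student is pointed to by exactly one school. Moreover, a student removed in one cycle cannot be pointed to later, so the $q_t$ students collected for $t$ are distinct and $|H_t|=q_t$.

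The main obstacle is this $\TTC$ bookkeeping with quotas: checking that all seats of $t$ are consumed while $i$ is present, that each consumption corresponds to a distinct pointed-to student, and that $i$ never appears among them. Once this is verified, the single-report construction above finishes the proof for both mechanisms.
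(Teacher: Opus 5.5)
Your proof is correct, but it takes a different route from the paper. The paper gives no direct argument. It cites Theorem~2 and Proposition~2 of \citet{decerf2024incontestable}, which show that $\EA$, $\TTC$, and $\DA$ are $i$-indistinguishable, and that property is exactly the lemma.

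You instead isolate an elementary condition and prove the lemma from it. The condition is that, for each school $t$ that $i$ prefers to $s=\varphi[P,\succ](i)$, there are $q_t$ students ranked above $i$ at $t$, with these sets pairwise disjoint across schools. You show this condition suffices for $\DA$ to give $i$ the school $s$, via single-school reports. By stability of any $\DA$ outcome it is also necessary, so it characterizes what $\DA$ can give $i$. You then verify it directly for each mechanism:
\begin{itemize}
  \item for $\EA$, from weak Pareto dominance over $\DA[P]$ plus stability of $\DA[P]$;
  \item for $\TTC$, from the students that $t$ points to in the cycles that consume its seats while $i$ is still present.
\end{itemize}

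What your route buys:
\begin{itemize}
  \item It is self-contained and shows why the lemma holds.
  \item Your $\EA$ argument applies verbatim to any mechanism that weakly Pareto dominates $\DA$.
  \item Your explanation has every other student report at most one acceptable school, in the spirit of the simple explanations used for \Cref{prop:constrained}.
  \item It directly gives facts the paper later extracts from this lemma, such as $q_t<r_{it}$ for every school $t$ that $i$ prefers to her assignment, in part~(i) of \Cref{lm:weak-indifferent-DA-TTC}.
\end{itemize}
The paper's citation is shorter and places the result within an existing notion.

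One small point should be made explicit. You implicitly use individual rationality of $\TTC$ and $\EA$ twice: so that $i$ actually applies to $s$ when $s\in S$, and so that every $t$ she prefers to $s$ is acceptable to her. The second use matters in the $\TTC$ step, where $t$ pointing to $i$ would give her $t$ or better. Both mechanisms are individually rational, so nothing breaks.
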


\begin{proof}[Proof of \Cref{lm:DA-TTC}]
Theorem~2 and Proposition~2 of \citet{decerf2024incontestable} imply that $\EA$, $\TTC$, and $\DA$ are $i$-indistinguishable. That is, there exists $P'_{-i}$ such that $\DA[P_i, P'_{-i},\succ](i)=\EA[P,\succ](i)$, and there exists $P''_{-i}$ such that $\DA[P_i, P''_{-i},\succ](i)=\TTC[P,\succ](i)$.
\end{proof}

\begin{proof}[Proof of \Cref{thm:DA-TTC}]
To prove part~(i), first notice that when $\Gamma(i)=I$ for any $i\in I$, $\DA$ is credible by \Cref{thm:DA-stable}. Thus, $\DA$ is more credible than $\TTC$ and $\EA$. 

To show that both comparisons are strict, it remains to show that neither $\TTC$ nor $\EA$ is credible under full observation. Consider three students $\{i_1,i_2,i_3\}$ and two schools $\{s_1,s_2\}$. Let $q_{s_1}=q_{s_2}=1$. Consider the following preference profile and priority profile:
\begin{table}[H]
\centering
\begin{tabular}{lll|ll}
  $P_{i_1}$ & $P_{i_2}$ & $P_{i_3}$ & $\succ_{s_1}$ & $\succ_{s_2}$ \\ \hline
  $s_2$ & $s_1$ & $s_2$ & $i_1$ & $i_2$ \\
  $s_1$ & $\emptyset$ & $\emptyset$ & $i_2$ & $i_3$ \\
  $\emptyset$ &  & & $i_3$ & $i_1$
\end{tabular}
\end{table}
Consider the matchings
\[
\mu=\begin{pmatrix}
i_1 & i_2 & i_3 \\
s_1 & \emptyset & s_2
\end{pmatrix},
\qquad
\eta=\begin{pmatrix}
i_1 & i_2 & i_3 \\
s_2 & s_1 & \emptyset
\end{pmatrix}.
\]
Notice that $\DA[P]=\EA[P]=\mu$ and $\TTC[P]=\eta$.

First, we show that $\mu$ is a safe deviation of $\TTC$. Indeed, $\mu\neq\TTC[P]=\eta$, and $\mu$ is not Pareto dominated by $\eta$. Moreover, $\TTC[P_i,P^\mu_{-i}]=\mu$ for any $i\in I$, so $o_i(\TTC[P_i,P^\mu_{-i}],\Gamma)=o_i(\mu,\Gamma)$. Therefore, $\mu$ is a safe deviation of $\TTC$, and $\DA$ is strictly more credible than $\TTC$ when students observe the entire assignment. 

Second, we show that $\eta$ is a safe deviation of $\EA$. 
For student $i_1$, let both $i_2$ and $i_3$ report $s_1$, $s_2$, and $\emptyset$ in that order. Then $\EA[P_{i_1},P'_{-i_1}]=\eta$. 
For student $i_2$, let $i_1$ report $P_{i_1}$ and let $i_3$ report $s_1$ as her only acceptable school. Then $\EA[P_{i_2},P'_{-i_2}]=\eta$. 
For student $i_3$, let $i_1$ report $P_{i_1}$ and let $i_2$ report $s_1$, $s_2$, and $\emptyset$ in that order. Then $\EA[P_{i_3},P'_{-i_3}]=\eta$. 
Thus, every student has an innocent explanation for $\eta$. Moreover, $\eta\neq\EA[P]=\mu$, and $\mu$ does not Pareto dominate $\eta$. Therefore, $\eta$ is a safe deviation of $\EA$, and $\DA$ is strictly more credible than $\EA$ when students observe the entire assignment. 

We continue with part~(ii). 
Suppose that $\Gamma(i)=\{i\}$ for any $i\in I$. We first compare $\TTC$ and $\DA$. Fix a preference profile $P \in \mathcal{P}$ and a safe deviation $\mu$ of $\TTC$ under $P$. We consider two cases. 

Suppose first that $\TTC[P]=\DA[P]$. We claim that $\mu$ is a safe deviation of $\DA$ under $P$. 
Since $\TTC[P]=\DA[P]$, we have $\mu\neq \DA[P]$ and $\mu$ is not Pareto dominated by $\DA[P]$. 
Fix a student $i\in I$. By condition~(ii) of \Cref{def:safe}, there exists $P'\in\mathcal{P}$ such that $o_i(\TTC[P_i,P'_{-i}],\Gamma)=o_i(\mu,\Gamma)$. Combining with $\Gamma(i)=\{i\}$, we have $\TTC[P_i,P'_{-i}](i)=\mu(i)$. Then, \Cref{lm:DA-TTC} implies that there exists $P''_{-i}$ such that $\DA[P_i,P''_{-i}](i)=\TTC[P_i,P'_{-i}](i)=\mu(i)$. Thus, $o_i(\DA[P_i,P''_{-i}],\Gamma)=o_i(\mu,\Gamma)$, and $\mu$ is a safe deviation of $\DA$ under $P$.

Suppose second that $\TTC[P]\neq\DA[P]$. We claim that $\TTC[P]$ is a safe deviation of $\DA$ under $P$. Indeed, the Pareto efficiency of $\TTC$ ensures that $\TTC[P]$ is not Pareto dominated by $\DA[P]$. Moreover, for any student $i\in I$, \Cref{lm:DA-TTC} implies that there exists $P'_{-i}$ such that $\DA[P_i,P'_{-i}](i)=\TTC[P](i)$. Since $\Gamma(i)=\{i\}$, we have $o_i(\DA[P_i,P'_{-i}],\Gamma)=o_i(\TTC[P],\Gamma)$. Thus, $\TTC[P]$ is a safe deviation of $\DA$ under $P$, and $\TTC$ is more credible than $\DA$ when students observe only their own assignment.

We next compare $\EA$ and $\DA$. Fix a preference profile $P \in \mathcal{P}$ and a safe deviation $\mu$ of $\EA$ under $P$.
We show that $\mu$ is also a safe deviation of $\DA$ under $P$. 

First, we show that $\DA[P]\neq \mu$. Otherwise, since $\EA$ weakly Pareto dominates $\DA$ and $\EA[P]\neq \DA[P]$, $\EA[P]$ Pareto dominates $\DA[P]=\mu$, contradicting condition~(iii) of \Cref{def:safe}. Thus, $\DA[P]\neq \mu$.

Next, we show that $\mu$ is not Pareto dominated by $\DA[P]$. Otherwise, since $\EA$ weakly Pareto dominates $\DA$, $\EA[P]$ Pareto dominates $\mu$, contradicting condition~(iii) of \Cref{def:safe}. Thus, $\mu$ is not Pareto dominated by $\DA[P]$. 

Finally, by condition~(ii) of \Cref{def:safe}, for any $i\in I$ there exists $P'\in\mathcal{P}$ such that $o_i(\EA[P_i,P'_{-i}],\Gamma)=o_i(\mu,\Gamma)$. Since $\Gamma(i)=\{i\}$, this implies that $\EA[P_i,P'_{-i}](i)=\mu(i)$. Then, \Cref{lm:DA-TTC} implies that there exists $P''_{-i}$ such that $\DA[P_i,P''_{-i}](i)=\EA[P_i,P'_{-i}](i)=\mu(i)$. Therefore, $o_i(\DA[P_i,P''_{-i}],\Gamma)=o_i(\mu,\Gamma)$. We conclude that $\mu$ is a safe deviation of $\DA$ under $P$. Thus, $\EA$ is more credible than $\DA$ when students observe only their own assignment.

We have shown that $\TTC$ and $\EA$ are more credible than $\DA$ when students observe only their own assignment. To show that both comparisons are strict, consider three students $\{i_1,i_2,i_3\}$ and two schools $\{s_1,s_2\}$. Let $q_{s_1}=q_{s_2}=1$. Consider the following preference profile and priority profile:\footnote{Notice that this example differs from the full-observation example above only in student $i_2$'s preference.}
\begin{table}[H]
\centering
\begin{tabular}{lll|ll}
  $P_{i_1}$ & $P_{i_2}$ & $P_{i_3}$ & $\succ_{s_1}$ & $\succ_{s_2}$ \\ \hline
  $s_2$ & $s_1$ & $s_2$ & $i_1$ & $i_2$ \\
  $s_1$ & $s_2$ & $\emptyset$ & $i_2$ & $i_3$ \\
  $\emptyset$ & $\emptyset$ & & $i_3$ & $i_1$
\end{tabular}
\end{table}
Consider the matchings
\[
\mu=\begin{pmatrix}
i_1 & i_2 & i_3 \\
s_1 & s_2 & \emptyset
\end{pmatrix},
\qquad
\eta=\begin{pmatrix}
i_1 & i_2 & i_3 \\
s_2 & s_1 & \emptyset
\end{pmatrix}.
\]
Notice that $\DA[P]=\mu$ and $\EA[P]=\TTC[P]=\eta$.

First, we show that $\varphi\in\{\EA,\TTC\}$ has no safe deviation under $P$. Suppose, toward a contradiction, that $\varphi$ has a safe deviation $\nu$. Under $\TTC$, student $i_1$ cannot remain unmatched while she has top priority at the acceptable school $s_1$. Under $\EA$, stability of $\DA$ implies that $\DA$ assigns $i_1$ to $s_1$ or $s_2$, and the weak Pareto improvement from $\DA$ to $\EA$ implies that $\EA$ assigns $i_1$ to $s_1$ or $s_2$. Thus, $\nu(i_1)\in\{s_1,s_2\}$ for either mechanism. 
The analogous argument for $i_2$, who has top priority at $s_2$, gives $\nu(i_2)\in\{s_1,s_2\}$. Feasibility leaves only $\eta$ and $\mu$ as possibilities. Since $\eta=\varphi[P]$ Pareto dominates $\mu$, conditions~(i) and~(iii) rule out both possibilities, a contradiction.

Second, we show that $\eta$ is a safe deviation of $\DA$. Indeed, $\eta\neq\DA[P]$ and $\eta$ is not Pareto dominated by $\DA[P]$. For $i_1$ and $i_2$, $\DA[P_i,P^\eta_{-i}]=\eta$. For $i_3$, let $i_1$ report no school as acceptable and let $i_2$ report only $s_2$ as acceptable. Then $\DA[P_{i_3},P'_{-i_3}](i_3)=\eta(i_3)$. Thus, $\eta$ is a safe deviation of $\DA$.\footnote{This example also shows that any stable mechanism can admit a safe deviation when every student except one observes the entire assignment. Suppose that $i_1$ and $i_2$ observe the entire assignment and $i_3$ observes only her own assignment. The explanations above for $i_1$ and $i_2$ produce $\eta$, so $\eta$ remains a safe deviation of $\DA$. By \Cref{thm:DA-stable}~(i), every stable mechanism then admits a safe deviation under this problem.}
\end{proof}

\subsection{Proof of Section~\ref{sec:constrained}}

Fix an observation structure $\Gamma$. We say that a mechanism $\varphi$ admits \textbf{simple explanations} under $\Gamma$ if, for any $P\in \mathcal{P}$, any $i\in I$, and any safe deviation $\mu$ of $\varphi$ under $P$, there exists $\eta\in\mathcal{M}$ such that $o_i(\varphi[P_i,P^\eta_{-i}],\Gamma)=o_i(\mu,\Gamma)$. 

\subsubsection{Proof of Proposition~\ref{prop:constrained}}

\begin{lm}\label{lm:constrained}
Fix an observation structure $\Gamma$, a constraint $k\geq 1$, and two individually rational mechanisms $\varphi$ and $\psi$. Suppose that $\psi$ admits simple explanations under $\Gamma$.

\begin{itemize}
  \item[(i)] If $\varphi$ is credible, then $\varphi^k$ is credible. 
  \item[(ii)] If $\varphi$ is more credible than $\psi$, then $\varphi^k$ is more credible than $\psi^k$.  
\end{itemize}
\end{lm}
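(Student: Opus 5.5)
The plan is to reduce everything about the constrained mechanisms to the unconstrained mechanisms evaluated at the truncated profile $P^k$. The reduction uses the identity $\varphi^k[P_i,P'_{-i}]=\varphi[P_i^k,P'^k_{-i}]$ and the idempotence of truncation: $(P^k_j)^k=P^k_j$, and $(P^\eta_j)^k=P^\eta_j$ whenever $k\geq 1$, since $P^\eta_j$ has at most one acceptable school.

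The key auxiliary observation concerns individual rationality. Suppose a matching $\mu$ satisfies condition~(ii) of \Cref{def:safe} for an individually rational mechanism evaluated at a profile in which student $i$ reports $P_i^k$. Since $i\in\Gamma(i)$, $\mu(i)$ is then $i$'s assignment at some profile where she reports $P_i^k$. Hence $\mu(i)$ is either $\emptyset$ or one of her first $\min\{k,n_i\}$ acceptable schools. The same holds for $\varphi[P^k](i)$ and $\psi[P^k](i)$. On the set consisting of $\emptyset$ and these schools, $P_i$ and $P_i^k$ coincide. Therefore, for matchings with this property, Pareto domination under $P$ and under $P^k$ are equivalent. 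I expect this step to be the main obstacle, because condition~(iii) is evaluated under the true preferences $P$ for the constrained mechanism but under $P^k$ for the unconstrained one.

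For part~(i), I would suppose that $\mu$ is a safe deviation of $\varphi^k$ under $P$ and show that $\mu$ is a safe deviation of $\varphi$ under $P^k$, which contradicts the credibility of $\varphi$. Condition~(i) holds because $\mu\neq\varphi^k[P]=\varphi[P^k]$. For condition~(ii), the explanation $P'_{-i}$ of $\varphi^k$ yields the explanation $P'^k_{-i}$ of $\varphi$ at $P_i^k$, since $\varphi[P_i^k,P'^k_{-i}]=\varphi^k[P_i,P'_{-i}]$. Condition~(iii) follows from the equivalence established above.

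For part~(ii), I would suppose that $\varphi^k$ has a safe deviation under $P$. The argument of part~(i) shows that $\varphi$ has a safe deviation under $P^k$. Since $\varphi$ is more credible than $\psi$, it follows that $\psi$ has a safe deviation $\nu$ under $P^k$. Because $\psi$ admits simple explanations, for each $i$ there exists $\eta$ such that $o_i(\psi[P_i^k,P^\eta_{-i}],\Gamma)=o_i(\nu,\Gamma)$. By idempotence, $\psi[P_i^k,P^\eta_{-i}]=\psi^k[P_i,P^\eta_{-i}]$, so $P^\eta_{-i}$ is an innocent explanation of $\nu$ under $\psi^k$ at $P$. Conditions~(i) and~(iii) for $\psi^k$ under $P$ follow as before. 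For condition~(i), $\nu\neq\psi[P^k]=\psi^k[P]$. For condition~(iii), individual rationality of $\psi$ places both $\nu(i)$ and $\psi[P^k](i)$ in the region where $P_i$ and $P_i^k$ agree, so non-domination under $P^k$ transfers to non-domination under $P$.

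Simple explanations are exactly what make the last step work. A general explanation $P'_{-i}$ of $\psi$ could list more than $k$ schools, in which case it would not be an available report under $\psi^k$. Single-school reports, by contrast, are unaffected by truncation.
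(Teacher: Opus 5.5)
Your proposal is correct and follows essentially the same route as the paper. Like the paper, you reduce a safe deviation of $\varphi^k$ under $P$ to a safe deviation of $\varphi$ under $P^k$, apply the credibility comparison at $P^k$, and lift $\psi$'s safe deviation back to $\psi^k$ using simple explanations (single-school reports are unchanged by truncation) and individual rationality to move Pareto non-domination between $P$ and $P^k$. The only difference is cosmetic: you show that Pareto domination under $P$ and under $P^k$ are equivalent for the relevant matchings, whereas the paper transfers a strict preference for a single witness student $j$.
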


\begin{proof}[Proof of \Cref{lm:constrained}]
To prove part~(i), suppose, toward a contradiction, that $\mu$ is a safe deviation of $\varphi^k$ under $P$. Then, $\varphi[P^k]=\varphi^k[P]\neq\mu$. For any $i\in I$, choose $P'_{-i}$ such that $o_i(\varphi^k[P_i,P'_{-i}],\Gamma)=o_i(\mu,\Gamma)$. Since $\varphi^k[P_i,P'_{-i}]=\varphi[P_i^k,{P'_{-i}}^k]$, we have $o_i(\varphi[P_i^k,{P'_{-i}}^k],\Gamma)=o_i(\mu,\Gamma)$.

It remains to show that $\mu$ is not Pareto dominated by $\varphi[P^k]$ under $P^k$. Since $\mu\neq\varphi^k[P]$ and $\mu$ is not Pareto dominated by $\varphi^k[P]$ under $P$, there exists $j\in I$ such that $\mu(j)\mathrel{P_j}\varphi[P^k](j)$. Then, individual rationality of $\varphi$ gives $\varphi[P^k](j)\mathrel{R^k_j}\emptyset$. 
Moreover, since $j \in \Gamma(j)$, there exists $\widehat P_{-j}$ such that $\varphi^k[P_j, \widehat P_{-j}](j) = \mu(j)$, and thus $\varphi[P_j^k, \widehat P_{-j}^k](j) = \mu(j)$. Then, individual rationality of $\varphi$ gives $\mu(j)\mathrel{R^k_j}\emptyset$. Since both assignments lie in the part of $j$'s preference list whose ordering is unchanged by truncation, $\mu(j)\mathrel{P^k_j}\varphi[P^k](j)$. Thus, $\mu$ is not Pareto dominated by $\varphi[P^k]$ under $P^k$. Therefore, $\mu$ is a safe deviation of $\varphi$ under $P^k$, contradicting the credibility of $\varphi$.

We now prove part~(ii). Let $\mu$ be a safe deviation of $\varphi^k$ under $P$. The preceding argument shows that $\mu$ is a safe deviation of $\varphi$ under $P^k$. Since $\varphi$ is more credible than $\psi$, there exists a safe deviation $\eta$ of $\psi$ under $P^k$. We claim that $\eta$ is a safe deviation of $\psi^k$ under $P$. 

By condition~(i) of \Cref{def:safe}, $\psi^k[P]=\psi[P^k]\neq\eta$.
Moreover, since $\psi$ admits simple explanations under $\Gamma$, for any $i\in I$, there exists a matching $\nu$ such that $o_i(\psi[P_i^k,P^\nu_{-i}],\Gamma)=o_i(\eta,\Gamma)$. Then, since any report in $P^\nu_{-i}$ contains at most one acceptable school, truncation does not change it. Thus, $o_i(\psi^k[P_i,P^\nu_{-i}],\Gamma)=o_i(\psi[P_i^k,P^\nu_{-i}],\Gamma)=o_i(\eta,\Gamma)$.
Finally, since $\eta$ is not Pareto dominated by $\psi[P^k]$ under $P^k$, there exists $j\in I$ such that $\eta(j)\mathrel{P^k_j}\psi[P^k](j)$.
Thus, $\eta(j)\mathrel{P_j}\psi[P^k](j)=\psi^k[P](j)$ and $\eta$ is not Pareto dominated by $\psi^k[P]$ under $P$. Therefore, $\eta$ is a safe deviation of $\psi^k$ under $P$, and thus $\varphi^k$ is more credible than $\psi^k$.
\end{proof}

\begin{lm}\label{lm:simple}
Fix any observation structure $\Gamma$.
\begin{itemize}
  \item[(i)] Any stable mechanism admits simple explanations.
  \item[(ii)] $\BM$ admits simple explanations.
\end{itemize}
\end{lm}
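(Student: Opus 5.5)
Both parts follow from the lemmas already proved for the main results. The idea is to replace an arbitrary innocent explanation by the degenerate profile $P^\eta_{-i}$ for a suitable matching $\eta$.

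For part~(i), fix a stable mechanism $\varphi$, a profile $P$, a student $i$, and a safe deviation $\mu$ of $\varphi$ under $P$. By condition~(ii) of \Cref{def:safe}, there is $P'_{-i}$ with $o_i(\varphi[P_i,P'_{-i}],\Gamma)=o_i(\mu,\Gamma)$. Set $\eta=\varphi[P_i,P'_{-i}]$. Since $\varphi$ is stable, $\eta$ is stable under $(P_i,P'_{-i})$. Applying \Cref{lm:DA-stable} to the profile $(P_i,P'_{-i})$, student $i$, and the matching $\eta$ gives
\[
\varphi[P_i,P^\eta_{-i}]=\eta .
\]
Hence $o_i(\varphi[P_i,P^\eta_{-i}],\Gamma)=o_i(\eta,\Gamma)=o_i(\mu,\Gamma)$, which is exactly the definition of a simple explanation.

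For part~(ii), the same substitution fails. The outcome $\eta=\BM[P_i,P'_{-i}]$ of an arbitrary explanation need not be stable, so \Cref{lm:DA-BM} does not apply directly. Instead I will show that \emph{every} $\BM$ outcome is reproduced when the other students report $P^\eta_{-i}$:
\[
\BM[P_i,P^\eta_{-i}]=\eta \qquad\text{where } \eta=\BM[P_i,P'_{-i}] .
\]
I will compare the runs of $\BM$ under $(P_i,P'_{-i})$ and $(P_i,P^\eta_{-i})$ round by round.
\begin{itemize}
\item Under $P^\eta_{-i}$, every student $j\neq i$ with $\eta(j)\in S$ applies in round~1 to $\eta(j)$ only. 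The number of such applicants to a school $s$ is at most $q_s$, and at most $q_s-1$ if $\eta(i)=s$.
\item If $i$'s top choices before $\eta(i)$ are full after round~1, then $i$ is rejected at each of them and then accepted at $\eta(i)$, because a seat remains for her there.
\item Every other student is accepted at her round-1 application.
\end{itemize}

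The main obstacle is the second bullet: showing that $i$ cannot grab a school $t$ with $t \mathrel{P_i}\eta(i)$ in some round $r$. Two facts do this.
\begin{itemize}
\item In the original run, when $i$ applied to $t$ in round $r$, the school $t$ was either already full or filled by other applicants. So $|\eta(t)\setminus\{i\}|=q_t$, since $\BM$ acceptances are permanent and $i\notin\eta(t)$.
\item In the new run, all of those students apply to $t$ in round~1 and are accepted there, since $t$ has enough seats for all of them. So $t$ is full before $i$ ever applies, and she is rejected whenever she reaches $t$.
\end{itemize}
When $i$ applies to $\eta(i)$, that school holds only students from $\eta(\eta(i))\setminus\{i\}$, so she is accepted. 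If $\eta(i)=\emptyset$, every acceptable school of $i$ is full and she stays unmatched. Students of $\eta$ who are unmatched report nothing and remain unmatched.

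This argument needs no priority comparisons, and it proves the stronger claim for every profile, not only for those that arise from safe deviations. That yields part~(ii).
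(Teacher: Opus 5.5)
Part~(i) is the same as the paper's argument. For part~(ii) you take a different route. The paper replaces the reports of the students $j\neq i$ by $P^\eta_j$ one at a time, checks that each such $j$ keeps $\eta(j)$, and then invokes nonbossiness of $\BM$ to conclude that the whole matching is unchanged. You instead compare the runs under $(P_i,P'_{-i})$ and $(P_i,P^\eta_{-i})$ directly. Your route avoids nonbossiness and is self-contained, and it works, but one step fails as written.

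Suppose $t$ is $i$'s \emph{first} choice and $t \mathrel{P_i} \eta(i)$. In the new run, $i$ applies to $t$ in round~1 together with the $q_t$ students of $\eta(t)$. So it is false that ``$t$ is full before $i$ ever applies.'' For the same reason, your claim that every other student is accepted at her round-1 application is not yet justified. If $i$ outranked some student of $\eta(t)$ at $t$, then $i$ would take $t$ in round~1 and displace that student.

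Closing this requires exactly the priority comparison you say the argument avoids. In the original run, $i$ was rejected at $t$ in round~1. Hence $t$ was filled in round~1 by $q_t$ applicants, each with higher priority than $i$ at $t$. Since acceptances are final, these applicants are exactly $\eta(t)$. Therefore, in the new run, the round-1 applicants to $t$ are $\eta(t)\cup\{i\}$, all of $\eta(t)$ is admitted, and $i$ is rejected.

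When $t$ is $i$'s $r$-th choice with $r\geq 2$, your fullness argument is correct as stated. With this one patch, the direct run comparison proves $\BM[P_i,P^\eta_{-i}]=\eta$ for every profile, as you claim.
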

  
\begin{proof}[Proof of \Cref{lm:simple}]
To prove part~(i), fix a stable mechanism $\varphi$, and let $\mu$ be a safe deviation of $\varphi$ under $P\in\mathcal{P}$. Fix a student $i\in I$. By condition~(ii) of \Cref{def:safe}, there exists $P'\in \mathcal{P}$ such that $o_i(\varphi[P_i,P'_{-i}],\Gamma)=o_i(\mu,\Gamma)$. Let $\eta=\varphi[P_i,P'_{-i}]$. Since $\eta$ is stable under $(P_i,P'_{-i})$, \Cref{lm:DA-stable} gives $\varphi[P_i,P^\eta_{-i}]=\eta$. Thus, $o_i(\varphi[P_i,P^\eta_{-i}],\Gamma)=o_i(\eta,\Gamma)=o_i(\mu,\Gamma)$, and therefore $\varphi$ admits simple explanations under $\Gamma$.

To prove part~(ii), let $\mu$ be a safe deviation of $\BM$ under $P\in\mathcal{P}$. Fix a student $i\in I$. By condition~(ii) of \Cref{def:safe}, there exists $P'\in \mathcal{P}$ such that $o_i(\BM[P_i,P'_{-i}],\Gamma)=o_i(\mu,\Gamma)$. Let $\eta=\BM[P_i,P'_{-i}]$. We next show that $\BM[P_i,P^\eta_{-i}]=\eta$.

Fix a student $j\neq i$. Consider a preference profile $\widehat P$ such that $\widehat P_i=P_i$, $\widehat P_j=P'_j$, $\widehat P_\ell\in\{P'_\ell,P^\eta_\ell\}$ for any $\ell\in I\setminus\{i,j\}$. Suppose that $\BM[\widehat P]=\eta$. Let $\nu=\BM[P^\eta_j,\widehat P_{-j}]$. If $\eta(j)=\emptyset$, individual rationality gives $\nu(j)=\emptyset=\eta(j)$. If $\eta(j)\in S$ and $j$ applies to $\eta(j)$ in the first round under $\widehat P$, changing her report to $P^\eta_j$ preserves that application and gives $\nu(j)=\eta(j)$. If $j$ applies to $\eta(j)$ in a later round under $\widehat P$, the school has a vacant seat after the first round. Thus, moving her application to the first round gives $\nu(j)=\eta(j)$. Therefore, $\nu(j)=\eta(j)$ in any case. Since $\BM$ is nonbossy \citep{harless2014school}, $\nu=\BM[P^\eta_j,\widehat P_{-j}]=\BM[\widehat P]=\eta$.

Apply the argument above successively to each student in $I\setminus\{i\}$. At any step, the outcome remains $\eta$, so the argument applies to the next student's change from her report in $P'$ to her report in $P^\eta$. Thus, $\BM[P_i,P^\eta_{-i}]=\eta$, and $o_i(\BM[P_i,P^\eta_{-i}],\Gamma)=o_i(\eta,\Gamma)=o_i(\mu,\Gamma)$. Therefore, $\BM$ admits simple explanations under $\Gamma$.
\end{proof}

\begin{proof}[Proof of \Cref{prop:constrained}]
For part~(i), the result follows from \Cref{thm:DA-stable}~(i), \Cref{lm:constrained}~(ii), and \Cref{lm:simple}~(i).

For part~(ii), the result follows from \Cref{thm:DA-BM}, \Cref{lm:constrained}~(ii), and \Cref{lm:simple}~(ii).
The strict relation comes from the example in the proof of \Cref{thm:DA-BM}.

For part~(iii), the credibility of $\DA^k$ follows from \Cref{thm:DA-stable}~(ii) and \Cref{lm:constrained}~(i).
The strict relations follow from the examples in the proof of \Cref{thm:DA-TTC}.

We next show that no other constrained stable mechanism is credible. Let $\varphi^k$ be a constrained stable mechanism such that $\varphi^k[P]\neq\DA^k[P]$ for some $P$, and let $\mu=\DA^k[P]=\DA[P^k]$. Since $\mu$ is the student-optimal stable matching under $P^k$ and $\varphi^k[P]\neq\mu$, $\mu$ Pareto dominates $\varphi^k[P]$ under $P^k$. Since $\DA^k$ and $\varphi^k$ are individually rational, $\mu$ Pareto dominates $\varphi^k[P]$ under $P$. Therefore, $\mu$ is not Pareto dominated by $\varphi^k[P]$.

Fix a student $i\in I$, and let $\eta=\varphi^k[P_i,P^\mu_{-i}]$. Since $\varphi^k$ is constrained stable, $\eta$ is stable under $(P_i,P^\mu_{-i})^k=(P_i^k,P^\mu_{-i})$. Define an auxiliary mechanism $\widehat\varphi_i$ that selects $\eta$ at $(P_i^k,P^\mu_{-i})$ and selects the outcome of $\DA$ at any other preference profile. By construction, $\widehat\varphi_i$ is stable.
Then, applying \Cref{lm:DA-stable} to the stable matching $\mu$ under $P^k$ gives $\eta = \widehat\varphi_i[P_i^k,P^\mu_{-i}]=\mu$. Thus, $\mu$ is a safe deviation of $\varphi^k$ under $P$ and $\varphi^k$ is not credible. Therefore, $\DA^k$ is the only constrained stable and credible mechanism.

For part~(iv), the result follows from \Cref{thm:DA-TTC}~(ii), \Cref{lm:constrained}~(ii), and \Cref{lm:simple}~(i).
The strict relations come from the example in the proof of \Cref{thm:DA-TTC}.
\end{proof}

\subsection{Proof of Section~\ref{sec:weak}}

\subsubsection{Proof of Proposition~\ref{prop:weak}}

\begin{proof}[Proof of \Cref{prop:weak}]
We start by proving part~(i). Fix a stable mechanism $\varphi$ and a weak-priority problem $[P,\succsim,\succ]$ under which $\DA$ has a safe deviation $\mu$. We show that $\mu$ is a safe deviation of $\varphi$.

Fix a student $i\in I$. By condition~(ii) of \Cref{def:safe-weak}, there exist a preference profile $P'\in \mathcal{P}$ and a tie-breaking $\succ'$ of $\succsim$ such that $o_i(\DA[P_i,P'_{-i},\succ'],\Gamma)=o_i(\mu,\Gamma)$ and $\succ'$ gives $i$ rank $r_{is}$ at every school $s\in S$. Let $\eta=\DA[P_i,P'_{-i},\succ']$. Since $\eta$ is stable under $(P_i,P'_{-i})$ and $\succ'$, \Cref{lm:DA-stable} gives $\varphi[P_i,P^\eta_{-i},\succ']=\eta$. Thus, $o_i(\varphi[P_i,P^\eta_{-i},\succ'],\Gamma)=o_i(\eta,\Gamma)=o_i(\mu,\Gamma)$. The tie-breaking $\succ'$ preserves $i$'s observed rank at every school.

It remains to verify conditions~(i) and~(iii) of \Cref{def:safe-weak}.
Suppose first that $\mu=\varphi[P,\succ]$. Then, the student optimality of $\DA$ and $\mu\neq\DA[P,\succ]$ imply that $\DA[P,\succ]$ Pareto dominates $\varphi[P,\succ]=\mu$, contradicting condition~(iii) of \Cref{def:safe-weak}. Thus, $\mu\neq\varphi[P,\succ]$.
Suppose second that $\mu$ is Pareto dominated by $\varphi[P,\succ]$. Then, the student optimality of $\DA$ implies that $\DA[P,\succ]$ Pareto dominates $\mu$, again contradicting condition~(iii) of \Cref{def:safe-weak}. 
Therefore, $\mu$ is a safe deviation of $\varphi$ and $\DA$ is more credible than $\varphi$ under weak priorities.

We next prove part~(ii). Fix a weak-priority problem $[P,\succsim,\succ]$ under which $\DA$ has a safe deviation $\mu$. We consider two cases.

Suppose first that $\BM[P,\succ]$ is stable under $P$ and $\succ$. Since $\DA[P,\succ]$ is the student-optimal stable matching and $\BM$ is Pareto efficient, $\BM[P,\succ]=\DA[P,\succ]$. Thus, $\mu\neq\BM[P,\succ]$, and $\mu$ is not Pareto dominated by $\BM[P,\succ]$. 
Fix a student $i\in I$. By condition~(ii) of \Cref{def:safe-weak}, there exist a preference profile $P'\in \mathcal{P}$ and a tie-breaking $\succ'$ of $\succsim$ such that $o_i(\DA[P_i,P'_{-i},\succ'],\Gamma)=o_i(\mu,\Gamma)$ and $\succ'$ gives $i$ rank $r_{is}$ at every school $s\in S$. Let $\eta=\DA[P_i,P'_{-i},\succ']$. Since $\eta$ is stable under $(P_i,P'_{-i})$ and $\succ'$, \Cref{lm:DA-BM} gives $\BM[P_i,P^\eta_{-i},\succ']=\eta$. Thus, $o_i(\BM[P_i,P^\eta_{-i},\succ'],\Gamma)=o_i(\eta,\Gamma)=o_i(\mu,\Gamma)$. The tie-breaking $\succ'$ preserves $i$'s observed rank at every school. Therefore, $\mu$ is a safe deviation of $\BM$.

Suppose second that $\BM[P,\succ]$ is not stable under $P$ and $\succ$. By \Cref{lm:BM-unstable}, $\BM$ has a safe deviation under $[P,\succ]$. Since $\succ$ is a tie-breaking of $\succsim$ and gives any student her observed rank at every school, any innocent explanation under $[P,\succ]$ is also an innocent explanation under $[P,\succsim,\succ]$. Therefore, $\BM$ has a safe deviation under $[P,\succsim,\succ]$. We conclude that $\DA$ is more credible than $\BM$ under weak priorities.

The strict relation comes from the example in the proof of \Cref{thm:DA-BM}.

Finally, we prove part~(iii). Suppose that $\Gamma(i)=\{i\}$ for any $i\in I$. We first compare $\TTC$ and $\DA$. Fix a weak-priority problem $[P,\succsim,\succ]$ under which $\TTC$ has a safe deviation $\mu$. We consider two cases.

Suppose first that $\TTC[P,\succ]=\DA[P,\succ]$. We claim that $\mu$ is a safe deviation of $\DA$ under $[P,\succsim,\succ]$. 
Since $\TTC[P,\succ]=\DA[P,\succ]$, we have $\mu\neq \DA[P,\succ]$ and $\mu$ is not Pareto dominated by $\DA[P,\succ]$. 
Fix a student $i\in I$. By condition~(ii) of \Cref{def:safe-weak}, there exist a preference profile $P'\in \mathcal{P}$ and a tie-breaking $\succ'$ of $\succsim$ such that $o_i(\TTC[P_i,P'_{-i},\succ'],\Gamma)=o_i(\mu,\Gamma)$ and $\succ'$ gives $i$ rank $r_{is}$ at every school $s\in S$.
Combining with $\Gamma(i)=\{i\}$, we have $\TTC[P_i,P'_{-i},\succ'](i)=\mu(i)$. Then, \Cref{lm:DA-TTC} implies that there exists $P''_{-i}$ such that $\DA[P_i,P''_{-i},\succ'](i)=\TTC[P_i,P'_{-i},\succ'](i)=\mu(i)$. Thus, $o_i(\DA[P_i,P''_{-i},\succ'],\Gamma)=o_i(\mu,\Gamma)$. Moreover, the tie-breaking $\succ'$ preserves $i$'s observed rank at every school. Thus, $\mu$ is a safe deviation of $\DA$.

Suppose second that $\TTC[P,\succ]\neq\DA[P,\succ]$. Let $\eta=\TTC[P,\succ]$. We claim that $\eta$ is a safe deviation of $\DA$ under $[P,\succsim,\succ]$. Indeed, the Pareto efficiency of $\TTC$ ensures that $\eta$ is not Pareto dominated by $\DA[P,\succ]$. Moreover, for any student $i\in I$, \Cref{lm:DA-TTC} implies that there exists $P'_{-i}$ such that $\DA[P_i,P'_{-i},\succ](i)=\eta(i)$. Since $\Gamma(i)=\{i\}$, $o_i(\DA[P_i,P'_{-i},\succ],\Gamma)=o_i(\eta,\Gamma)$. The tie-breaking $\succ$ preserves $i$'s observed rank at every school. Thus, $\eta$ is a safe deviation of $\DA$. We conclude that $\TTC$ is more credible than $\DA$ under weak priorities.

We next compare $\EA$ and $\DA$. Fix a weak-priority problem $[P,\succsim,\succ]$ under which $\EA$ has a safe deviation $\mu$. We show that $\mu$ is also a safe deviation of $\DA$ under $[P,\succsim,\succ]$. 

First, we show that $\DA[P,\succ]\neq \mu$. Otherwise, since $\EA$ weakly Pareto dominates $\DA$ and $\EA[P,\succ]\neq \DA[P,\succ]$, $\EA[P,\succ]$ Pareto dominates $\DA[P,\succ]=\mu$, contradicting condition~(iii) of \Cref{def:safe-weak}. Thus, $\DA[P,\succ]\neq \mu$.

Next, we show that $\mu$ is not Pareto dominated by $\DA[P,\succ]$. Otherwise, since $\EA$ weakly Pareto dominates $\DA$, $\EA[P,\succ]$ Pareto dominates $\mu$, contradicting condition~(iii) of \Cref{def:safe-weak}. Thus, $\mu$ is not Pareto dominated by $\DA[P,\succ]$. 

Fix a student $i\in I$. By condition~(ii) of \Cref{def:safe-weak}, there exist a preference profile $P'\in \mathcal{P}$ and a tie-breaking $\succ'$ of $\succsim$ such that $o_i(\EA[P_i,P'_{-i},\succ'],\Gamma)=o_i(\mu,\Gamma)$ and $\succ'$ gives $i$ rank $r_{is}$ at every school $s\in S$. Since $\Gamma(i)=\{i\}$, this implies that $\EA[P_i,P'_{-i},\succ'](i)=\mu(i)$. Then, \Cref{lm:DA-TTC} implies that there exists $P''_{-i}$ such that $\DA[P_i,P''_{-i},\succ'](i)=\EA[P_i,P'_{-i},\succ'](i)=\mu(i)$. Therefore, $o_i(\DA[P_i,P''_{-i},\succ'],\Gamma)=o_i(\mu,\Gamma)$. The tie-breaking $\succ'$ preserves $i$'s observed rank at every school. Thus, $\mu$ is a safe deviation of $\DA$, and therefore $\EA$ is more credible than $\DA$ under weak priorities.

The strict relations come from the example in the proof of \Cref{thm:DA-TTC}.
\end{proof}

\subsubsection{Proof of Proposition~\ref{prop:weak-indifferent-DA-TTC}}

\begin{lm}\label{lm:weak-indifferent-DA-TTC}
Suppose that all schools are indifferent among all students under $\succsim$. Fix a preference profile $P$, a student $i\in I$, a tie-breaking $\succ$ of $\succsim$ under which $i$ has rank $r_{is}$ at any school $s\in S$, and a mechanism $\varphi\in\{\TTC,\EA\}$. Let $\mu=\varphi[P,\succ]$. Then:
\begin{itemize}
  \item[(i)] $\mu(i)\mathrel{R_i}\emptyset$, and $|\mu(s)|=q_s<r_{is}$ for any $s\mathrel{P_i}\mu(i)$.
  \item[(ii)] There exist $\widehat P_{-i}$ and a tie-breaking $\widehat\succ$ of $\succsim$ such that $\DA[P_i,\widehat P_{-i},\widehat\succ]=\mu$ and $i$ has rank $r_{is}$ at any school $s\in S$ under $\widehat\succ$.
\end{itemize}
\end{lm}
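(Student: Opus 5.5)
The plan is to prove part~(i) separately for each $\varphi\in\{\TTC,\EA\}$ and then get part~(ii) by a direct construction, in the spirit of \Cref{lm:DA-stable}. Since all schools are indifferent under $\succsim$, every strict order at every school is a tie-breaking of $\succsim$. This makes the construction in part~(ii) unconstrained except for fixing $i$'s ranks.

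For part~(i), $\mu(i)\mathrel{R_i}\emptyset$ holds because both $\TTC$ and $\EA$ are individually rational. Now fix $s\mathrel{P_i}\mu(i)$. Pareto efficiency of $\mu$ under $(P,\succ)$ gives $|\mu(s)|=q_s$; otherwise moving $i$ to the vacant seat at $s$ is a Pareto improvement. It remains to show $r_{is}>q_s$, i.e., that $i$ is not among the top $q_s$ students at $s$. Suppose instead that $r_{is}\le q_s$.
\begin{itemize}
\item For $\EA$: $\DA[P,\succ]$ is stable and $s$ is acceptable to $i$. So either $i$ gets $s$ or better under $\DA$, or $s$ is filled with $q_s$ students of higher priority than $i$. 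The latter is impossible since fewer than $q_s$ students rank above $i$. Hence $\DA[P,\succ](i)\mathrel{R_i}s$. Because $\EA$ weakly Pareto dominates $\DA$, $\mu(i)\mathrel{R_i}s$, a contradiction.
\item For $\TTC$: I would use the counter description of $\TTC$. School $s$ points to its highest-priority remaining student and loses one unit of capacity each time one of its seats is assigned. While $i$ is unassigned, $s$ can assign at most $r_{is}-1<q_s$ seats before it points to $i$, so $s$ still has capacity when it points to her. From that step on, $s$ keeps pointing to $i$ and cannot be removed. Hence $i$ leaves in a cycle at a school she weakly prefers to $s$, i.e., $\mu(i)\mathrel{R_i}s$, a contradiction.
\end{itemize}
I expect this $\TTC$ step to be the main obstacle. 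The care is in checking that $s$ stays available until $i$ is assigned, and that $i$, who points to her favorite remaining school, ends up with that school or a better one.

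For part~(ii), let $\widehat P_{-i}=P^\mu_{-i}$. Build $\widehat\succ$ school by school as follows.
\begin{itemize}
\item For each $s\mathrel{P_i}\mu(i)$, part~(i) gives $|\mu(s)|=q_s<r_{is}$. Place the students of $\mu(s)$, none of whom is $i$, in the top $q_s$ positions. Put $i$ at position $r_{is}$ and fill the remaining positions arbitrarily.
\item For every other school, order the students arbitrarily subject to $i$ having rank $r_{is}$.
\end{itemize}
Each such order is a tie-breaking of $\succsim$ and preserves $i$'s ranks.

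Finally, run $\DA[P_i,\widehat P_{-i},\widehat\succ]$. Every $j\neq i$ applies only to $\mu(j)$, or to nothing if $\mu(j)=\emptyset$. Student $i$ applies down her list. At each $s\mathrel{P_i}\mu(i)$, the $q_s$ students of $\mu(s)$ all apply and outrank her, so she is rejected. At $\mu(i)$, if it is a school, the other applicants are exactly $\mu(\mu(i))\setminus\{i\}$, which leaves a vacant seat, so she is accepted permanently. No applicant to any school exceeds its capacity, so every other student is held at $\mu(j)$. Thus the outcome is $\mu$, and if $\mu(i)=\emptyset$ then $i$ is rejected everywhere acceptable, as required.
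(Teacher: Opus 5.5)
Your proof is correct. Part~(ii) uses the same construction as the paper: $\widehat P_{-i}=P^\mu_{-i}$, with $\mu(s)$ placed above $i$ at each $s\mathrel{P_i}\mu(i)$. The only difference there is that you verify $\DA[P_i,P^\mu_{-i},\widehat\succ]=\mu$ by running the algorithm, whereas the paper checks that $\mu$ is stable under $(P_i,P^\mu_{-i})$ and $\widehat\succ$ and then invokes \Cref{lm:DA-stable}.

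The genuine difference is in part~(i). The paper treats $\TTC$ and $\EA$ together. By \Cref{lm:DA-TTC} (the $i$-indistinguishability of $\TTC$ and $\EA$ from $\DA$), some $P'_{-i}$ gives $\DA[P_i,P'_{-i},\succ](i)=\mu(i)$. Stability of that $\DA$ outcome then forces every school $i$ prefers to be filled by $q_s$ students ranked above her, so $q_s\le r_{is}-1$. You instead argue mechanism by mechanism. For $\EA$ you use stability of $\DA[P,\succ]$ itself plus weak Pareto dominance. For $\TTC$ you use the fact that $\TTC$ gives each student at least every school at which she is among the top $q_s$.

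Your route is self-contained and shows which feature of each mechanism drives the bound, at the cost of a separate algorithmic argument for $\TTC$. The paper's route is shorter and reuses a lemma it already needs for \Cref{thm:DA-TTC}. It also applies unchanged to any individually rational, Pareto-efficient mechanism that is $i$-indistinguishable from $\DA$.

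When you write up the $\TTC$ step, state explicitly that each cycle through $s$ removes the student $s$ is currently pointing to. As long as $s$ is not pointing to $i$, that student ranks above $i$. This bounds by $r_{is}-1$ the seats $s$ can spend while $i$ is present. It also covers the case in which $i$ is assigned before $s$ ever points to her.
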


\begin{proof}[Proof of \Cref{lm:weak-indifferent-DA-TTC}]
We first prove part~(i). 
Since $\TTC$ and $\EA$ are individually rational, $\mu(i)\mathrel{R_i}\emptyset$. 
Moreover, since $\TTC$ and $\EA$ are Pareto efficient, $|\mu(s)|=q_s$ for any $s\mathrel{P_i}\mu(i)$.
By \Cref{lm:DA-TTC}, there exists $P'_{-i}$ such that $\DA[P_i,P'_{-i},\succ](i)=\mu(i)$. Since $\DA[P_i,P'_{-i},\succ]$ is stable and $s\mathrel{P_i}\DA[P_i,P'_{-i},\succ](i)$, school $s$ assigns all its $q_s$ seats to students ranked above $i$ under $\succ_s$. Thus, $q_s\leq r_{is}-1$, and therefore $q_s<r_{is}$.

We next prove part~(ii). Construct a strict priority profile $\widehat\succ$ as follows. At any school $s\mathrel{P_i}\mu(i)$, place all students in $\mu(s)$ above $i$ while keeping $i$ in position $r_{is}$. This construction can be done since $q_s<r_{is}$. Complete the remaining priorities arbitrarily while preserving $i$'s rank. Since all schools are indifferent among all students under $\succsim$, $\widehat\succ$ is a tie-breaking of $\succsim$.

We claim that $\mu$ is stable under $(P_i,P^\mu_{-i})$ and $\widehat\succ$.
Indeed, by part~(i), $\mu$ is individually rational for $i$ and no blocking pair involves $i$. Moreover, for any other student $j \neq i$, the definition of $P^\mu$ ensures that $\mu$ is individually rational for $j$ and no blocking pair involves $j$.
Therefore, $\mu$ is stable under $(P_i,P^\mu_{-i})$ and $\widehat\succ$.
Then, \Cref{lm:DA-stable} gives $\DA[P_i,P^\mu_{-i},\widehat\succ]=\mu$. Thus, part~(ii) holds with $\widehat P_{-i}=P^\mu_{-i}$.
\end{proof}

\begin{proof}[Proof of \Cref{prop:weak-indifferent-DA-TTC}]
Fix $\varphi\in\{\TTC,\EA\}$. We first show that $\varphi$ is more credible than $\DA$. Fix a weak-priority problem $[P,\succsim,\succ]$ under which $\varphi$ has a safe deviation $\mu$. We consider two cases.

Suppose first that $\varphi[P,\succ]=\DA[P,\succ]$. Since $\mu$ is a safe deviation of $\varphi$, we have $\mu\neq\DA[P,\succ]$, and $\mu$ is not Pareto dominated by $\DA[P,\succ]$. Fix a student $i\in I$. By condition~(ii) of \Cref{def:safe-weak}, there exist a preference profile $P'\in \mathcal{P}$ and a tie-breaking $\succ'$ of $\succsim$ such that $o_i(\varphi[P_i,P'_{-i},\succ'],\Gamma)=o_i(\mu,\Gamma)$ and $\succ'$ gives $i$ rank $r_{is}$ at every school $s\in S$. By part~(ii) of \Cref{lm:weak-indifferent-DA-TTC}, there exist $\widehat P_{-i}$ and a tie-breaking $\widehat\succ$ such that $\DA[P_i,\widehat P_{-i},\widehat\succ]=\varphi[P_i,P'_{-i},\succ']$ and $\widehat\succ$ preserves $i$'s observed rank at every school. Thus, $o_i(\DA[P_i,\widehat P_{-i},\widehat\succ],\Gamma)=o_i(\mu,\Gamma)$. Therefore, $\mu$ is a safe deviation of $\DA$.

Suppose second that $\varphi[P,\succ]\neq\DA[P,\succ]$. Fix a student $i\in I$. 
By part~(ii) of \Cref{lm:weak-indifferent-DA-TTC}, there exist $\widehat P_{-i}$ and a tie-breaking $\widehat\succ$ such that $\DA[P_i,\widehat P_{-i},\widehat\succ]=\varphi[P,\succ]$ and $\widehat\succ$ preserves $i$'s observed rank at every school.
Thus, $o_i(\DA[P_i,\widehat P_{-i},\widehat\succ],\Gamma)=o_i(\varphi[P,\succ],\Gamma)$. Moreover, $\varphi[P,\succ]\neq\DA[P,\succ]$, and the Pareto efficiency of $\varphi$ implies that $\varphi[P,\succ]$ is not Pareto dominated by $\DA[P,\succ]$. Therefore, $\varphi[P,\succ]$ is a safe deviation of $\DA$.
We conclude that both $\TTC$ and $\EA$ are more credible than $\DA$.

To show that $\TTC$ and $\EA$ are strictly more credible than $\DA$, consider the following example. 
Consider three students $\{i_1,i_2,i_3\}$ and two schools $\{s_1,s_2\}$. Let $q_{s_1}=q_{s_2}=1$. Suppose that every school is indifferent among all students under $\succsim$. Consider the following preference profile and realized tie-breaking:
\begin{table}[H]
\centering
\begin{tabular}{lll|ll}
  $P_{i_1}$ & $P_{i_2}$ & $P_{i_3}$ & $\succ_{s_1}$ & $\succ_{s_2}$ \\ \hline
  $s_1$ & $s_1$ & $s_2$ & $i_3$ & $i_1$ \\
  $s_2$ & $s_2$ & $s_1$ & $i_1$ & $i_2$ \\
  $\emptyset$ & $\emptyset$ & $\emptyset$ & $i_2$ & $i_3$
\end{tabular}
\end{table}
Consider the matchings
\[
\mu=
\begin{pmatrix}
i_1 & i_2 & i_3\\
s_2 & \emptyset & s_1
\end{pmatrix},
\qquad
\eta=
\begin{pmatrix}
i_1 & i_2 & i_3\\
s_1 & \emptyset & s_2
\end{pmatrix}.
\]
Notice that $\DA[P, \succ] = \mu$ and $\TTC[P, \succ] = \EA[P, \succ] = \eta$.

First, we show that neither $\TTC$ nor $\EA$ has a safe deviation under this problem, regardless of the observation structure. Notice that it is sufficient to consider $\Gamma(i)=\{i\}$ for any $i\in I$. Fix $\varphi\in\{\TTC,\EA\}$, and suppose that $\nu$ is a safe deviation. If $\nu(i_1)=\emptyset$, by part~(i) of \Cref{lm:weak-indifferent-DA-TTC}, $1=q_{s_2}<r_{i_1s_2}=1$, a contradiction. Thus, $\nu(i_1)\in\{s_1,s_2\}$.
Similarly, if $\nu(i_3)=\emptyset$, by part~(i) of \Cref{lm:weak-indifferent-DA-TTC}, $1=q_{s_1}<r_{i_3s_1}=1$, a contradiction. Thus, $\nu(i_3)\in\{s_1,s_2\}$.
Feasibility therefore requires $\nu$ to be either $\mu$ or $\eta$. Then, condition~(i) of \Cref{def:safe-weak} rules out $\eta=\varphi[P,\succ]$, while condition~(iii) rules out $\mu$ since $\eta$ Pareto dominates $\mu$. Therefore, $\varphi$ has no safe deviation.

Second, we show that $\eta$ is a safe deviation of $\DA$. For any $i\in I$, by part~(ii) of \Cref{lm:weak-indifferent-DA-TTC}, there exist $\widehat P_{-i}$ and a tie-breaking $\widehat\succ$ of $\succsim$ such that $\DA[P_i,\widehat P_{-i},\widehat\succ]=\eta$ and $\widehat\succ$ preserves $i$'s observed rank at every school. Thus, $o_i(\DA[P_i,\widehat P_{-i},\widehat\succ], \Gamma) = o_i(\eta, \Gamma)$ under any observation structure $\Gamma$. Moreover, $\eta\neq\mu=\DA[P,\succ]$, and $\eta$ Pareto dominates $\mu$, so $\eta$ is not Pareto dominated by $\mu$. Therefore, $\eta$ is a safe deviation of $\DA$.
\end{proof}

\subsection{Proof of Section~\ref{sec:affirmative}}

\subsubsection{Proof of Proposition~\ref{prop:affirmative-full-observation}}

\begin{proof}[Proof of \Cref{prop:affirmative-full-observation}]
Suppose that $\Gamma(i) = I$ for any $i \in I$.
Fix a policy $\alpha\in\{r,q\}$, and suppose, toward a contradiction, that $\mu$ is a safe deviation of $\DA^\alpha$ under $P$. Fix a student $i\in I$. By condition~(ii) of \Cref{def:safe}, there exists a preference profile $P' \in \mathcal{P}$ such that $\DA^\alpha[P_i,P'_{-i}]=\mu$.

We first show that $\mu$ is stable under $P$ for policy $\alpha$. Since $\DA^\alpha$ is individually rational, $\mu(i)\mathrel{R_i}\emptyset$. Thus, $\mu$ is individually rational under $P$. Moreover, if $(i,s)$ forms a blocking pair for $\mu$ under $P$ and policy $\alpha$, then $(i,s)$ also blocks $\mu$ under $(P_i,P'_{-i})$, contradicting the policy-specific stability of $\DA^\alpha[P_i,P'_{-i}]$ \citep{hafalir2013effective}. Thus, $\mu$ is stable under $P$ for policy $\alpha$.

Since $\DA^\alpha$ selects the student-optimal matching among the matchings that are stable under policy $\alpha$ \citep{hafalir2013effective}, $\DA^\alpha[P]$ weakly Pareto dominates $\mu$. Since $\mu\neq\DA^\alpha[P]$, $\DA^\alpha[P]$ Pareto dominates $\mu$, contradicting condition~(iii) of \Cref{def:safe}. Therefore, $\DA^\alpha$ is credible.
\end{proof}

\subsubsection{Proof of Proposition~\ref{prop:affirmative-reserve-quota}}

\begin{lm}\label{lm:reserve-quota}
Suppose $r_s+q_s^M=q_s$ and $q_s^M>0$ for any $s\in S$. Fix a preference profile $P$ and a student $i\in I$. There exists $\widehat P_{-i}$ such that $\DA^q[P_i,\widehat P_{-i}](j)=\DA^r[P](j)$ for any $j\in I_m\cup\{i\}$.
\end{lm}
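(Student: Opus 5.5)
The plan is to build an explicit ``simple'' explanation out of the outcome $\mu=\DA^r[P]$. Every student other than $i$ reports only her assignment under $\mu$; majority students may be modified when needed to respect the quotas. Then we check that $\DA^q$ run on this profile together with $P_i$ reproduces the assignments of $i$ and of all minority students.

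\textbf{Construction.} Fix $\mu=\DA^r[P]$.
\begin{itemize}
\item Each minority student $j\neq i$ reports $P^\mu_j$.
\item Each majority student $j\neq i$ reports $P^\mu_j$, unless she sits at a school $s$ where $\mu$ assigns more than $q_s^M$ majority students. In that case she reports no school as acceptable.
\item At such a school we drop only the lowest-priority majority students beyond $q^M_s$ (excluding $i$ if $i$ is majority and assigned there), so that exactly $q^M_s$ majority students, or $q^M_s-1$ plus $i$, remain.
\end{itemize}
Call the resulting profile $\widehat P_{-i}$ and the trimmed matching $\mu'$. Minority students are untouched, $|\mu'(s)|\le q_s$, and $\mu'$ respects the majority quotas.

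\textbf{Step 1: $\mu'$ is stable for the majority-quota policy under $(P_i,\widehat P_{-i})$.} For $j\neq i$, the definition of $P^\mu_j$ (or reporting nothing) rules out any blocking pair involving $j$, and individual rationality holds. For $i$, we use that $\mu$ is stable under reserves for $P$, i.e.\ there is no $s$ with $s\mathrel{P_i}\mu(i)$ at which $i$ has justified envy in the reserve sense. Two cases arise for such an $s$.
\begin{itemize}
\item If $i$ is minority: a minority student blocks under reserves whenever $s$ has fewer than $r_s$ minorities or a lower-priority student occupies a seat she could take. Under quotas, minority students face only the total capacity $q_s$. Dropping majority students can open seats at $s$, so I need that at any dropped school $i$ does not prefer $s$ to $\mu(i)$. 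This is where I would argue via the reserve stability of $\mu$: if $s$ had more than $q^M_s$ majority students, then fewer than $r_s$ minorities are at $s$. A minority $i$ preferring $s$ would then block $\mu$ under reserves, so $s\mathrel{P_i}\mu(i)$ is impossible at overfull schools.
\item If $i$ is majority: under quotas she can block at $s$ only if $s$ is below $q^M_s$ majority students with a seat free, or a lower-priority majority student is there. Reserve stability already forbids these configurations relative to $\mu$. Trimming only removes lowest-priority majority students and keeps $q^M_s$ of them, so the majority count at $s$ stays at the cap and the remaining majorities still have higher priority than $i$.
\end{itemize}

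\textbf{Step 2: $\DA^q[P_i,\widehat P_{-i}]=\mu'$.} Apply an analogue of \Cref{lm:DA-stable} for the majority-quota stability notion. Hafalir et al.\ give a rural-hospitals-type property for $\DA^q$ (it is equivalent to DA in an expanded market with split schools). Since everyone except $i$ lists at most one school, the set of matched students and the school enrollments are pinned down, which forces $i$'s assignment as in \Cref{lm:DA-stable}. Restricting to $I_m\cup\{i\}$ then gives the claim.

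\textbf{Main obstacle.} The hardest part is Step~1 for $i$, because the stability notions under reserves and under quotas differ. The key fact to nail down is that a school with more than $q^M_s$ majority students under $\mu$ has fewer than $r_s$ minorities, so every minority student prefers her own assignment to it. I would also handle separately the case where $i$ herself is a majority student at such a school.
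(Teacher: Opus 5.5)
Your argument is correct, and it takes a different route from the paper's.

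\textbf{The paper's route.} Every minority student reports $P^\mu_j$, and so does every majority student at a school that $i$ strictly prefers to $\mu(i)$. All other majority students report nothing. The paper lets $\DA^q$ itself cut the schools $i$ prefers down to $q_s^M$ majority students where necessary. It then follows only $i$ and the minority students through a three-case analysis that relies on quota-stability of the $\DA^q$ outcome.

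\textbf{Your route.} You trim overfull schools by hand, so the target $\mu'$ is quota-feasible and $\widehat P_{-i}=P^{\mu'}_{-i}$. You check that $\mu'$ is quota-stable at $(P_i,P^{\mu'}_{-i})$, where only $i$ can block. You then recover the whole outcome, $\DA^q[P_i,\widehat P_{-i}]=\mu'$, by the argument of \Cref{lm:DA-stable}. This is more modular and parallels the paper's simple-explanation technique. The cost is a uniqueness step in the quota model, which the paper's direct analysis avoids.

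\textbf{Two points to tighten.}

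First, your list of quota-blocking configurations for a majority student $i$ is incomplete. The pair $(i,s)$ also blocks when $s$ is full, holds fewer than $q_s^M$ majority students, and holds a minority student with lower priority than $i$. This case is excluded by the same logic. Such an $s$ is untrimmed, and $|\mu(s)|=q_s=r_s+q_s^M$ with fewer than $q_s^M$ majority students gives $|\mu(s)\cap I_m|>r_s$. Reserve stability then rules out a minority student at $s$ with lower priority than $i$.

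Second, the split-schools remark does not by itself give a rural hospitals theorem for quota-stability. You can fix this in either of two ways.
\begin{itemize}
\item Note that the majority-quota choice rule is substitutable and satisfies the law of aggregate demand, so Hatfield and Milgrom's rural hospitals theorem applies.
\item Bypass it. Since $\DA^q$ is student-optimal among quota-stable matchings \citep{hafalir2013effective}, $\eta=\DA^q[P_i,\widehat P_{-i}]$ weakly Pareto dominates $\mu'$ under $(P_i,\widehat P_{-i})$. Hence each $j\neq i$ keeps $\mu'(j)$. If $\eta(i)=s\mathrel{P_i}\mu'(i)$, then $\eta(s)=\mu'(s)\cup\{i\}$ is quota-feasible, so $(i,s)$ would block $\mu'$, contradicting your Step~1.
\end{itemize}
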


\begin{proof}[Proof of \Cref{lm:reserve-quota}]
Let $\mu=\DA^r[P]$. Construct $\widehat P_{-i}$ as follows. For any student $j\neq i$, let $\widehat P_j=P^\mu_j$ if $j\in I_m$ or $\mu(j)\mathrel{P_i}\mu(i)$. Any other student reports no school as acceptable.
Let $\eta=\DA^q[P_i,\widehat P_{-i}]$.

We first show that $i$ is rejected by any school that she prefers to $\mu(i)$ and any minority student in $\mu(s)$ remains assigned to $s$ under $\eta$. 
Fix a school $s$ such that $s\mathrel{P_i}\mu(i)$, and let $m_s=|\mu(s)\cap I_m|$. Since $\mu$ is stable under minority reserves, $|\mu(s)|=q_s$.

We consider three cases.
Suppose first that $i\in I_m$. Stability under minority reserves implies that $m_s\geq r_s$ and that any student in $\mu(s)$ has higher priority than $i$. Moreover, $|\mu(s)\setminus I_m|=q_s-m_s\leq q_s-r_s=q_s^M$. Thus, $\mu(s)$ is feasible under majority quotas.
Under $(P_i, \widehat P_{-i})$, only $i$ and the students in $\mu(s)$ find $s$ acceptable, and any student in $\mu(s)$ has higher priority than $i$. If some $j\in\mu(s)$ is not assigned to $s$ under $\eta$, then $(j,s)$ blocks $\eta$ under majority quotas, contradicting the stability of $\DA^q$ under majority quotas. Therefore, $\eta(s)=\mu(s)$ and $\eta(i)\neq s$.

Suppose next that $i\notin I_m$ and $m_s>r_s$. Since $|\mu(s)\setminus I_m|=q_s-m_s<q_s-r_s=q_s^M$, $\mu(s)$ is feasible under majority quotas. Moreover, stability under minority reserves implies that any student in $\mu(s)$ has higher priority than $i$. By the definition of $\widehat P_{-i}$ and stability under majority quotas, we have $\eta(s)=\mu(s)$ and $\eta(i)\neq s$.

Suppose finally that $i\notin I_m$ and $m_s\leq r_s$. There are $|\mu(s)\setminus I_m|=q_s-m_s$ majority students in $\mu(s)$, all of whom have higher priority than $i$ by stability under minority reserves. Since $q_s-m_s \geq q_s-r_s=q_s^M$, stability under majority quotas implies that $s$ admits $q_s^M$ majority students under $\eta$. 
We first claim that $\eta(i) \neq s$. Otherwise, at least one majority student in $\mu(s)$ with higher priority than $i$ is unassigned, contradicting stability under majority quotas. 
We then claim that any minority student in $\mu(s)$ is assigned to $s$ under $\eta$. Suppose, toward a contradiction, that some $j\in\mu(s)\cap I_m$ is not assigned to $s$. Then $|\eta(s)|\leq q_s^M+m_s-1\leq q_s^M+r_s-1=q_s-1$, and thus $(j,s)$ blocks $\eta$, contradicting stability under majority quotas. Therefore, in all three cases, $i$ is rejected by $s$ and any minority student in $\mu(s)$ remains assigned to $s$.

We next show that $\eta(i)=\mu(i)$. If $\mu(i)=\emptyset$, the preceding argument and individual rationality give $\eta(i)=\emptyset$. Suppose instead that $\mu(i)\in S$. By the definition of $\widehat P_{-i}$, the only students other than $i$ who find $\mu(i)$ acceptable are the minority students assigned there under $\mu$. There are at most $q_{\mu(i)}-1$ such students. If $i\in I_m$, student $i$ can therefore fill an available seat at $\mu(i)$. If $i\notin I_m$, student $i$ can fill an available majority position since $q^M_{\mu(i)}>0$. Since the preceding argument rules out any school that $i$ prefers to $\mu(i)$, stability under majority quotas implies that $\eta(i)=\mu(i)$.

Finally, we show that $\eta(j)=\mu(j)$ for any $j\in I_m\setminus\{i\}$. Fix a student $j\in I_m\setminus\{i\}$. If $\mu(j)=\emptyset$, then $\widehat P_j$ lists no school as acceptable, so individual rationality gives $\eta(j)=\emptyset$. 
If $\mu(j)=s\in S$ and $s\mathrel{P_i}\mu(i)$, the preceding argument gives $\eta(j)=s$. 
If $\mu(j)=s\in S$ and $\mu(i)=s$, we claim that $\eta(j)=s$. Otherwise, since at most $q_s-1$ students other than $j$ find $s$ acceptable under $(P_i,\widehat P_{-i})$, $(j,s)$ blocks $\eta$, contradicting stability under majority quotas. 
If $\mu(j)=s\in S$ and $\mu(i)\mathrel{P_i}s$, we claim that $\eta(j)=s$. Otherwise, since at most $q_s-1$ students other than $i$ and $j$ find $s$ acceptable under $(P_i,\widehat P_{-i})$ and $\eta(i)=\mu(i) \neq s$, $(j,s)$ blocks $\eta$, contradicting stability under majority quotas. 
Therefore, $\eta(j)=\mu(j)$ for any $j\in I_m\setminus\{i\}$. Together with $\eta(i)=\mu(i)$, we conclude that $\DA^q[P_i,\widehat P_{-i}](j)=\DA^r[P](j)$ for any $j\in I_m\cup\{i\}$.
\end{proof}

\begin{proof}[Proof of \Cref{prop:affirmative-reserve-quota}]
Suppose that $\mu$ is a safe deviation of $\DA^r$ under $[P,\succ,q,I_m,r,q^M]$. Fix a student $i\in I$. By condition~(ii) of \Cref{def:safe}, there exists $P'\in \mathcal{P}$ such that $o_i(\DA^r[P_i,P'_{-i}],\Gamma)=o_i(\mu,\Gamma)$. Then, by \Cref{lm:reserve-quota}, there exists $\widehat P_{-i}$ such that $\DA^q[P_i,\widehat P_{-i}](j)=\DA^r[P_i,P'_{-i}](j)$ for any $j\in I_m\cup\{i\}$. Since $\Gamma(i)\subseteq I_m\cup\{i\}$, $o_i(\DA^q[P_i,\widehat P_{-i}],\Gamma)=o_i(\DA^r[P_i,P'_{-i}],\Gamma)=o_i(\mu,\Gamma)$.

It remains to verify conditions~(i) and~(iii) of \Cref{def:safe}. We first claim that $\mu\neq\DA^q[P]$. Otherwise, since $\DA^r[P]$ weakly Pareto dominates $\DA^q[P]$ \citep{hafalir2013effective} and $\mu\neq\DA^r[P]$, $\DA^r[P]$ Pareto dominates $\mu$, contradicting that $\mu$ is a safe deviation of $\DA^r$ under $P$.

We next claim that $\mu$ is not Pareto dominated by $\DA^q[P]$. Otherwise, since $\DA^r[P]$ weakly Pareto dominates $\DA^q[P]$, $\DA^r[P]$ Pareto dominates $\mu$, contradicting that $\mu$ is a safe deviation of $\DA^r$ under $P$.
Therefore, $\mu$ is a safe deviation of $\DA^q$ under $P$, and $\DA^r$ is more credible than $\DA^q$.

To show that $\DA^r$ is strictly more credible than $\DA^q$, consider the following example.
Consider three students $\{i_1,i_2,i_3\}$ and two schools $\{s_1,s_2\}$. Let $q_{s_1}=2$, $r_{s_1}=q_{s_1}^M=1$, $q_{s_2}=q_{s_2}^M=1$, and $r_{s_2}=0$. Let $I_m=\{i_1\}$. Consider the following preference profile and priority profile:
\begin{table}[H]
\centering
\begin{tabular}{lll|l}
  $P_{i_1}$ & $P_{i_2}$ & $P_{i_3}$ & $\succ_{s_1}$ \\ \hline
  $s_2$ & $s_1$ & $s_1$ & $i_1$ \\
  $\emptyset$ & $\emptyset$ & $\emptyset$ & $i_2$ \\
  &  &  & $i_3$
\end{tabular}
\end{table}
The priority of $s_2$ can be arbitrary. Consider the matchings
\[
\mu=
\begin{pmatrix}
i_1 & i_2 & i_3\\
s_2 & s_1 & s_1
\end{pmatrix},
\qquad
\eta=
\begin{pmatrix}
i_1 & i_2 & i_3\\
s_2 & s_1 & \emptyset
\end{pmatrix}.
\]
Notice that $\DA^r[P]=\mu$ and $\DA^q[P]=\eta$.

First, we show that $\DA^r$ has no safe deviation under any observation structure. Indeed, every student receives her most-preferred outcome under $\mu$, so $\mu$ Pareto dominates any different matching.

Second, we show that $\mu$ is a safe deviation of $\DA^q$ under any observation structure satisfying $\Gamma(i)\subseteq I_m\cup\{i\}$ for any $i\in I$. By \Cref{lm:reserve-quota}, for any $i\in I$ there exists $\widehat P_{-i}$ such that $\DA^q[P_i,\widehat P_{-i}]$ agrees with $\mu$ on $I_m\cup\{i\}$. Since $\Gamma(i)\subseteq I_m\cup\{i\}$, we have $o_i(\DA^q[P_i,\widehat P_{-i}],\Gamma)=o_i(\mu,\Gamma)$. Moreover, $\mu\neq\eta$, and $\eta$ does not Pareto dominate $\mu$. Therefore, $\mu$ is a safe deviation of $\DA^q$.
\end{proof}

\section{Omitted examples}\label[appendix]{sec:example}

\begin{example}\label{exp:school-proposing}
In this example, we show that the student-proposing deferred acceptance mechanism ($\DA$) is strictly more credible than the school-proposing deferred acceptance mechanism ($\DAS$) under any observation structure.

Consider three students $\{i_1,i_2,i_3\}$ and two schools $\{s_1,s_2\}$. Let $q_{s_1}=q_{s_2}=1$. Consider the following preference profile and priority profile:
\begin{table}[H]
\centering
\begin{tabular}{lll|ll}
  $P_{i_1}$ & $P_{i_2}$ & $P_{i_3}$ & $\succ_{s_1}$ & $\succ_{s_2}$ \\ \hline
  $s_2$ & $s_1$ & $\emptyset$ & $i_1$ & $i_2$ \\
  $s_1$ & $s_2$ &  & $i_2$ & $i_1$ \\
  $\emptyset$ & $\emptyset$ &  & $i_3$ & $i_3$
\end{tabular}
\end{table}
Consider the matchings
\[
\mu=\begin{pmatrix}
i_1 & i_2 & i_3 \\
s_2 & s_1 & \emptyset
\end{pmatrix},
\qquad
\eta=\begin{pmatrix}
i_1 & i_2 & i_3 \\
s_1 & s_2 & \emptyset
\end{pmatrix}.
\]
Notice that $\DA[P]=\mu$ and $\DAS[P]=\eta$.

First, we show that $\DA$ has no safe deviation under $P$ for any observation structure. Indeed, every student receives her most-preferred outcome under $\mu$, so $\mu$ Pareto dominates any different matching.

Second, we show that $\mu$ is a safe deviation of $\DAS$ under $P$ for any observation structure. Since $\mu$ is stable and $\DAS$ is a stable mechanism, \Cref{lm:DA-stable} gives $\DAS[P_i,P^\mu_{-i}]=\mu$ for any $i\in I$. Thus, $o_i(\DAS[P_i,P^\mu_{-i}],\Gamma)=o_i(\mu,\Gamma)$. Moreover, $\mu\neq\DAS[P]=\eta$, and $\eta$ does not Pareto dominate $\mu$. Therefore, $\mu$ is a safe deviation of $\DAS$ under $P$.
\end{example}

\begin{example}\label{exp:BM-TTC}
In this example, we show that $\BM$ is incomparable with either $\TTC$ or $\EA$ when students observe part, but not all, of the assignment. 

In both problems below, let $\Gamma(i_1)=\Gamma(i_2)=\{i_1,i_2\}$ and $\Gamma(i_3)=\{i_3\}$.

First, to show that $\BM$ is not more credible than either $\TTC$ or $\EA$, consider the example in the proof of \Cref{thm:DA-BM}. Notice that $\TTC[P]=\EA[P]=\mu$. Also, as shown there, $\DA[P]=\mu$, $\BM[P]=\eta$, and $\mu$ is a safe deviation of $\BM$ under any observation structure.

We show that neither $\TTC$ nor $\EA$ has a safe deviation under $P$. Fix $\varphi\in\{\TTC,\EA\}$, and suppose, toward a contradiction, that $\nu$ is a safe deviation of $\varphi$. For any $i\in I$, since $i \in \Gamma(i)$, condition~(ii) of \Cref{def:safe} gives $\widehat P \in \mathcal{P}$ such that $\varphi[P_i, \widehat P_{-i}](i) = \nu(i)$. Then, by \Cref{lm:DA-TTC}, there exists $P'_{-i}$ such that $\DA[P_i, P'_{-i}](i)=\nu(i)$.
For any $P'_{-i_1}$, since $i_1$ ranks $s_1$ first and has the highest priority there, stability implies that $\nu(i_1) = \DA[P_{i_1},P'_{-i_1}](i_1)=s_1$. Moreover, since $i_2$ is ranked second by both schools, stability implies that $\nu(i_2) = \DA[P_{i_2},P'_{-i_2}](i_2)\in\{s_1,s_2\}$.
Then, feasibility gives $\nu(i_2)=s_2$ and $\nu(i_3)=\emptyset$. This implies that $\nu=\mu=\varphi[P]$, a contradiction. Therefore, neither $\TTC$ nor $\EA$ has a safe deviation under $P$.

Second, to show that neither $\TTC$ nor $\EA$ is more credible than $\BM$, consider three students $\{i_1,i_2,i_3\}$ and two schools $\{s_1,s_2\}$. Let $q_{s_1}=q_{s_2}=1$. Consider the following preference profile and priority profile:
\begin{table}[H]
\centering
\begin{tabular}{lll|ll}
  $P_{i_1}$ & $P_{i_2}$ & $P_{i_3}$ & $\succ_{s_1}$ & $\succ_{s_2}$ \\ \hline
  $s_2$ & $s_1$ & $s_1$ & $i_1$ & $i_3$ \\
  $s_1$ & $s_2$ & $\emptyset$ & $i_2$ & $i_1$ \\
  $\emptyset$ & $\emptyset$ &  & $i_3$ & $i_2$
\end{tabular}
\end{table}
Consider the matchings
\[
\mu=
\begin{pmatrix}
i_1 & i_2 & i_3\\
s_2 & \emptyset & s_1
\end{pmatrix},
\qquad
\eta=
\begin{pmatrix}
i_1 & i_2 & i_3\\
s_2 & s_1 & \emptyset
\end{pmatrix}.
\]
Notice that $\TTC[P]=\mu$ and $\BM[P]=\EA[P]=\eta$.

First, we show that $\eta$ is a safe deviation of $\TTC$ under $P$. Let $P'_{i_3}$ report no school as acceptable. Then $\TTC[P_{i_1},P_{i_2},P'_{i_3}]=\eta$, so this profile explains the observations of $i_1$ and $i_2$. For $i_3$, let $P'_{i_1}$ rank $s_1$ first and $s_2$ second. Then $\TTC[P'_{i_1},P_{i_2},P_{i_3}](i_3)=\emptyset=\eta(i_3)$, so this profile explains $i_3$'s observation. Moreover, $\eta\neq\TTC[P]=\mu$, and $\mu$ does not Pareto dominate $\eta$. Thus, $\eta$ is a safe deviation of $\TTC$ under $P$.

Second, we show that $\mu$ is a safe deviation of $\EA$ under $P$. Let $P'_{i_3}$ rank $s_1$, $s_2$, and $\emptyset$ in that order. Then $\EA[P_{i_1},P_{i_2},P'_{i_3}]=\mu$, so this profile explains the observations of $i_1$ and $i_2$. For $i_3$, let $i_1$ report $P_{i_1}$ and let $i_2$ report $s_2$ as her only acceptable school. Then $\EA[P_{i_1},P'_{i_2},P_{i_3}]=\mu$, so this profile explains $i_3$'s observation. Moreover, $\mu\neq\EA[P]=\eta$, and $\eta$ does not Pareto dominate $\mu$. Therefore, $\mu$ is a safe deviation of $\EA$ under $P$.

Finally, we show that $\BM$ has no safe deviation under $P$. Suppose, toward a contradiction, that $\nu$ is a safe deviation of $\BM$. Since $\eta$ assigns $i_1$ and $i_2$ to their most-preferred schools and $i_3$ prefers $\emptyset$ to $s_2$, $\eta$ Pareto dominates any matching different from $\eta$ unless $i_3$ is assigned to $s_1$. Thus, condition~(iii) of \Cref{def:safe} requires $\nu(i_3)=s_1$.

Since $q_{s_1}=q_{s_2}=1$, $(\nu(i_1),\nu(i_2))\in\{(s_2,\emptyset),(\emptyset,s_2),(\emptyset,\emptyset)\}$. We consider three cases. Suppose first that $(\nu(i_1),\nu(i_2))=(s_2,\emptyset)$. Under any profile $(P_{i_2},P'_{-i_2})$, student $i_2$ applies to $s_1$ in the first round. Since only $i_1$ has higher priority than $i_2$ at $s_1$, leaving $i_2$ unmatched requires $i_1$ to be accepted by $s_1$. Student $i_1$ therefore cannot be assigned to $s_2$, so $i_2$ cannot explain her observation.

Suppose next that $(\nu(i_1),\nu(i_2))=(\emptyset,s_2)$. Under any profile $(P_{i_1},P'_{-i_1})$, student $i_1$ applies to $s_2$ in the first round. Since only $i_3$ has higher priority than $i_1$ at $s_2$, leaving $i_1$ unmatched requires $i_3$ to be accepted by $s_2$. Student $i_2$ therefore cannot be assigned to $s_2$, so $i_1$ cannot explain her observation.

Finally, suppose that $(\nu(i_1),\nu(i_2))=(\emptyset,\emptyset)$. As in the first case, leaving $i_2$ unmatched requires $i_1$ to be accepted by $s_1$, contradicting $\nu(i_1)=\emptyset$. In every case, at least one student cannot explain her observation. Therefore, $\BM$ has no safe deviation under $P$.

Thus, under the second profile, both $\TTC$ and $\EA$ have safe deviations while $\BM$ has none. Together with the first profile, we conclude that $\BM$ is incomparable with either $\TTC$ or $\EA$ when students observe only part of the assignment.
\end{example}

\begin{example}\label{exp:DA-TTC-partial}
In this example, we show that $\DA$ is incomparable with either $\TTC$ or $\EA$ when students observe part, but not all, of the assignment.

In both problems below, let $\Gamma(i_1)=\Gamma(i_2)=\{i_1,i_2\}$ and $\Gamma(i_3)=\{i_3\}$.

First, to show that $\DA$ is not more credible than either $\TTC$ or $\EA$, consider the example used to prove part~(ii) of \Cref{thm:DA-TTC}. Recall that $\DA[P]=\mu$, $\TTC[P]=\EA[P]=\eta$, and neither $\TTC$ nor $\EA$ has a safe deviation when students observe only their own assignment.

We show that $\eta$ is a safe deviation of $\DA$ under the observation structure considered here. For any $i\in\{i_1,i_2\}$, $\DA[P_i,P^\eta_{-i}]=\eta$, so this profile explains the observations of $i_1$ and $i_2$. For $i_3$, let $i_1$ report no school as acceptable and let $i_2$ report only $s_2$ as acceptable. Then $\DA[P_{i_3},P'_{-i_3}](i_3)=\eta(i_3)$, so this profile explains $i_3$'s observation. Moreover, $\eta\neq\DA[P]=\mu$, and $\mu$ does not Pareto dominate $\eta$. Therefore, $\eta$ is a safe deviation of $\DA$.

We next show that neither $\TTC$ nor $\EA$ has a safe deviation under $P$. Fix $\varphi\in\{\TTC,\EA\}$, and suppose, toward a contradiction, that $\nu$ is a safe deviation of $\varphi$ under the observation structure considered here. Since $i\in\Gamma(i)$ for any $i\in I$, $\nu$ is also a safe deviation of $\varphi$ when students observe only their own assignment, contradicting the example in the proof of \Cref{thm:DA-TTC}. Therefore, neither $\TTC$ nor $\EA$ has a safe deviation under $P$.

Second, to show that neither $\TTC$ nor $\EA$ is more credible than $\DA$, consider the second profile in \Cref{exp:BM-TTC}. As shown there, both $\TTC$ and $\EA$ have safe deviations under $P$, while $\BM$ has none. Since \Cref{thm:DA-BM} shows that $\DA$ is more credible than $\BM$ under any observation structure, $\DA$ has no safe deviation under $P$.

Thus, under the first profile, $\DA$ has a safe deviation while neither $\TTC$ nor $\EA$ has one. Under the second profile, both $\TTC$ and $\EA$ have safe deviations while $\DA$ has none. Therefore, $\DA$ is incomparable with either $\TTC$ or $\EA$ when students observe only part of the assignment.
\end{example}

\bibliographystyle{apalike}
\bibliography{draft_ref}

\end{document}